\newcommand{\N}{\mathbb{N}}
\newcommand{\R}{\mathbb{R}}
\newcommand{\Z}{\mathbb{Z}}
\newcommand{\Pc}{\mathcal{P}}
\newcommand{\Gc}{\mathcal{G}}
\newcommand{\Ac}{\mathcal{A}}
\newcommand{\A}{\mathbf{A}}
\newcommand{\C}{\mathbf{C}}
\newcommand{\E}{\mathbb{E}}
\newcommand{\whp}{\emph{w.h.p.}}
\newcommand{\wep}{\emph{w.e.p.}}
\newcommand{\Prob}{\mathbb{P}}
\newcommand\eps{\varepsilon}
\newcommand{\bigo}{\mathcal{O}}
\newcommand{\vol}{\textrm{vol}}
\newcommand{\ABCD}{\textbf{ABCD}}
\theoremstyle{plain}
\newtheorem{definition}{Definition}[section]
\newtheorem{lemma}[definition]{Lemma}
\newtheorem{theorem}[definition]{Theorem}
\newtheorem{corollary}[definition]{Corollary}
\title{Modularity of the ABCD Random Graph Model \\ with Community Structure}
\author{
Bogumi\l{} Kami\'nski\thanks{Decision Analysis and Support Unit, SGH Warsaw School of Economics, Warsaw, Poland; e-mail: \texttt{bogumil.kaminski@sgh.waw.pl}}
\and
Bartosz Pankratz\thanks{Department of Mathematics, Ryerson University, Toronto, ON, Canada; e-mail: \texttt{bartosz.pankratz@ryerson.ca}}
\and
Pawe\l{}~Pra\l{}at\thanks{Department of Mathematics, Ryerson University, Toronto, ON, Canada; e-mail: \texttt{pralat@ryerson.ca}}
\and
Fran\c{c}ois Th\'eberge\thanks{Tutte Institute for Mathematics and Computing, Ottawa, ON, Canada; email: \texttt{theberge@ieee.org}}
}
\begin{document}

\maketitle

\begin{abstract}
The \textbf{A}rtificial \textbf{B}enchmark for \textbf{C}ommunity \textbf{D}etection (\ABCD) graph is a random graph model with community structure and power-law distribution for both degrees and community sizes. The model generates graphs with similar properties as the well-known \textbf{LFR} one, and its main parameter $\xi$ can be tuned to mimic its counterpart in the \textbf{LFR} model, the mixing parameter $\mu$. 

In this paper, we investigate various theoretical asymptotic properties of the \ABCD\ model. In particular, we analyze the modularity function, arguably, the most important graph property of networks in the context of community detection. Indeed, the modularity function is often used to measure the presence of community structure in networks. It is also used as a quality function in many community detection algorithms, including the widely used \emph{Louvain} algorithm.
\end{abstract}

%%%%%%%%%%%%%%%%%%%%%%%%%%%%%%%%%%%%%%%%%%%%%%%%%%%%%%%%%%%
\section{Introduction}
%%%%%%%%%%%%%%%%%%%%%%%%%%%%%%%%%%%%%%%%%%%%%%%%%%%%%%%%%%%

One of the most important features of real-world networks is their community structure, as it reveals the internal organization of nodes~\cite{fortunato2010community}. In social networks communities may represent groups by interest, in citation networks they correspond to related papers, in the Web communities are formed by pages on related topics, etc. Being able to identify communities in a network could help us to exploit this network more effectively. 

\medskip

Unfortunately, there are very few datasets with ground-truth identified and labelled. As a result, there is need for synthetic random graph models with community structure that resemble real-world networks in order to benchmark and tune clustering algorithms that are unsupervised by nature. The \textbf{LFR} (Lancichinetti, Fortunato, Radicchi) model~\cite{lancichinetti2008benchmark,lancichinetti2009benchmarks} generates networks with communities and at the same time it allows for the heterogeneity in the distributions of both node degrees and of community sizes. It became a standard and extensively used method for generating artificial networks. 

In this paper, we analyze the Artificial Benchmark for Community Detection (\textbf{ABCD} graph)~\cite{kaminski2021artificial} that was recently introduced and implemented\footnote{\url{https://github.com/bkamins/ABCDGraphGenerator.jl/}}, including a fast implementation that uses multiple threads (\textbf{ABCDe})\footnote{\url{https://github.com/tolcz/ABCDeGraphGenerator.jl/}}. Undirected variant of \textbf{LFR} and \textbf{ABCD} produce graphs with comparable properties but \textbf{ABCD}/\textbf{ABCDe} is faster than \textbf{LFR} and can be easily tuned to allow the user to make a smooth transition between the two extremes: pure (disjoint) communities and random graph with no community structure. More importantly from the perspective of this paper, it is easier to analyze theoretically. See Subsection~\ref{sec:def_ABCD} for a definition of the model. 

\medskip

The key ingredient for many clustering algorithms is \emph{modularity}, which is at the same time a global criterion to define communities, a quality function of community detection algorithms, and a way to measure the presence of community structure in a network. The definition of modularity for graphs was first introduced by Newman and Girvan in~\cite{newman2004finding}. We present the definition in Subsection~\ref{sec:def_modularity}.

Despite some known issues with this function such as the ``resolution limit'' reported in~\cite{fortunato2007resolution}, many popular algorithms for partitioning nodes of large graphs use it~\cite{clauset2004finding,newman2004fast,lancichinetti2011limits} and perform very well. The list includes one of the mostly used unsupervised algorithms for detecting communities in graphs, the \emph{Louvain} (hierarchical) algorithm~\cite{blondel2008fast}. For more details we direct the reader to any book on complex networks, including the following recent additions~\cite{kaminski2021mining,lambiotte2021modularity}. 

%%%%%%%%%%%%%%%%%%%%%%%%%%%%%%%%%%%%%%%%%%%%%%%%%%%%%%%%%%%
\subsection{Summary of Results} 

In this paper, we investigate the modularity function for the \ABCD\ model $\Ac$. The paper is structured as follows. The \ABCD\ model is introduced in Subsection~\ref{sec:def_ABCD} and the modularity function is defined in Subsection~\ref{sec:def_modularity}. Results for other random graph model in the context of the modularity function are summarized in Section~\ref{sec:related_results}. Some useful general observations are made in Section~\ref{sec:preliminaries}: concentration results that are used in almost all of our proofs are outlined in Subsection~\ref{sec:Chernoff}, useful expansion properties of \textbf{random $d$-regular graphs} are presented in Subsection~\ref{sec:expanders}.

We start analyzing the \ABCD\ model by investigating some basic properties---see Section~\ref{sec:basic_properties}. These properties will be needed to establish results for the modularity function but they are important on their own. Lemma~\ref{lem:degree_distribution} shows that the degree distribution is well concentrated around the corresponding expectations. Corollary~\ref{cor:community_sizes} shows a concentration for the number of communities and well as the distribution of their sizes. The same generating process is applied in \textbf{LFR} so the two results hold for that model as well. The \ABCD\ model assigns nodes to communities randomly. Clearly, there is no hope to predict the volumes of small communities of constant size but sufficiently large communities have their volumes as well as the number of internal edges well concentrated around the corresponding expectations---see Lemma~\ref{lem:volume_of_communities}.

Then we move to the results for the modularity function. By design of the \ABCD\ model, $1-\xi$ fraction of edges should become community edges and so should end up in some part of the ground truth partition $\C$. ($\xi$ is the main parameter of the model responsible for the level of noice.) It is indeed the case but it turns out that a negligible fraction of the background graph join them there. As a result, the modularity function of the ground-truth partition $\C$ is asymptotic to $1-\xi$, as proved in Theorem~\ref{thm:ground-truth}. 

Analyzing the maximum modularity is much more complex. We have two types of results. The first result (Theorem~\ref{thm:large_level_of_noise}) shows that when the level of noise is sufficiently large ($\xi$ close to one), then the maximum modularity $q^*(\Ac)$ is asymptotically larger than $q(\C)$, the modularity of the ground-truth. In this regime, the number of edges within community graphs $G_i$ is relatively small so a partition of the background graph into small connected pieces yields a better modularity function. To show this result, we need to investigate the degree distribution of the background graph (Lemma~\ref{lem:degree_distribution_of_G0}) which might be of independent interest. 

The second set of results is concerned with graphs with low level of noise ($\xi$ close to zero). For these graphs, the situation is quite opposite. It turns out that the ground truth partition is asymptotically the best possible, that is, the maximum modularity $q^*(\Ac)$ is only $o(1)$ away from $q(\C)$, the modularity of the ground truth partition $\C$; both of them are asymptotic to $1-\xi$ (see Theorem~\ref{thm:modularity_small_xi}). For some technical reason, it is assumed that $\delta$, the minimum degree of $\Ac$, is sufficiently large: the lower bound of $100$ easily works but it may be improved with more detailed treatment. Having said that, it seems that one needs a different approach to uncover the real bottleneck. On the other hand, the above property is not true if $\delta=1$ (see Theorem~\ref{thm:modularity_delta1}): if $\delta=1$, then $q^*(\Ac)$ is substantially larger than $q(\C)$, regardless of how close to zero $\xi$ is.

Finally, let us mention that the approach used to prove Theorem~\ref{thm:modularity_small_xi} utilize a coupling of random graph $\Pc(\textbf{w})$ on $n'$ nodes and a given degree sequence with a random $b$-regular graph $\mathcal{P}_{n'',b}$ on $n''$ nodes. We use this coupling to show that good expansion properties of $\mathcal{P}_{n'',b}$ imply good expansion properties for $\Pc(\textbf{w})$. This coupling seems powerful and might be potentially useful for some other applications. 

%%%%%%%%%%%%%%%%%%%%%%%%%%%%%%%%%%%%%%%%%%%%%%%%%%%%%%%%%%%
\subsection{Simulations}

This paper focuses on asymptotic theoretical results of the \ABCD\ model. Having said that, we performed a number of simulations and compared asymptotic predictions with graphs generated by computer. These simulations show that the behaviour of small random instances is similar to what is predicted by the theory. This is a good news for practitioners as it shows that, despite the fact that the generative algorithm is randomized, the model has good stability. The code is accessible on GitHub repository\footnote{\url{https://github.com/tolcz/ABCDeGraphGenerator.jl/}}.

%%%%%%%%%%%%%%%%%%%%%%%%%%%%%%%%%%%%%%%%%%%%%%%%%%%%%%%%%%%
\subsection{Open Problems}

Theoretical results and simulations suggest that if $\delta$, the minimum degree of $\Ac$, satisfies $\delta \ge \delta_0$ for some $\delta_0 \ge 2$, then there exists a constant $\xi_0 = \xi_0(\delta)$ (that possibly depends also on other parameters of the \ABCD\ model $\Ac$) such that the following holds \whp\ (that is, with probability tending to one as $n \to \infty$): 
\begin{itemize}
\item if $0 < \xi < \xi_0$, then $q^*(\Ac) \sim q(\C)$, where $\C$ is the ground truth partition of the set of nodes of $\Ac$,
\item if $\xi > \xi_0$, then $q^*(\Ac)$ is separated by a constant from $q(\C)$.
\end{itemize}
Our results make the first step towards this conjecture by showing upper and lower bounds for such threshold constant $\xi_0$, when $\delta_0=100$. The bounds for $\xi_0$ are not close to each other. The next step would be to narrow the gap down or perhaps to determine the threshold value exactly, provided that $\delta_0$ is sufficiently large. Another natural direction would be to decrease the lower bound for $\delta$, that is, to decrease the value of $\delta_0$. We showed that $\delta=1$ does not have the desired property but maybe $\delta_0=2$? Or maybe one can always construct a better partition than $\C$ when $\delta = 2$, regardless how small parameter $\xi$ is? These questions are left as open questions for future investigation.

%%%%%%%%%%%%%%%%%%%%%%%%%%%%%%%%%%%%%%%%%%%%%%%%%%%%%%%%%%%
\section{Definitions (of ABCD Model and Modularity)}
%%%%%%%%%%%%%%%%%%%%%%%%%%%%%%%%%%%%%%%%%%%%%%%%%%%%%%%%%%%

%%%%%%%%%%%%%%%%%%%%%%%%%%%%%%%%%%%%%%%%%%%%%%%%%%%%%%%%%%%
\subsection{Asymptotic Notation}

Our results are asymptotic in nature, that is, we will assume that the number of nodes $n\to\infty$. Formally, we consider a sequence of graphs $G_n=(V_n,E_n)$ and we are interested in events that hold \emph{with high probability} (\whp), that is, events that hold with probability tending to 1 as $n\to \infty$. It would be also convenient to consider events that hold \emph{with extreme probability} (\wep), that is, events that hold with probability at least $1-\exp(-\Omega( (\log n)^2 ))$. An easy but convenient property is that if a polynomial number of events hold \wep, then \wep\ all of them hold simultaneously.

Given two functions $f=f(n)$ and $g=g(n)$, we will write $f(n)=\bigo(g(n))$ if there exists an absolute constant $c \in \R_+$ such that $|f(n)| \leq c|g(n)|$ for all $n$, $f(n)=\Omega(g(n))$ if $g(n)=\bigo(f(n))$, $f(n)=\Theta(g(n))$ if $f(n)=\bigo(g(n))$ and $f(n)=\Omega(g(n))$, and we write $f(n)=o(g(n))$ or $f(n) \ll g(n)$ if $\lim_{n\to\infty} f(n)/g(n)=0$. In addition, we write $f(n) \gg g(n)$ if $g(n)=o(f(n))$ and we write $f(n) \sim g(n)$ if $f(n)=(1+o(1))g(n)$, that is, $\lim_{n\to\infty} f(n)/g(n)=1$.

%%%%%%%%%%%%%%%%%%%%%%%%%%%%%%%%%%%%%%%%%%%%%%%%%%%%%%%%%%%
\subsection{Other Notation}

We will use $\log n$ to denote a natural logarithm of $n$. For a given $n \in \N := \{1, 2, \ldots \}$, we will use $[n]$ to denote the set consisting of the first $n$ natural numbers, that is, $[n] := \{1, 2, \ldots, n\}$. Finally, as typical in the field of random graphs, for expressions that clearly have to be an integer, we round up or down but do not specify which: the choice of which does not affect the argument.

%%%%%%%%%%%%%%%%%%%%%%%%%%%%%%%%%%%%%%%%%%%%%%%%%%%%%%%%%%%
\subsection{ABCD Model}\label{sec:def_ABCD}

\begin{table}[htp]
\caption{Parameters of the \ABCD\ model}
\begin{center}
\begin{tabular}{|l|l|l|}
\hline
parameter & range & description \\
\hline
$n$ & $\N$ & number of nodes \\
$\gamma$ & $(2,3)$ & power-law exponent of degree distribution \\
$\delta$ & $\N$ & minimum degree at least $\delta$ \\
$\zeta$ & $(0,\frac{1}{\gamma-1}]$ & maximum degree at most $n^{\zeta}$ \\
$\beta$ & $(1,2)$ & power-law exponent of distribution of community sizes \\
$s$ & $\N \setminus [\delta]$ & community sizes at least $s$ \\
$\tau$ & $(\zeta,1)$ & community sizes at most $n^{\tau}$ \\
$\xi$ & $(0,1)$ & level of noise \\
\hline
\end{tabular}
\end{center}
\label{tab:parameters}
\end{table}

The \ABCD\ model is governed by 8 parameters summarized in Table~\ref{tab:parameters}.
For a fixed set of parameters, we generate the \ABCD\ graph $\Ac=\Ac(n, \gamma, \delta, \zeta, \beta, s, \tau, \xi)$ following the steps outlined below. Each time we refer to graph $\Ac$ in this paper, we implicitly (or explicitly, but it happens rather rarely) fix all of these parameters.

%%%%%%%%%%%%%%%%%%%%%%%%%%%%%%%%%%%%%%%%%%%%%%%%%%%%%%%%%%%
\subsubsection{Degree Distribution}

Let $\gamma \in (2,3)$, $\delta \in \N$, and $\zeta \in (0,1)$. Degrees of nodes of \ABCD\ graph $\Ac$ are generated randomly following the (truncated) \emph{power-law distribution} $\Pc(\gamma, \delta, \zeta)$ with exponent $\gamma$, minimum value $\delta$, and maximum value $D = n^{\zeta}$. In order to make sure the sum of degrees is even, if needed, we decrease by one the degree of one node of the largest degree. 

It is easy to show (see Lemma~\ref{lem:max_degree}) that for any $\omega=\omega(n)$ tending to infinity as $n \to \infty$ \whp\ the maximum degree of $\Ac$ is at most $n^{1/(\gamma-1)} \omega$ (of course, by definition, it is deterministically at most $n^{\zeta}$). As a result, for any two values of $\zeta_1, \zeta_2 \in ( \frac{1}{\gamma-1},1)$ one may couple the two corresponding ABCD graphs $\Ac$ so that \whp\ they produce exactly the same graph. Hence, for convenience but without loss of generality, we will later on assume that $\zeta \in (0,\frac {1}{\gamma-1}]$.

%%%%%%%%%%%%%%%%%%%%%%%%%%%%%%%%%%%%%%%%%%%%%%%%%%%%%%%%%%%
\subsubsection{Distribution of Community Sizes}

Let $\beta \in (1,2)$, $s \in \N \setminus [\delta]$, and $\tau \in (\zeta,1)$. Community sizes of \ABCD\ graph $\Ac$ are generated randomly following the (truncated) \emph{power-law distribution} $\Pc(\beta, s, \tau)$ with exponent $\beta$, minimum value $s$, and maximum value $S = n^{\tau}$. Communities are generated with this distribution as long as the sum of their sizes is less than $n$, the desired number of nodes. Suppose that the last community has size $z$ and after adding it to the remaining ones, the sum of their sizes will exceed $n$ by $k \in \N \cup \{0\}$. If $k=0$, then there is nothing else to do. If $z - k \ge s$, then the size of the last community is reduced to $z-k$ so that the total number of nodes is exactly $n$. Otherwise, we select $z-k < s$ old communities at random, increase their sizes by one, and remove the last community so that the desired property holds. 

The assumption that $\tau > \zeta$ is introduced to make sure large degree nodes have large enough communities to be assigned to. Similarly, the assumption that $s \ge \delta+1$ is required to guarantee that small communities are not too small and so that they can accommodate small degree nodes. 

%%%%%%%%%%%%%%%%%%%%%%%%%%%%%%%%%%%%%%%%%%%%%%%%%%%%%%%%%%%
\subsubsection{Assigning Nodes into Communities}

At this point, the degree distribution $(w_1 \ge w_2 \ge \ldots \ge w_n)$ and the distribution of community sizes $(c_1 \ge c_2 \ge \ldots \ge c_\ell)$ are already fixed. The final \ABCD\ graph $\Ac$ will be formed as the union of $\ell+1$ independent graphs: $\ell$ community graphs $G_i=(C_i, E_i)$, $i \in [\ell]$, and a single background graph $G_0=(V, E_0)$, where $V = \bigcup_{i \in [\ell]} C_i$. Roughly $\xi w_i$ edges incident to node $i$ will, by definition, belong to its own community but a few additional edges from the background graph might end up in that community. In order to create enough room for these edges, node of degree $w_i$ will be allowed to be assigned to a community of size $c_j$ if the following inequality is satisfied:
$$
\lceil (1-\xi\phi) w_i \rceil \le c_j - 1, \hspace{1cm} \text{ where } \phi = 1 - \sum_{k \in [\ell]} (c_k/n)^2.
$$
Note that this condition is equivalent to the following one:
\begin{equation}\label{eq:admissible}
w_i \le \frac {c_j-1}{1-\xi\phi}.
\end{equation}
An assignment of nodes into communities will be called \emph{admissible} if the above inequality is satisfied for all nodes. We will show in Subsection~\ref{sec:assigning} that there are many admissible assignments. In particular, there are linearly many nodes of degree $\delta$ but, fortunately, \whp\ communities of size more than $n^{\zeta}$ (more than the maximum degree) have space for almost all nodes. We select one admissible assignment uniformly at random. Sampling uniformly one of such assignments turns out to be relatively easy from both theoretical and practical points of view. We will discuss it in detail in Subsection~\ref{sec:assigning}.

%%%%%%%%%%%%%%%%%%%%%%%%%%%%%%%%%%%%%%%%%%%%%%%%%%%%%%%%%%%
\subsubsection{Distribution of Weights}

Parameter $\xi \in (0,1)$ reflects the amount of noise in the network. It controls the fraction of edges that are between communities. Indeed, asymptotically (but not exactly) $1-\xi$ fraction of edges are going to end up within one of the communities. Each node will have its degree $w_i$ split into two parts: \emph{community degree} $y_i$ and \emph{background degree} $z_i$ ($w_i=y_i+z_i$). Our goal is to get $y_i \approx (1-\xi) w_i$ and $z_i \approx \xi w_i$. However, both $y_i$ and $z_i$ have to be non-negative integers and for each community $C \subseteq V$, $\sum_{i \in C} y_i$ has to be even. Note that since $\sum_{i \in V} w_i$ is even, it will imply that 
$$
\sum_{i \in V} z_i = \sum_{i \in V} (w_i - y_i) = \sum_{i \in V} w_i - \sum_C \sum_{i \in C} y_i
$$ 
is even too. 

For each community $C \subseteq V$ we identify the \emph{leader}, a node of the largest degree $w_i$ associated with community $C$. (If many nodes in $C$ have the largest degree, then we arbitrarily select one of them to be the leader.) For non-leaders we split the weights as follows:
$$
y_i = \Big\lfloor (1-\xi) w_i \Big\rceil \hspace{1cm} \text{ and } \hspace{1cm} z_i = w_i - y_i,
$$
where for a given integer $a \in \Z$ and real number $b \in [0, 1)$ the random variable $\lfloor a+b \rceil$ is defined as 
\begin{equation}\label{eq:rounding}
\lfloor a+b \rceil = 
\begin{cases}
a & \text{ with probability } 1-b \\
a+1 & \text{ with probability } b.
\end{cases}
\end{equation}
(Note that $\E [\lfloor a+b \rceil] = a(1-b) + (a+1)b = a + b$.) For the leader of community $C$ we round $(1-\xi) w_i$ up or down so that the sum of weights in each cluster is even. If $(1-\xi) w_i \in \N$ and the sum of weights $y_i$ in $C$ is odd, then we randomly make a decision whether subtract or add one to make the sum to be even. 

%%%%%%%%%%%%%%%%%%%%%%%%%%%%%%%%%%%%%%%%%%%%%%%%%%%%%%%%%%%
\subsubsection{Creating Graphs}\label{sec:creating_graphs}

As already mentioned, the final \ABCD\ graph $\Ac = (V,E)$ will be formed as the union of $\ell+1$ independent graphs: $\ell$ community graphs $G_i=(C_i, E_i)$, $i \in [\ell]$, and a single background graph $G_0=(V, E_0)$, where $V = \bigcup_{i \in [\ell]} C_i$, that is, $E = \bigcup_{i \in [\ell] \cup \{0\}} E_i$. Each of these $\ell+1$ graphs will be created independently. The partition $\C = \{C_1, C_2, \ldots, C_{\ell}\}$ will be called a \emph{ground-truth} partition.

Suppose then that our goal is to create a graph on $n$ nodes with a given degree distribution $\textbf{w} := (w_1, w_2, \ldots, w_n)$, where $\textbf{w}$ is any vector of non-negative integers such that $w := \sum_{i \in [n]} w_i$ is even. We define a random multi-graph $\Pc(\textbf{w})$ with a given degree sequence known as the \textbf{configuration model} (sometimes called the \textbf{pairing model}), which was first introduced by Bollob\'as~\cite{bollobas1980probabilistic}. (See~\cite{bender1978asymptotic,wormald1984generating,wormald1999models} for related models and results.)
We start with $w$ \emph{points} that are partitioned into $n$ \emph{buckets} labelled with labels $v_1, v_2, \ldots, v_n$; bucket $v_i$ consists of $w_i$ points. It is easy to see that there are $\frac {w!}{(w/2)! 2^w}$ pairings of points. We select one of such pairings uniformly at random, and construct a multi-graph $\Pc(\textbf{w})$, with loops and parallel edges allowed, as follows: nodes are the buckets $v_1, v_2, \ldots, v_n$, and a pair of points $xy$ corresponds to an edge $v_iv_j$ in $\Pc(\textbf{w})$ if $x$ and $y$ are contained in the buckets $v_i$ and $v_j$, respectively.

\subsubsection{Simulation Corner}

Note that the \ABCD\ model $\Ac$ allows loops and multiple edges. Indeed, they can occur both in any of the generated graphs $G_i$ ($i \in [\ell] \cup \{0\}$) or after taking a union of their edge sets. In general, however, there will not be very many of them. To keep the theoretical model simple, in this paper we allow $\Ac$ be a multi-graph but, alternatively, one may condition on $\Ac$ to be a simple graph. In practice, the algorithm performs some kind of edges ``switching'' that is known to generate a random graph that is very close to the uniform distribution~\cite{janson2020random}. We will use simple graphs for our experiments to show that, indeed, the difference is not detectable. Moreover, to get closed formulas in some theoretical results proved in this paper we use the continuous variant of the power law distribution. For example, if $X \in \Pc(\gamma, \delta, \zeta)$, then for any $k \in \{ \delta, \delta+1, \ldots, D\}$ we assume that
$$
q_k = \Pr( X = k ) = \frac { \int_{k}^{k+1} x^{-\gamma} dx }{ \int_{\delta}^{D+1} x^{-\gamma} dx }.
$$
Alternatively, one may use the discrete counterpart, namely, assume that 
\begin{equation}\label{eq:power-law-discrete}
r_k = \Pr( X = k ) = \frac { k^{-\gamma} }{ \sum_{x = \delta}^{D} x^{-\gamma} }.
\end{equation}
All results proved in this paper hold for both variants. We state them for continuous one but one may replace $q_k$ with $r_k$ to get the discrete counterparts. Since the default implementation of the \ABCD\ model uses the discrete distribution, we use it for our simulations.
For more details we direct the reader to the original paper on the \ABCD\ model~\cite{kaminski2021artificial}.

%%%%%%%%%%%%%%%%%%%%%%%%%%%%%%%%%%%%%%%%%%%%%%%%%%%%%%%%%%%
\subsection{Modularity Function}\label{sec:def_modularity}

The modularity function favours partitions of the set of nodes of a graph $G$ in which a large proportion of the edges fall entirely within the parts  but benchmarks it against the expected number of edges one would see in those parts in the corresponding \textbf{Chung-Lu} random graph model~\cite{chung2006complex} which generates graphs with the expected degree sequence following exactly the degree sequence in $G$. 

Formally, for a graph $G=(V,E)$ and a given partition $\A = \{A_1, A_2, \ldots, A_{\ell}\}$ of $V$, the \emph{modularity function} is defined as follows:
\begin{eqnarray}
q(\A) &=& \sum_{A_i \in \A} \frac{e(A_i)}{|E|}  - \sum_{A_i \in \A} \left( \frac{\vol(A_i)}{\vol(V)} \right)^2, \label{eq:q_G_A}
\end{eqnarray}
where for any $A \subseteq V$, $e(A) = |\{ uv \in E : u, v \in A\}|$ is the number of edges in the subgraph of $G$ \emph{induced by} set $A$, and $\vol(A) = \sum_{v \in A} \deg(v)$ is the \emph{volume} of set $A$. In particular, $\vol(V) = 2|E|$. The first term in~(\ref{eq:q_G_A}), $\sum_{A_i \in \A} e(A_i)/|E|$, is called the \emph{edge contribution} and it computes the fraction of edges that fall within one of the parts. The second one, $\sum_{A_i \in \A} (\vol(A_i)/\vol(V))^2$, is called the \emph{degree tax} and it computes the expected fraction of edges that do the same in the corresponding random graph (the null model). The modularity measures the deviation between the two.

\medskip

It is easy to see that for any partition $\A$, $q(\A) \le 1$. On the other hand, it can be shown that $q(\A) \ge -1/2$. Also, if $\A = \{V\}$, then $q(\A) = 0$, and if $\A = \{ \{v_1\}, \{v_2\}, \ldots, \{v_n\}\}$, then $q(\A) = - \sum (\deg(v)/\vol(V))^2 < 0$. The maximum \emph{modularity} $q^*(G)$ is defined as the maximum of $q(\A)$ over all possible partitions $\A$ of $V$; that is, $q^*(G) = \max_{\A} q(\A).$ In order to maximize $q(\A)$ one wants to find a partition with large edge contribution subject to small degree tax. If $q^*(G)$ approaches 1 (which is the trivial upper bound), we observe a strong community structure; conversely, if $q^*(G)$ is close to zero (which is the trivial lower bound), there is no community structure. The definition in~(\ref{eq:q_G_A}) can be generalized to weighted edges by replacing edge counts with sums of edge weights. It can also be generalized to hypergraphs~\cite{kaminski2019clustering,kaminski2020community}. 

%%%%%%%%%%%%%%%%%%%%%%%%%%%%%%%%%%%%%%%%%%%%%%%%%%%%%%%%%%%
\section{Related Results for Random Graphs}\label{sec:related_results}
%%%%%%%%%%%%%%%%%%%%%%%%%%%%%%%%%%%%%%%%%%%%%%%%%%%%%%%%%%%

Analyzing the maximum modularity $q^*(G)$ for sparse random graphs is a challenging task. The most attention was paid to \textbf{random $d$-regular graphs} $\mathcal{G}_{n,d}$ but even for this family of graphs we only know upper and lower bounds for $q^*(\mathcal{G}_{n,d})$ that are quite apart from each other. For example, for random $3$-regular graph $\mathcal{G}_{n,3}$ we only know that \whp\ 
$$
0.667026 \le q^*(\mathcal{G}_{n,3}) \le 0.789998.
$$
These bounds were recently proved in~\cite{lichev2020modularity} but the main goal of that paper was to confirm the conjecture from~\cite{mcdiarmid2018modularity} that \whp\ $q^*(\mathcal{G}_{n,3}) \ge 2/3 + \eps$ for some $\eps>0$. We refer the reader to~\cite{mcdiarmid2018modularity,prokhorenkova2017modularity} for numerical bounds on $q^*(\mathcal{G}_{n,d})$ for other values of $d \ge 3$ and for some explicit but weaker bounds. It is also known that \whp\ $q^*(\mathcal{G}_{n,2}) \sim 1$~\cite{mcdiarmid2018modularity}.

The \textbf{binomial random graphs} $\mathcal{G}(n,p)$ were studied in~\cite{mcdiarmid2020modularity} where it was shown that \whp\ $q^*(\mathcal{G}(n,p)) \sim 1$, provided that $pn \le 1$ whereas \whp\ $q^*(\mathcal{G}(n,p)) = \Theta ( 1/\sqrt{pn})$, provided that $pn \ge 1$ and $p < 1-\eps$ for some $\eps > 0$. 
The modularity of the well-known \textbf{Preferential Attachment (PA) model}~\cite{barabasi1999emergence} and the \textbf{Spatial Preferential Attachment (SPA) model}~\cite{aiello2008spatial} was studied in~\cite{prokhorenkova2017modularity}. 
Finally, the modularity of a model of random geometric graphs on the hyperbolic plane~\cite{krioukov2010hyperbolic}, known as the \textbf{KPKBV model} after its inventors, was recently studied in~\cite{chellig2022modularity}.

%%%%%%%%%%%%%%%%%%%%%%%%%%%%%%%%%%%%%%%%%%%%%%%%%%%%%%%%%%%
\section{Preliminaries}\label{sec:preliminaries}
%%%%%%%%%%%%%%%%%%%%%%%%%%%%%%%%%%%%%%%%%%%%%%%%%%%%%%%%%%%

%%%%%%%%%%%%%%%%%%%%%%%%%%%%%%%%%%%%%%%%%%%%%%%%%%%%%%%%%%%
\subsection{Chernoff Bounds and Their Generalization}\label{sec:Chernoff}

Let us first state a specific instance of Chernoff's bound that we will find often useful. Let $X \in \textrm{Bin}(n,p)$ be a random variable with the binomial distribution with parameters $n$ and $p$. Then, a consequence of \emph{Chernoff's bound} (see e.g.~\cite[Corollary~2.3]{JLR}) is that 
\begin{equation}\label{chern}
\Prob( |X-\E [X]| \ge \eps \ \E [X] ) \le 2\exp \left( - \frac {\eps^2 \ \E [X]}{3} \right)  
\end{equation}
for  $0 < \eps < 3/2$. However, at some point we will need need a stochastic upped bound for $X$ when $\E[X]$ is small. In such situations the following bound can be applied instead of~(\ref{chern}) (see e.g.~\cite[Theorem~2.1]{JLR}):
\begin{equation}\label{chern2}
\Prob( X \ge \E [X] + u ) \le \exp \left( - \frac {u^2}{2(\E [X] + u/3)} \right).
\end{equation}

Let us mention that the above bounds hold for the general case in which $X=\sum_{i=1}^n X_i$ and $X_i \in \textrm{Bernoulli}(p_i)$ with (possibly) different $p_i$ (e.g.~see~\cite[Theorem~2.8]{JLR}). Moreover, they also hold for a hypergeometric distribution with parameters $n$, $z$, and $t$, where $\max(z,t) \le n$. Let $T$ be a subset of $[n]$ of size $t$ selected uniformly at random. Then a random variable with a hypergeometric distribution is defined as follows: $X = |T \cap [z]|$. Then the above inequalities hold with $\E [X] = zt/n$ (again, e.g.~see~\cite[Theorem~2.10]{JLR}).

\bigskip

We will also need the following result which can be viewed as a generalization of Chernoff bounds. In particular, the two inequalities above ((\ref{chern} and~(\ref{chern2})) are special cases when $c=1$.

\begin{lemma}\label{lem:chernoff_gen}
Let $(c_1, c_2, \ldots, c_r)$ be a sequence of natural numbers with  $c = \max_{i} c_i$. 
Let $S_j = \sum_{i=1}^j c_i Z_i$, where $Z_i, i \in [r]$ are independent Bernoulli($p$) random variables. Let $\mu_j = \E [S_j]=  p\sum_{i=1}^j c_i$, and let $\mu=\mu_r=\E[S_r]$.
Then for $u \ge 0$ we have that
\begin{eqnarray*}
\Prob \left(\max_{1\leq j\leq r} (S_j-\mu_j) \ge u \right) &\le& \exp \left( - \frac {u^2}{2c(\mu+u/3)} \right) \mbox{ and }\\
\Prob \left(\max_{1\leq j\leq r} (\mu_j-S_j) \ge u \right) &\le& \exp \left( - \frac {u^2}{2c \mu} \right). \\
\end{eqnarray*}
In particular, for $\eps \le 3/2$ we have that
$$
\Prob \left( \max_{1\leq j\leq r} |S_j-\mu_j| \ge \eps \mu \right) \le 2 \exp \left( - \frac {\eps^2 \mu}{3c} \right).
$$
\end{lemma}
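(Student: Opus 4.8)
The plan is to reduce everything to the two ``max'' inequalities, since the final ``in particular'' statement follows from them by a union bound together with the elementary fact that $u^2/(2c(\mu+u/3)) \ge \eps^2\mu/(3c)$ when $u = \eps\mu$ and $\eps \le 3/2$ (indeed $\mu + \eps\mu/3 \le \mu + \mu/2 = (3/2)\mu$, so $u^2/(2c(\mu+u/3)) = \eps^2\mu^2/(2c(\mu+\eps\mu/3)) \ge \eps^2\mu^2/(2c\cdot(3/2)\mu) = \eps^2\mu/(3c)$); and the lower-tail bound $\exp(-u^2/(2c\mu))$ at $u=\eps\mu$ is $\exp(-\eps^2\mu/(2c)) \le \exp(-\eps^2\mu/(3c))$. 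Adding the two tail probabilities gives the factor $2$. So the real work is proving the two one-sided maximal inequalities.

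For the upper-tail maximal inequality, the plan is a standard Doob-martingale / Azuma-type argument via exponential moments. First I would fix $t > 0$ and consider the process $M_j = \exp(t(S_j - \mu_j))$. The increments $S_j - S_{j-1} = c_j Z_j$ are independent, so $\E[e^{t(S_j-\mu_j)} \mid \mathcal{F}_{j-1}] = e^{t(S_{j-1}-\mu_{j-1})} \cdot \E[e^{t(c_j Z_j - pc_j)}]$. The key estimate is a bound on the single-step exponential moment $\E[e^{t(c_j Z_j - pc_j)}]$: writing $a = c_j$, since $Z_j$ is Bernoulli($p$) we have $\E[e^{t(aZ_j - pa)}] = (1-p)e^{-tpa} + pe^{ta(1-p)} = e^{-tpa}(1-p+pe^{ta})$. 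Using $1-p+pe^{ta} \le \exp(p(e^{ta}-1))$ and then $e^{ta}-1-ta \le \frac{(ta)^2}{2} \cdot \frac{1}{1-ta/3}$ (the standard bound valid for $0 \le ta < 3$), one gets $\E[e^{t(aZ_j-pa)}] \le \exp\!\big(p c_j \cdot \frac{t^2 c_j}{2(1-tc_j/3)}\big) \le \exp\!\big(\frac{pc_j \cdot t^2 c}{2(1-tc/3)}\big)$ after replacing one factor of $c_j$ by $c = \max_i c_i$. Multiplying over $j$, the process $M_j / \prod_{i\le j}(\text{bound}_i)$ is a supermartingale, so $\E[M_r'] \le 1$ where $M_r' = \exp(t(S_r - \mu_r))/\exp(\frac{t^2 c\mu}{2(1-tc/3)})$ — but I actually want to control the running maximum, so I apply Doob's maximal inequality for the nonnegative submartingale $M_j$ (or directly: $\Prob(\max_j (S_j - \mu_j) \ge u) = \Prob(\max_j M_j \ge e^{tu}) \le e^{-tu}\sup_j \E[M_j] \le e^{-tu}\exp(\frac{t^2 c\mu}{2(1-tc/3)})$, using that $\E[M_j]$ is nondecreasing up to the correction and bounded by $\exp(\frac{t^2 c\mu_j}{2(1-tc/3)}) \le \exp(\frac{t^2 c\mu}{2(1-tc/3)})$). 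Finally I optimize over $t \in (0, 3/c)$: choosing $t = \frac{u}{c(\mu + u/3)}$ (the same choice that makes the $c=1$ Chernoff bound tight) yields $\Prob(\max_j (S_j - \mu_j) \ge u) \le \exp\!\big(-\frac{u^2}{2c(\mu+u/3)}\big)$ after routine simplification. The lower-tail inequality is analogous but cleaner: apply the same scheme to $-c_j Z_j$, using the one-sided bound $\E[e^{-t(aZ_j - pa)}] = e^{tpa}(1-p+pe^{-ta}) \le \exp(p c_j \cdot \frac{t^2 c_j}{2}) \le \exp(\frac{pc_j t^2 c}{2})$ (here $e^{-ta} - 1 + ta \le \frac{(ta)^2}{2}$ holds for all $t \ge 0$, no truncation needed), giving $\Prob(\max_j(\mu_j - S_j) \ge u) \le e^{-tu}\exp(\frac{t^2 c\mu}{2})$ and optimizing with $t = u/(c\mu)$ gives $\exp(-\frac{u^2}{2c\mu})$.

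The main obstacle, and the only place requiring care, is getting the correct single-step exponential-moment estimate for the weighted Bernoulli $c_j Z_j$ with the factor $c = \max_i c_i$ appearing in exactly the right place — i.e. verifying $\E[e^{t(c_jZ_j - pc_j)}] \le \exp\!\big(\frac{pc_j\, t^2 c}{2(1-tc/3)}\big)$ — and then ensuring Doob's maximal inequality applies (the process is a genuine nonnegative submartingale once we note $\E[M_j \mid \mathcal F_{j-1}] \ge M_{j-1}$ may fail, so instead one uses that $L_j := M_j \cdot \prod_{i\le j}\E[e^{t(c_iZ_i - pc_i)}]^{-1}$ is a nonnegative martingale and $\max_j M_j \le (\prod_{i\le r}\text{bound}_i)\cdot \max_j L_j$, then apply Doob to $L_j$). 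Everything after that is the same $t$-optimization used to derive the classical Chernoff bounds~(\ref{chern}) and~(\ref{chern2}), so I would state it briefly and refer to the standard computation. I would also remark at the end that when $c = 1$ this recovers precisely~(\ref{chern}) and~(\ref{chern2}), consistent with the claim in the text.
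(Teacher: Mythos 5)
Your proposal is correct and follows essentially the same route the paper indicates: the paper defers to McDiarmid's Theorem~2.3 combined with a martingale maximal-inequality step $\Prob(\max_j(S_j-\mu_j)\ge u)\le e^{-hu}\,\E[e^{h(S_r-\mu)}]$, which is precisely the Doob/exponential-moment argument you carry out, followed by the standard optimization over $t$. One small remark: your hedge about $M_j=e^{t(S_j-\mu_j)}$ possibly failing to be a submartingale is unnecessary, since $\E[e^{t(c_jZ_j-pc_j)}]\ge 1$ by Jensen (the increment is centered and $x\mapsto e^{tx}$ is convex), so $M_j$ is a genuine nonnegative submartingale and Doob applies directly.
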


To prove this lemma, one can easily adjust the proof of the classic Chernoff bound. Alternatively, the same bounds come from~\cite{mcdiarmid1998concentration}. In that paper, the counterpart of Lemma~\ref{lem:chernoff_gen} is stated for $S_r-\mu$ (see Theorem~2.3); however, the author comments that $S_r-\mu$ can be replaced with $\max_{1\leq j\leq r} (S_j-\mu_j)$ (which is a slightly stronger version than we need here) as follows. A standard martingale bound shows that eg.\ for any $h>0$:
\[\Prob \left( \max_{1\leq j\leq r} (S_j-\mu_j) \ge u \right) \leq e^{-hu}\E \left[ e^{h(S_r-\mu)} \right]. \]
Then plugging this into the appropriate place in the proof of Theorem 2.3 yields the desired bounds.

%%%%%%%%%%%%%%%%%%%%%%%%%%%%%%%%%%%%%%%%%%%%%%%%%%%%%%%%%%%
\subsection{Expansion Properties of Random $d$-regular graphs}\label{sec:expanders}

Let $\mathcal{G}_{n,d}$ be the probability space of \textbf{random $d$-regular simple graphs} with uniform probability distribution ($d \ge 2$ is fixed and $n$ is even if $d$ is odd). It is an easy fact that the probability of a random pairing $\mathcal{P}_{n,d} := \Pc(\textbf{w})$ with $\textbf{w} := (d, d, \ldots, d)$ corresponding to a given simple $d$-regular graph $G$ is independent of the graph. As a result, the restriction of the probability space $\mathcal{P}_{n,d}$ to simple graphs is precisely $\mathcal{G}_{n,d}$. Moreover, it is well known that a random pairing generates a simple graph with probability asymptotic to $e^{(1-d^2)/4}$ depending on $d$, so that any event holding \whp\ over the probability space of random pairings also holds \whp\ over the corresponding space $\mathcal{G}_{n,d}$. For this reason, asymptotic results over random pairings can be immediately transferred to $\mathcal{G}_{n,d}$. For more information on this model, see~\cite{wormald1999models}.

We will use the well-known expansion properties of random $d$-regular graphs that follow from their eigenvalues. These expansion properties are known to hold \whp\ for $\mathcal{G}_{n,d}$ but, fortunately, they actually hold \whp\ for $\mathcal{P}_{n,d}$ (and so, by the argument mentioned above, they immediately hold \whp\ for $\mathcal{G}_{n,d}$ which is what is typically used). In one of our proofs (see the proof of Theorem~\ref{thm:modularity_small_xi}), we will couple the \ABCD\ model $\Ac$ with a random pairing $\mathcal{P}_{n,d}$. This coupling will allow us to deduce some useful expansion properties of $\Ac$ from the corresponding properties of $\mathcal{P}_{n,d}$. 

\medskip

The adjacency matrix $A=A(G)$ of a given a $d$-regular (multi)graph $G$ with $n$ nodes, is an $n \times n$ real and symmetric matrix. Thus, the matrix $A$ has $n$ real eigenvalues which we denote by $\lambda_1 \ge \lambda_2 \ge \cdots \ge \lambda_n$. It is known that certain properties of a $d$-regular graph are reflected in its spectrum but, since we focus on expansion properties, we are particularly interested in the following quantity: $\lambda = \lambda(G) = \max( |\lambda_2|, |\lambda_n|)$. In words, $\lambda$ is the largest absolute value of an eigenvalue other than $\lambda_1 = d$. For more details, see the general survey~\cite{hoory2006expander} about expanders, or~\cite[Chapter 9]{alon2016probabilistic}.

The value of $\lambda$ for random $d$-regular graphs has been studied extensively. A major result due to Friedman~\cite{friedman2008proof} is the following:
\begin{lemma}[\cite{friedman2008proof}]\label{lem:Fri}
For every fixed  $\varepsilon > 0$ and for $G\in \mathcal{P}_{n,d}$ (and so also for $G\in \mathcal{G}_{n,d}$), \whp\
$$
\lambda(G) \le 2 \sqrt{d-1}+ \varepsilon.
$$
\end{lemma}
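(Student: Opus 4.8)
This statement is Friedman's celebrated resolution of Alon's conjecture, and in the paper it is (rightly) invoked as a black box; I will therefore only sketch the strategy of the known proofs rather than attempt a self-contained argument. The natural plan is the \emph{trace method}. Since we work over the pairing model $\mathcal{P}_{n,d}$, the random multigraph is $d$-regular, so $\lambda_1 = d$ deterministically and it suffices to control $\lambda = \max(|\lambda_2|,|\lambda_n|)$. The cleanest route (following Bordenave's later streamlining of Friedman) is to pass to the non-backtracking (Hashimoto) operator $B = B(G)$ acting on directed edges: by the Ihara--Bass identity, each eigenvalue $\lambda_i$ of $A$ gives rise to a pair of eigenvalues of $B$ that are the roots of $\mu^2 - \lambda_i \mu + (d-1)=0$, while the remaining eigenvalues of $B$ equal $\pm 1$. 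Consequently $|\lambda_i| \le 2\sqrt{d-1}$ precisely when the associated pair of $B$-eigenvalues lies on the circle $|\mu| = \sqrt{d-1}$, and if some $|\lambda_i| \ge 2\sqrt{d-1}+\varepsilon$ then $B$ has a non-Perron eigenvalue of modulus at least $\sqrt{d-1} + \varepsilon'$ for an $\varepsilon' = \varepsilon'(\varepsilon,d) > 0$. Hence it is enough to show that \whp\ every eigenvalue of $B$ other than the trivial one at $d-1$ has modulus at most $\sqrt{d-1}+\varepsilon'$.

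To bound the non-trivial spectral radius of $B$ one estimates a suitable high trace --- after projecting out the Perron direction --- of the form $\mathrm{tr}\big((B^{k})(B^{k})^{*}\big)$ (or a symmetrized power), where $k = k(n) \to \infty$ slowly, typically $k = \Theta(\log n / \log(d-1))$. Such a trace equals a sum over closed non-backtracking walks of length $2k$ in $\mathcal{P}_{n,d}$, each weighted by the probability, computed in the pairing model, that the edges it uses are realized. One then classifies these walks by the isomorphism type of the multigraph they trace out: a walk whose trace graph is a single cycle (the ``clean'' case) contributes the main term of order $(d-1)^{k}$ per starting edge, whereas walks whose trace graph carries extra edges or high-degree vertices --- Friedman's \emph{tangles} --- must be shown to contribute only a lower-order term. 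Summing the main terms over the $\Theta(nd)$ starting edges and taking $2k$-th roots gives the bound $\sqrt{d-1}\,(1+o(1))$ for the non-trivial part of the spectrum, hence $\lambda(G) \le 2\sqrt{d-1}+\varepsilon$ \whp, and the transfer to $\mathcal{G}_{n,d}$ follows because a random pairing is simple with probability bounded away from $0$, as recalled above.

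The entire difficulty --- and the reason the lemma is quoted, not reproved --- lies in controlling the tangles sharply enough to reach the threshold $2\sqrt{d-1}$ rather than merely $O(\sqrt{d})$ or $2\sqrt{d-1}$ times a power of $\log n$. Crude counting loses a factor polynomial in $k$ for each unit of ``excess'' in the trace graph, which is ruinous once $k$ is of order $\log n$; overcoming this requires either Friedman's original and very intricate induction over the structure of the walk (the selective-trace argument of \cite{friedman2008proof}) or Bordenave's later and considerably shorter analysis, which expands around the local tree limit and compares with the spectral radius $2\sqrt{d-1}$ of the adjacency operator on the infinite $d$-regular tree. For our purposes --- deducing expansion bounds for the pairings coupled with $\Ac$ in the proof of Theorem~\ref{thm:modularity_small_xi} --- the statement is used only as a black box, so we invoke \cite{friedman2008proof} and proceed.
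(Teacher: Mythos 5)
The paper, like you, invokes Friedman's theorem purely as a cited black box (Lemma~\ref{lem:Fri} is attributed directly to \cite{friedman2008proof} with no proof supplied), so your treatment matches the paper's exactly. Your sketch of the trace-method / non-backtracking strategy and of why the ``tangle'' control is the hard part is accurate and a welcome bonus, but nothing in it is required: citing the reference is all the paper does and all that is needed here.
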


The number of edges $|E(S,T)|$ between sets $S$ and $T$ is expected to be close to the expected number of edges between $S$ and $T$ in a random graph of edge density $d/n$, namely, $d|S||T|/n$. A~small $\lambda$ (or large spectral gap) implies that this deviation is small. The following useful bound is essentially proved in~\cite{alon1988explicit} (see also~\cite{alon2016probabilistic}):
\begin{lemma}[Expander Mixing Lemma]\label{lem:AC}
Let $G=(V,E)$ be a $d$-regular (multi)graph with $n$ nodes and set $\lambda = \lambda(G)$. Then for all $S, T \subseteq V$
$$
\left| |E(S, T)| - \frac{d|S||T|}{n} \right| \le \lambda \sqrt{|S||T|} \,.
$$
\end{lemma}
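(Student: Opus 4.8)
The plan is to run the standard spectral argument. Let $A=A(G)$ be the adjacency matrix of $G$; since $G$ is a $d$-regular (multi)graph, $A$ is a real symmetric $n\times n$ matrix and, with the usual conventions (a loop contributing $2$ to the corresponding diagonal entry), every row of $A$ sums to $d$, so the all-ones vector $\mathbf{1}=(1,\ldots,1)^\top$ is an eigenvector with eigenvalue $\lambda_1=d$. First I would fix an orthonormal basis $v_1,v_2,\ldots,v_n$ of $\R^n$ consisting of eigenvectors of $A$, ordered so that $Av_i=\lambda_i v_i$ with $\lambda_1\ge\lambda_2\ge\cdots\ge\lambda_n$ and $v_1=\mathbf{1}/\sqrt{n}$. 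By definition $\lambda=\max(|\lambda_2|,|\lambda_n|)$, and hence $|\lambda_i|\le\lambda$ for every $i\ge 2$.

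Next I would express the quantity of interest as a bilinear form. Writing $\mathbf{1}_S,\mathbf{1}_T\in\{0,1\}^n$ for the characteristic vectors of $S$ and $T$, one has $|E(S,T)|=\mathbf{1}_S^\top A\,\mathbf{1}_T=\sum_{u\in S,\,v\in T}A_{uv}$ (so that an edge with both endpoints in $S\cap T$ is counted twice, matching the convention under which the lemma is later applied). Expanding $\mathbf{1}_S=\sum_i\alpha_i v_i$ and $\mathbf{1}_T=\sum_i\beta_i v_i$ in the eigenbasis, the leading coefficients are $\alpha_1=\langle\mathbf{1}_S,v_1\rangle=|S|/\sqrt{n}$ and $\beta_1=|T|/\sqrt{n}$, while Parseval's identity gives $\sum_i\alpha_i^2=|S|$ and $\sum_i\beta_i^2=|T|$. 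Since the $v_i$ are orthonormal eigenvectors,
$$
\mathbf{1}_S^\top A\,\mathbf{1}_T=\sum_{i=1}^n\lambda_i\alpha_i\beta_i=\frac{d\,|S|\,|T|}{n}+\sum_{i=2}^n\lambda_i\alpha_i\beta_i .
$$

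Finally I would bound the error term. By the triangle inequality together with $|\lambda_i|\le\lambda$ for $i\ge 2$,
$$
\left||E(S,T)|-\frac{d\,|S|\,|T|}{n}\right|=\left|\sum_{i=2}^n\lambda_i\alpha_i\beta_i\right|\le\lambda\sum_{i=2}^n|\alpha_i|\,|\beta_i|,
$$
and then Cauchy--Schwarz combined with Parseval yields $\sum_{i\ge 2}|\alpha_i|\,|\beta_i|\le\big(\sum_{i\ge 2}\alpha_i^2\big)^{1/2}\big(\sum_{i\ge 2}\beta_i^2\big)^{1/2}\le\sqrt{|S|\,|T|}$, which is exactly the claimed bound. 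There is no genuine obstacle in this argument; the only points requiring a little care are that $A$ is symmetric, so the spectral theorem applies even though $G$ may be a multigraph, that $d$-regularity is precisely what forces $\mathbf{1}$ to be an eigenvector and thereby isolates the main term $d\,|S|\,|T|/n$, and that one keeps the edge-counting convention for $|E(S,T)|$ consistent with the way the lemma is used elsewhere in the paper.
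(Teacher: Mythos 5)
The paper does not supply a proof of this lemma at all: it simply cites \cite{alon1988explicit} and \cite{alon2016probabilistic}, remarking that the bound ``is essentially proved'' there. Your argument is the standard spectral proof of the Expander Mixing Lemma as found in those references, and it is correct: the decomposition $\mathbf{1}_S^\top A \mathbf{1}_T = \sum_i \lambda_i \alpha_i \beta_i$ with $\alpha_1\beta_1\lambda_1 = d|S||T|/n$, followed by Cauchy--Schwarz and Parseval to bound the tail by $\lambda\sqrt{|S||T|}$, is exactly the canonical route. You also correctly flag the two technical prerequisites (symmetry of $A$ so the spectral theorem applies to a multigraph, and $d$-regularity making $\mathbf{1}$ an eigenvector), and you keep the ordered-pair convention $|E(S,T)|=\sum_{u\in S,\,v\in T}A_{uv}$, which is what the spectral identity actually computes. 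One small point worth being aware of: the paper's parenthetical gloss after the lemma statement (``edges between $S\setminus T$ to $T$ plus twice the edges inside $S\cap T$'') is a slightly loose paraphrase of this quadratic-form convention --- for instance an edge with one endpoint in $S\cap T$ and the other in $T\setminus S$ contributes once under $\mathbf{1}_S^\top A\mathbf{1}_T$ but would be missed by a literal reading of that gloss; your choice of the quadratic-form definition is the correct one for which the stated inequality holds.
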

\noindent 
(Note that $S \cap T$ does not have to be empty; in general, $|E(S,T)|$ is defined to be the number of edges between $S \setminus T$ to $T$ plus twice the number of edges that contain only nodes of $S \cap T$.)

The Expander Mixing Lemma is very useful but for our purpose it is better to apply a slightly stronger lower estimate for $|E(S,V \setminus S)|$, namely,
\begin{equation} \label{eqn:bisection}
|E(S,V \setminus S)| \geq \frac{(d-\lambda)|S||V \setminus S|}{n}
\end{equation}
for all $S \subseteq V$. This is proved in~\cite{alon1985lambda1}, see also~\cite{alon2016probabilistic}. Combining this inequality with a simple averaging argument gives immediately the following useful bound for the maximum modularity that was observed in~\cite{prokhorenkova2017modularity}. 

\begin{lemma}[\cite{prokhorenkova2017modularity}]\label{lem:mod_expander}
Let $G=(V,E)$ be a $d$-regular (multi)graph with $n$ nodes and set $\lambda = \lambda(G)$. Then, 
$$
q^*(G) \le \frac {\lambda}{d}\,.
$$
\end{lemma}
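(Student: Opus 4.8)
The plan is to derive the bound directly from inequality~(\ref{eqn:bisection}) combined with an averaging argument over the parts of an arbitrary partition. Fix a partition $\A = \{A_1, A_2, \ldots, A_k\}$ of $V$. The starting observation is that the edge contribution can be rewritten in terms of the edge boundaries: since each part $A_i$ has $e(A_i)$ internal edges and $|E(A_i, V \setminus A_i)|$ edges leaving it, and since $\vol(A_i) = 2e(A_i) + |E(A_i, V \setminus A_i)| = d|A_i|$ by $d$-regularity, we get
$$
\sum_i e(A_i) = |E| - \frac{1}{2} \sum_i |E(A_i, V \setminus A_i)|,
$$
so the edge contribution equals $1 - \frac{1}{2|E|}\sum_i |E(A_i, V \setminus A_i)|$.

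Next I would lower-bound each boundary term using~(\ref{eqn:bisection}): $|E(A_i, V \setminus A_i)| \ge (d-\lambda)|A_i|(n-|A_i|)/n$. Plugging this in, and writing $a_i := |A_i|/n$ so that $\sum_i a_i = 1$, the edge contribution is at most
$$
1 - \frac{(d-\lambda)}{2|E|} \cdot \frac{n}{1} \cdot \frac{1}{2} \sum_i a_i(1-a_i) \cdot \frac{2|E|}{dn} \cdot \frac{d}{1},
$$
which after simplification (using $|E| = dn/2$) becomes $1 - \frac{d-\lambda}{d}\sum_i a_i(1-a_i) = 1 - \frac{d-\lambda}{d}\left(1 - \sum_i a_i^2\right)$. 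On the other hand, the degree tax is exactly $\sum_i (\vol(A_i)/\vol(V))^2 = \sum_i (d|A_i|/(dn))^2 = \sum_i a_i^2$. Therefore
$$
q(\A) \le 1 - \frac{d-\lambda}{d}\left(1 - \sum_i a_i^2\right) - \sum_i a_i^2 = 1 - \frac{d-\lambda}{d} + \left(\frac{d-\lambda}{d} - 1\right)\sum_i a_i^2 = \frac{\lambda}{d} - \frac{\lambda}{d}\sum_i a_i^2 \le \frac{\lambda}{d},
$$
since $\sum_i a_i^2 \ge 0$ (and $\lambda \ge 0$). As this holds for every partition $\A$, taking the maximum gives $q^*(G) \le \lambda/d$.

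The only mild subtlety, which I would flag but not belabor, is the bookkeeping with multigraphs and the convention for $|E(S, V\setminus S)|$ stated in the remark after Lemma~\ref{lem:AC}: one should check that the identity $\vol(A_i) = 2e(A_i) + |E(A_i, V\setminus A_i)|$ still holds under that convention (it does, since loops and parallel edges inside $A_i$ are both counted with the correct multiplicity in $e(A_i)$, and the boundary count only involves edges with exactly one endpoint in $A_i$). There is genuinely no hard step here — this is a short averaging computation built on the already-cited spectral inequality~(\ref{eqn:bisection}) — which is consistent with the statement's attribution to~\cite{prokhorenkova2017modularity} as an immediate corollary.
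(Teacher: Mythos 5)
Your argument is the intended one. The paper itself gives no proof of this lemma, only the remark that ``combining \eqref{eqn:bisection} with a simple averaging argument gives immediately'' the bound, and your writeup is exactly that averaging argument spelled out: the identity $\vol(A_i) = 2e(A_i) + |E(A_i, V\setminus A_i)| = d|A_i|$, the substitution of~(\ref{eqn:bisection}) for each boundary term, the reduction to $\sum_i a_i(1-a_i) = 1 - \sum_i a_i^2$, and the cancellation with the degree tax $\sum_i a_i^2$, together with the sign observation $\lambda \ge 0$, are all correct. Your remark about loops and parallel edges being harmless to the handshake identity is also right.

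One small blemish to fix: the intermediate display
$$
1 - \frac{(d-\lambda)}{2|E|} \cdot \frac{n}{1} \cdot \frac{1}{2} \sum_i a_i(1-a_i) \cdot \frac{2|E|}{dn} \cdot \frac{d}{1}
$$
is garbled and does \emph{not} simplify to what you claim. The spurious factors $\frac{1}{2}$, $\frac{2|E|}{dn}$, and $\frac{d}{1}$ multiply to $d/2$, so the line as written evaluates to $1 - \frac{d-\lambda}{2}\sum_i a_i(1-a_i)$ rather than $1 - \frac{d-\lambda}{d}\sum_i a_i(1-a_i)$. The step should simply be $1 - \frac{(d-\lambda)n}{2|E|}\sum_i a_i(1-a_i)$ followed by $2|E| = dn$, which does give the expression you then (correctly) use; the rest of the algebra downstream is fine. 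Tidy that one display and the proof stands.
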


%%%%%%%%%%%%%%%%%%%%%%%%%%%%%%%%%%%%%%%%%%%%%%%%%%%%%%%%%%%
\section{Some Properties of ABCD}\label{sec:basic_properties}
%%%%%%%%%%%%%%%%%%%%%%%%%%%%%%%%%%%%%%%%%%%%%%%%%%%%%%%%%%%

%%%%%%%%%%%%%%%%%%%%%%%%%%%%%%%%%%%%%%%%%%%%%%%%%%%%%%%%%%%
\subsection{Degree Distribution}

Let $\gamma \in (2,3)$, $\delta \in \N$, and $\zeta \in (0,1)$. We will show soon that we may assume that $\zeta \le \frac {1}{\gamma-1}$ but for now we allow $\zeta$ to be any value from $(0,1)$. Recall that the degrees of nodes of the \ABCD\ model are generated randomly following the (truncated) \emph{power-law distribution} $\Pc(\gamma, \delta, \zeta)$ with exponent $\gamma$, minimum value $\delta$, and maximum value $D = n^{\zeta}$. More precisely, if $X \in \Pc(\gamma, \delta, \zeta)$, then for any $k \in \{ \delta, \delta+1, \ldots, D\}$,
\begin{eqnarray}
q_k &=& \Pr( X = k ) = \frac { \int_{k}^{k+1} x^{-\gamma} dx }{ \int_{\delta}^{D+1} x^{-\gamma} dx } = \frac { k^{1-\gamma} - (k+1)^{1-\gamma} } { \delta^{1-\gamma} - (D+1)^{1-\gamma} } \nonumber \\
&=& (1+\bigo(n^{-\zeta (\gamma-1)})) \left( k^{-(\gamma-1)} - (k+1)^{-(\gamma-1)} \right) \delta^{\gamma-1} \label{eq:qk_small} \\
&=& (1+\bigo(n^{-\zeta (\gamma-1)})) \ k^{-(\gamma-1)} \left( 1 - (1+1/k)^{-(\gamma-1)} \right) \delta^{\gamma-1} \nonumber \\
&=& (1+\bigo(n^{-\zeta (\gamma-1)})) \ k^{-(\gamma-1)} \left( 1 - (1-(\gamma-1)/k+\bigo(1/k^2)) \right) \delta^{\gamma-1} \nonumber \\
&=& (1+\bigo(n^{-\zeta (\gamma-1)})+\bigo(k^{-1})) \ k^{-(\gamma-1)} \frac { \gamma-1 }{ k } \delta^{\gamma-1} \nonumber \\
&=& (1+\bigo(n^{-\zeta (\gamma-1)})+\bigo(k^{-1})) \ k^{-\gamma} (\gamma-1) \delta^{\gamma-1}.  \label{eq:qk_large}
\end{eqnarray}

\medskip

As promised earlier, we start with a proof of an upper bound for the maximum degree, which justifies our future assumption that $\zeta \in (0,1/(\gamma-1)]$. 

\begin{lemma}\label{lem:max_degree}
Let $\gamma \in (2,3)$, $\delta \in \N$, and $\zeta \in (0,1)$. Let $\omega=\omega(n)$ be any function tending to infinity as $n \to \infty$. Then, \whp\ the maximum degree of $\Ac$ is at most $\min (n^{\zeta},n^{1/(\gamma-1)} \omega)$.
\end{lemma}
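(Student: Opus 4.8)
The plan is to bound the maximum degree by a first-moment (union bound) argument over the node degrees. Since the degrees $w_1, \ldots, w_n$ are i.i.d.\ samples from $\Pc(\gamma, \delta, \zeta)$ (before the possible parity correction, which only decreases one degree by one and hence cannot hurt an upper bound), it suffices to show that $\Prob(X \ge K) = o(1/n)$ for $X \in \Pc(\gamma, \delta, \zeta)$ and $K := n^{1/(\gamma-1)} \omega$; then a union bound over the $n$ nodes gives that \whp\ no node has degree exceeding $K$, and the deterministic bound $n^\zeta$ takes care of the minimum.

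The key computation is the tail estimate. Using the closed form in~(\ref{eq:qk_small}), for any threshold $k_0$ we have
$$
\Prob(X \ge k_0) = \sum_{k = k_0}^{D} q_k = \frac{k_0^{1-\gamma} - (D+1)^{1-\gamma}}{\delta^{1-\gamma} - (D+1)^{1-\gamma}} \le \frac{k_0^{1-\gamma}}{\delta^{1-\gamma} - (D+1)^{1-\gamma}} = (1+o(1)) \, \delta^{\gamma-1} \, k_0^{-(\gamma-1)},
$$
the telescoping sum making this exact rather than merely an integral comparison. Plugging in $k_0 = K = n^{1/(\gamma-1)} \omega$ gives $\Prob(X \ge K) \le (1+o(1)) \delta^{\gamma-1} n^{-1} \omega^{-(\gamma-1)} = o(1/n)$ since $\omega \to \infty$ and $\delta$ is a constant. (One can even be cleaner: if $K \ge n^\zeta = D$ there is nothing to prove, so we may assume $K < D$ and the above applies directly; alternatively note $\omega$ can be taken to grow arbitrarily slowly, so WLOG $K < D$.)

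Then the union bound: letting $M$ denote the maximum degree of $\Ac$,
$$
\Prob(M \ge K) \le \sum_{i=1}^n \Prob(w_i \ge K) = n \cdot \Prob(X \ge K) = o(1),
$$
using independence of the $w_i$'s is not even needed here — only the union bound over marginals. Together with $M \le n^\zeta$ deterministically, this yields \whp\ $M \le \min(n^\zeta, n^{1/(\gamma-1)}\omega)$, as claimed.

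I do not expect a genuine obstacle in this lemma — it is a routine tail bound plus union bound. The only points requiring a little care are: (i) handling the parity correction (harmless, as it only decreases a degree); (ii) making sure the $o(1/n)$ is uniform, which it is because $\delta, \gamma$ are fixed constants and $\omega \to \infty$; and (iii) the bookkeeping that when $n^{1/(\gamma-1)}\omega \ge n^\zeta$ the statement reduces to the trivial deterministic bound, so one may freely assume $K < D$ when invoking the tail estimate. The slightly delicate-looking part, the asymptotics of $q_k$, is already done for us in~(\ref{eq:qk_small})--(\ref{eq:qk_large}).
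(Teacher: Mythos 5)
Your proof is correct and is essentially identical to the paper's: the paper computes the expected number of nodes of degree at least $K = n^{1/(\gamma-1)}\omega$ to be $\bigo(\omega^{-(\gamma-1)}) = o(1)$ and invokes the first moment method, which is precisely your union bound over the $n$ marginals combined with the same telescoping tail estimate for $q_k$. Your side remarks (parity correction is harmless, WLOG $K < D$) match the paper's preliminary reduction to the case $\zeta > 1/(\gamma-1)$.
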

\begin{proof}
Trivially, by definition, the maximum degree is at most $D = n^{\zeta}$. Hence, the desired property holds (deterministically) if $\zeta \le 1/(\gamma-1)$. Let us then assume that $\zeta > 1/(\gamma-1)$. Our goal is to show that \whp\ the maximum degree of $\Ac$ is at most $K:=n^{1/(\gamma-1)} \omega$. Note that the expected number of nodes with degree at least $K$ is equal to
$$
n \sum_{k=K}^{D} q_k = n \ \frac { \int_{K}^{D+1} x^{-\gamma} dx }{ \int_{\delta}^{D+1} x^{-\gamma} dx } = n \ \frac { K^{1-\gamma} - (D+1)^{1-\gamma} } { \delta^{1-\gamma} - (D+1)^{1-\gamma} } = \bigo( n \, K^{-(\gamma-1)} ) = \bigo( \omega^{-(\gamma-1)} ) = o(1).
$$
Hence, by the first moment method, \whp\ there is no node with degree at least $K$ and the proof of the lemma is finished.
\end{proof}

\medskip

Now, we are ready to investigate the degree distribution. 

\begin{lemma}\label{lem:degree_distribution}
Let $\gamma \in (2,3)$, $\delta \in \N$, and $\zeta \in (0,\frac{1}{\gamma-1}]$. 
For $k \in \N$, let $Y_k$ be the random variable counting the number of nodes in $\Ac$ that are of degree $k$.
For $k \in \N$ and $\eta = \eta(n)$, let $Y_k^{\eta}$ be the random variable counting the number of nodes in $\Ac$ that are of degree at least $k$ but at most $(1+\eta)k$.

The following properties hold \wep:
\begin{itemize}
\item[(a)] If $\zeta \in (0, \frac{1}{\gamma})$, then for any $k \in \N$ such that $\delta \le k \le n^{\zeta}$ we have
\begin{eqnarray}
Y_k &=& (1+\bigo( (\log n)^{-1} )) \ n q_k \nonumber \\
&=& (1+\bigo( (\log n)^{-1} )) \ n \left( k^{-(\gamma-1)} - (k+1)^{-(\gamma-1)} \right) \delta^{\gamma-1} \nonumber \\
&=& (1+\bigo( k^{-1}) + \bigo( (\log n)^{-1} )) \ n k^{-\gamma} (\gamma-1) \delta^{\gamma-1}. \label{eq:light_nodes}
\end{eqnarray}
\item[(b)] If $\zeta \in [\frac{1}{\gamma}, \frac{1}{\gamma-1})$, then for any $k \in \N$ such that $\delta \le k \le n^{1/\gamma} (\log n)^{-4/\gamma} \ll n^{\zeta}$ random variable $Y_k$ satisfies~(\ref{eq:light_nodes}), and for any $k \in \N$ such that $n^{1/\gamma} (\log n)^{-4/\gamma} \le k \le n^{\zeta}/(1+\eta)$ we have
\begin{eqnarray}
Y_k^{\eta} &=& (1+\bigo( (\log n)^{-1} )) \ n \eta k q_k \nonumber \\
&=& (1+\bigo( (\log n)^{-1} )) \ (\gamma-1) \delta^{\gamma-1} (\log n)^4, \label{eq:heavy_nodes}
\end{eqnarray}
where 
$$
\eta = \eta(k) = n^{-1} (\log n)^4 k^{\gamma - 1} = \bigo( (\log n)^{-1} ) = o(1).
$$
\item[(c)] If $\zeta = \frac{1}{\gamma-1}$, then for any $k \in \N$ such that $\delta \le k \le n^{1/\gamma} (\log n)^{-4/\gamma} \ll n^{\zeta}$ random variable $Y_k$ satisfies~(\ref{eq:light_nodes}), and for any $k \in \N$ such that $n^{1/\gamma} (\log n)^{-4/\gamma} \le k \le n^{\zeta} (\log n)^{-5/(\gamma-1)}$ random variable $Y_k^{\eta}$ satisfies~(\ref{eq:heavy_nodes}). The number of nodes of degree at least $n^{\zeta} (\log n)^{-5/(\gamma-1)}$ is equal to $\Theta( (\log n)^5 )$.
\end{itemize}
\end{lemma}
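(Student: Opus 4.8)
All three parts reduce to the same two ingredients: (i) each of the three quantities $Y_k$, $Y_k^{\eta}$, and ``number of nodes of degree $\ge K$'' is, up to an additive $\bigo(1)$ perturbation, a binomial random variable on $n$ trials, and (ii) in every regime its expectation is $\Omega((\log n)^4)$, so the Chernoff bound~(\ref{chern}) applied with $\eps=(\log n)^{-1}$ (or with a constant $\eps$ in part (c)'s top range) gives concentration \wep, and a union bound over the at most $n^{\zeta}\le n$ relevant values of $k$ yields all estimates simultaneously \wep\ (here we use the stated closure property of \wep\ events under polynomially many unions). The only role of the degree distribution's precise form is in the displayed chains of equalities: these are pure substitutions of the estimates~(\ref{eq:qk_small})--(\ref{eq:qk_large}) for $q_k$, noting that $n^{-\zeta(\gamma-1)}=o((\log n)^{-1})$ since $\zeta(\gamma-1)$ is a positive constant, so that relative error is absorbed by $\bigo((\log n)^{-1})$. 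The parity correction (decreasing one node's degree by one) changes each $Y_k$ and each $Y_k^{\eta}$ by at most $\bigo(1)$, negligible compared to the $(1+\bigo((\log n)^{-1}))$-factor once the relevant expectation tends to infinity, so I would simply treat the $n$ degrees as i.i.d.\ samples from $\Pc(\gamma,\delta,\zeta)$, giving $Y_k\sim\textrm{Bin}(n,q_k)$.

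For part (a) and the light-node ranges of (b) and (c): whenever $\delta\le k\le n^{1/\gamma}(\log n)^{-4/\gamma}$ we have $\E[Y_k]=nq_k=\Theta(nk^{-\gamma})=\Omega((\log n)^4)$ (and in part (a), where $\zeta<1/\gamma$ forces $k\le n^{\zeta}$ with $nq_k\ge n^{1-\zeta\gamma}=n^{\Omega(1)}$, this is polynomial in $n$). Hence~(\ref{chern}) with $\eps=(\log n)^{-1}$ gives $\Prob(|Y_k-\E Y_k|\ge\eps\,\E Y_k)\le 2\exp(-\Omega((\log n)^2))$, and~(\ref{eq:light_nodes}) follows after a union bound and substitution of the $q_k$ estimates.

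For the heavy-node (bucketed) ranges: the computation I would carry out is $\E[Y_k^{\eta}]=n\sum_{j=k}^{\lfloor(1+\eta)k\rfloor}q_j$, which telescopes (using~(\ref{eq:qk_small})) to $(1+o(1))n\delta^{\gamma-1}(k^{1-\gamma}-((1+\eta)k)^{1-\gamma})=(1+o(1))n\delta^{\gamma-1}k^{1-\gamma}(1-(1+\eta)^{1-\gamma})$; since $\eta=o(1)$ we have $1-(1+\eta)^{1-\gamma}=(\gamma-1)\eta(1+o(1))$, so $\E[Y_k^{\eta}]=(1+o(1))\,n\eta k\,q_k$, and substituting $\eta=\eta(k)=n^{-1}(\log n)^4k^{\gamma-1}$ collapses this to $(1+o(1))(\gamma-1)\delta^{\gamma-1}(\log n)^4$, uniformly over the range. (I expect the one point needing care here: near the lower endpoint $\eta k$ is of order $1$, so the discrete bucket $[k,(1+\eta)k]$ spans only a bounded number of integers and the first line of~(\ref{eq:heavy_nodes}) holds only up to a bounded multiplicative factor; this does not affect the conclusion $Y_k^{\eta}=\Theta((\log n)^4)$ nor the concentration, and it is why the range starts exactly at $k=n^{1/\gamma}(\log n)^{-4/\gamma}$, the least $k$ with $\eta(k)k\ge 1$.) One also checks $\eta(k)=\bigo((\log n)^{-1})$ throughout: in (b), $\eta\le n^{-1+\zeta(\gamma-1)}(\log n)^4=o(1)$ because $\zeta(\gamma-1)<1$, which is why the range may run to $k\le n^{\zeta}/(1+\eta)$; in (c), $\zeta(\gamma-1)=1$ forces $\eta=(\log n)^4 k^{\gamma-1}/n$, which equals exactly $(\log n)^{-1}$ at $k=n^{\zeta}(\log n)^{-5/(\gamma-1)}$, pinning down the upper endpoint. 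Since $\E[Y_k^{\eta}]=\Omega((\log n)^4)$, inequality~(\ref{chern}) with $\eps=(\log n)^{-1}$ again gives \wep\ concentration, and a union bound over the $\le n$ values of $k$ establishes~(\ref{eq:heavy_nodes}).

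Finally, in part (c), the number $N$ of nodes of degree at least $K:=n^{\zeta}(\log n)^{-5/(\gamma-1)}$ is (up to an $\bigo(1)$ perturbation) $\textrm{Bin}(n,p)$ with $p=\sum_{k\ge K}q_k=(1+o(1))\delta^{\gamma-1}K^{1-\gamma}=(1+o(1))\delta^{\gamma-1}n^{-1}(\log n)^5$, the tail term $(D+1)^{1-\gamma}$ being negligible since $D=n^{1/(\gamma-1)}\gg K$; thus $\E[N]=\Theta((\log n)^5)$, and~(\ref{chern}) with, say, $\eps=1/2$ shows $N=\Theta((\log n)^5)$ \wep. The main obstacle is not conceptual but bookkeeping: confirming that $n^{-\zeta(\gamma-1)}$, $\bigo(k^{-1})$, the parity fix, and the discreteness of the buckets are all dominated by the $\bigo((\log n)^{-1})$ error, and verifying that the stated endpoints of the $k$-ranges are precisely the transition points of the argument.
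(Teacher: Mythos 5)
Your proposal is correct and follows essentially the same route as the paper's proof: treat the $n$ degrees as i.i.d.\ samples from $\Pc(\gamma,\delta,\zeta)$ (the parity fix being an $\bigo(1)$ perturbation), split at $K=n^{1/\gamma}(\log n)^{-4/\gamma}$ into light and heavy nodes, verify that each of $\E[Y_k]$, $\E[Y_k^\eta]$, and (for part~(c)) the expected number of degree-$\ge n^{\zeta}(\log n)^{-5/(\gamma-1)}$ nodes is $\Omega((\log n)^4)$, apply Chernoff with $\eps=(\log n)^{-1}$, and union-bound over the $\le n$ relevant $k$ via the stated \wep\ closure property. Your side remark about the lower endpoint of the heavy range is a fair nitpick: at $k$ with $\eta(k)k=\Theta(1)$ the bucket contains a bounded number of integers and $\E[Y_k^\eta]$ differs from $n\eta k q_k$ by a bounded factor rather than a $1+\bigo((\log n)^{-1})$ factor, so the displayed chain in~(\ref{eq:heavy_nodes}) is really a $\Theta$-statement near $k=K$; the paper tacitly acknowledges this (``only a few values are considered for $k=K$'') and, as you note, it does not affect the conclusion $Y_k^\eta=\Theta((\log n)^4)$ or any downstream use.
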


Before we move to the proof of this lemma, let us mention the following straightforward corollary. Recall that for a given set of nodes $A \subseteq V$, the \emph{volume} of $A$ is defined as follows: 
$$
\vol(A) = \sum_{v \in A} \deg(v).
$$
In particular, $\vol(V) = 2|E|$.

\begin{corollary}\label{cor:volume_of_A}
The volume of all nodes in $\Ac$ is \wep\ equal to
$$
\vol(V) = \sum_{k = \delta}^{D} k Y_k = (1+\bigo( (\log n)^{-1} )) \ d n, \hspace{1cm} \text{ where } d := \sum_{k = \delta}^{D} \ k q_k.
$$
\end{corollary}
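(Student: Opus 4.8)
The starting observation is the exact identity $\vol(V)=\sum_{k=\delta}^{D}k\,Y_k$, which holds by the very definition of the $Y_k$; thus the statement follows by summing the estimates for the $Y_k$ provided by Lemma~\ref{lem:degree_distribution}. The plan is to split the degree range at the threshold $k_0:=n^{1/\gamma}(\log n)^{-4/\gamma}$ used in that lemma, and to bound separately the contribution of ``light'' degrees $\delta\le k\le k_0$ and of ``heavy'' degrees $k>k_0$ (in case (a), i.e.\ $\zeta<1/\gamma$, one has $n^{\zeta}<k_0$ for large $n$, so all degrees are light and the heavy part is empty).

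For the light part, Lemma~\ref{lem:degree_distribution} gives that \wep, simultaneously for all such $k$ (a union over polynomially many events, hence still \wep), $Y_k=(1+\bigo((\log n)^{-1}))\,nq_k$, so $\sum_{\delta\le k\le k_0}k\,Y_k=(1+\bigo((\log n)^{-1}))\,n\sum_{\delta\le k\le k_0}k\,q_k$. Since $q_k=\Theta(k^{-\gamma})$ with $\gamma\in(2,3)$, the quantity $d=\sum_{k=\delta}^{D}k\,q_k$ is bounded between $\delta q_\delta=\Omega(1)$ and $\bigo(\sum_{k\ge\delta}k^{1-\gamma})=\bigo(1)$, i.e.\ $d=\Theta(1)$, and its tail obeys $\sum_{k>k_0}k\,q_k=\bigo(k_0^{2-\gamma})=\bigo(n^{(2-\gamma)/\gamma}(\log n)^{4(\gamma-2)/\gamma})=o((\log n)^{-1})$ because $2-\gamma<0$. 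Hence $\sum_{\delta\le k\le k_0}k\,q_k=d\,(1+o((\log n)^{-1}))$ and the light part equals $(1+\bigo((\log n)^{-1}))\,d\,n$.

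For the heavy part (cases (b) and (c)), I would tile the interval $(k_0,D']$---where $D'=D$ in case (b) and $D'=n^{\zeta}(\log n)^{-5/(\gamma-1)}$ in case (c)---by consecutive blocks $[k_j,(1+\eta_j)k_j)$ with $\eta_j=\eta(k_j)=n^{-1}(\log n)^4k_j^{\gamma-1}$, so that Lemma~\ref{lem:degree_distribution} controls each $Y_{k_j}^{\eta_j}$; there are only $\mathrm{poly}(n)$ blocks, so \wep all of these estimates hold at once. Bounding $k$ by $(1+\eta_j)k_j$ inside block $j$ gives $\sum_{k_0<k\le D'}k\,Y_k\le(1+o(1))\,n\sum_j\eta_j k_j^2\,q_{k_j}$; using $q_{k_j}=\Theta(k_j^{-\gamma})$ and the value of $\eta_j$, the summand is $\Theta(n^{-1}(\log n)^4 k_j)$, and a standard comparison of $\sum_j k_j$ with $n(\log n)^{-4}\int_{k_0}^{D'}k^{1-\gamma}\,dk=\Theta(n(\log n)^{-4}k_0^{2-\gamma})$ (the integral being dominated near $k_0$ since $1-\gamma<-1$) yields $\sum_j k_j=\bigo(n^{2/\gamma}(\log n)^{\bigo(1)})$. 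Consequently the heavy part is $\bigo(n^{2/\gamma}(\log n)^{\bigo(1)})=o(n/\log n)=\bigo((\log n)^{-1})\,d\,n$, using $2/\gamma<1$ and $d=\Theta(1)$. In case (c) there is, in addition, a small set of nodes of degree at least $n^{\zeta}(\log n)^{-5/(\gamma-1)}$; \wep there are $\Theta((\log n)^5)$ of them, and each has degree at most $D=n^{1/(\gamma-1)}$, so their total contribution to $\vol(V)$ is at most $\bigo(n^{1/(\gamma-1)}(\log n)^5)=o(n/\log n)$ since $1/(\gamma-1)<1$. Adding the light, heavy, and (in case (c)) very-heavy contributions, and taking a union bound over the polynomially many \wep events, gives $\vol(V)=(1+\bigo((\log n)^{-1}))\,d\,n$.

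I expect the heavy part to be the only real obstacle: one must set up the geometric blocking so that the $Y_k^{\eta}$ estimates of Lemma~\ref{lem:degree_distribution} genuinely tile the heavy range, and then check---via the integral estimate of $\sum_j k_j$---that the heavy (and, in case (c), very-heavy) degrees contribute only $o(n/\log n)$, rather than merely $o(n)$, both to $\vol(V)$ and to $d\,n$; this is exactly where the standing assumption $\zeta\le 1/(\gamma-1)$ (equivalently $2/\gamma<1$ and $1/(\gamma-1)<1$) enters. An alternative route, if one is willing to invoke a Bernstein-type inequality (not among the tools stated above), is to apply it directly to $\vol(V)=\sum_{i}w_i$: the $w_i$ are i.i.d., bounded by $D=n^{\zeta}$, and have per-variable variance $\bigo(n^{\zeta(3-\gamma)})$, and the resulting tail bound is \wep precisely because $1-\zeta>0$.
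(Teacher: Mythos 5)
The paper treats this corollary as a straightforward consequence of Lemma~\ref{lem:degree_distribution} and offers no written proof, but the route you take---split at $k_0=n^{1/\gamma}(\log n)^{-4/\gamma}$, sum the exact $Y_k$ estimates over the light range, tile the heavy range by the geometric blocks $[k_j,(1+\eta(k_j))k_j)$ so that the $Y_k^\eta$ estimates apply, and then use $2-\gamma<0$, $2/\gamma<1$, $1/(\gamma-1)<1$ to show the heavy and very-heavy contributions are $o(n/\log n)$---is precisely the mechanism the paper itself deploys to prove the analogous volume estimate~(\ref{eq:sum_communities}) inside Lemma~\ref{lem:communities}, so you have correctly reconstructed the intended argument (the only implicit point being that the $1+\bigo((\log n)^{-1})$ factor in Lemma~\ref{lem:degree_distribution} is uniform in $k$, which holds because Chernoff is always applied there with the fixed parameter $\eps=(\log n)^{-1}$). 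Your closing remark about applying a Bernstein-type bound directly to $\sum_i w_i$ is a genuinely different and arguably shorter route that the paper does not use, and it would indeed give the \wep\ conclusion since the exponent $\Theta(n^{1-\zeta}/\log n)$ dominates $(\log n)^2$; the tradeoff is that it bypasses the per-$k$ concentration in Lemma~\ref{lem:degree_distribution}, which the paper needs anyway for later arguments.
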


Unfortunately, there is no closed formula for a constant $d$ but one can easily approximate it numerically and obtain some theoretical bounds. For example, note that
\begin{eqnarray*}
d &=& \sum_{k = \delta}^{D} \ k \frac { \int_{k}^{k+1} x^{-\gamma} dx }{ \int_{\delta}^{D+1} x^{-\gamma} dx } \le \sum_{k = \delta}^{D} \ \frac { \int_{k}^{k+1} x^{1-\gamma} dx }{ \int_{\delta}^{D+1} x^{-\gamma} dx } = \frac { \int_{\delta}^{D+1} x^{1-\gamma} dx }{ \int_{\delta}^{D+1} x^{-\gamma} dx } = \frac { (\delta^{2-\gamma} - (D+1)^{2-\gamma} ) / (\gamma-2) } { (\delta^{1-\gamma} - (D+1)^{1-\gamma} ) / (\gamma-1) } \\
&=& (1+\bigo( n^{-\zeta(\gamma-2)} )) \ \delta \ \frac {\gamma-1}{\gamma-2}.
\end{eqnarray*}
On the other hand, 
\begin{eqnarray*}
d &=& \sum_{k = \delta}^{D} \ k \frac { \int_{k}^{k+1} x^{-\gamma} dx }{ \int_{\delta}^{D+1} x^{-\gamma} dx } \ge \sum_{k = \delta}^{D} \ \frac {k}{k+1} \cdot \frac { \int_{k}^{k+1} x^{1-\gamma} dx }{ \int_{\delta}^{D+1} x^{-\gamma} dx } \ge \frac {\delta}{\delta+1} \cdot \frac { \int_{\delta}^{D+1} x^{1-\gamma} dx }{ \int_{\delta}^{D+1} x^{-\gamma} dx } \\
&=& (1+\bigo( n^{-\zeta(\gamma-2)} )) \ \frac {\delta^2}{\delta+1} \cdot \frac {\gamma-1}{\gamma-2}.
\end{eqnarray*}

\begin{proof}[Proof of Lemma~\ref{lem:degree_distribution}]
For now, suppose that the degree sequence is simply a sequence of $n$ independent random variables, each of them following power-law distribution $\Pc(\gamma, \delta, \zeta)$. In the end, the degree of a node of the largest degree might possibly be decreased by one to make sure the sum of degrees is even. Clearly, this small adjustment will not affect out asymptotic bounds.

We will call a node \emph{light} if its degree is at most $K:=n^{1/\gamma} (\log n)^{-4/\gamma}$; otherwise, it will be called \emph{heavy}. Note that if $\zeta < 1/\gamma$, then trivially all nodes are light. On the other hand, if $\zeta \ge 1/\gamma$, then (as will be shown soon) \whp\ the maximum degree is equal to $(1+o(1)) n^{\zeta}$ and so we will have a mixture of light and heavy nodes. We need to make a distinction because for a given value of $k$, $\delta \le k \le K$, there are plenty of light nodes with their degrees are equal to $k$, that is, we will show that  \wep\ $Y_k > 0$. Unfortunately, for a given value of $k$, $K < k \le n^{\zeta}$, it might happen (in fact, it is quite often the case) that there will be no heavy node that has its degree equal to $k$, that is, $Y_k = 0$. In order to solve this problem, we will need to group heavy nodes together and consider them in batches, that is, consider $Y_k^{\eta}$ instead of $Y_k$. 

We start with light nodes. Let us fix $k \in \N$ such that $\delta \le k \le K$. Using~(\ref{eq:qk_large}), we get that 
$$
\E [ Y_k ] = n q_k = \Theta( n k^{-\gamma} ) = \Omega( n K^{-\gamma} ) = \Omega( (\log n)^4 ).
$$
After applying Chernoff's bound~(\ref{chern}) with $\eps = (\log n)^{-1}$ we get that \wep\ 
$$
Y_k = (1+\bigo( (\log n)^{-1} )) \E [ Y_k ] = (1+\bigo( (\log n)^{-1} )) n q_k.
$$
The other two equalities in~(\ref{eq:light_nodes}) follow immediately from~(\ref{eq:qk_small}) and~(\ref{eq:qk_large}). This finishes part~(a) and the first half of part~(b).

We move now to heavy nodes. Assume that $\zeta \ge 1/\gamma$ and let us fix $k \in \N$ such that $K \le k \le n^{\zeta}/(1+\eta)$. If $\zeta=1/(\gamma-1)$ (part~(c)), then we additionally assume that $k \le n^{\zeta} (\log n)^{-5/(\gamma-1)}$. (We will independently deal with nodes of degrees that are very close to the maximum degree.)
Recall that
$$
\eta = \eta(k) = n^{-1} (\log n)^4 k^{\gamma - 1}.
$$
Recall also that random variable $Y_k^{\eta}$ counts a \emph{batch} of nodes with degrees at least $k$ but at most $(1+\eta)k$. Note that $\eta(k)$ is an increasing function of $k$. Moreover, $\eta(K) = 1/K$ so only a few values are considered for $k=K$ (note that $\eta(K)K = 1$). On the other extreme, $\eta(n^{\zeta}) = n^{-1+\zeta(\gamma-1)} (\log n)^4$ which is $\bigo( (\log n)^{-1} ) = o(1)$, provided that $\zeta < 1/(\gamma-1)$.  If $\zeta = 1/(\gamma-1)$ (part~(c)), then additional upper bound for $k$ also guarantees that at the extreme case $\eta(n^{\zeta} (\log n)^{-5/(\gamma-1)}) = (\log n)^{-1} = o(1)$. In any case, if $\eta = \bigo( (\log n)^{-1} )$, then all nodes in the batch have degrees equal to $(1+\bigo(\eta)) k = (1+\bigo( (\log n)^{-1} )) k$. Using~(\ref{eq:qk_large}) as before, we get that 
$$
\E [ Y_k^{\eta} ] = n \sum_{i=k}^{(1+\eta)k } q_i = \Theta \left( n (\eta k) k^{-\gamma} \right) = \Theta( (\log n)^4 ).
$$
After applying Chernoff's bound~(\ref{chern}) with $\eps = (\log n)^{-1}$ we get that \wep\ 
$$
Y_k^{\eta} = (1+O( (\log n)^{-1} )) \E [ Y_k^{\eta} ] = (1+\bigo( (\log n)^{-1} )) \ n \eta k q_k.
$$
The second equality in~(\ref{eq:heavy_nodes}) follows immediately from~(\ref{eq:qk_large}). This finishes the second half of part~(b).

To finish part~(c), it remains to concentrate on the case $\zeta = 1/(\gamma-1)$ and deal with nodes of degrees that are very close to the maximum degree. Since the expected number of nodes of degree at least $n^{\zeta} (\log n)^{-5/(\gamma-1)}$ is of order
$$
n \left( n^{\zeta} (\log n)^{-5/(\gamma-1)} \right)^{1-\gamma} = (\log n)^{5},
$$
we may apply Chernoff's bound~(\ref{chern}) with $\eps = (\log n)^{-1}$ for the last time to get the desired concentration. The proof of the lemma is finished.
\end{proof}

\subsubsection*{Simulation Corner}

In order to see whether asymptotic results can be used to predict the behaviour for relatively small values of $n$, we generated the degree distributions for two \ABCD\ graphs $\Ac$ on $n=1{,}000$ and, respectively, $n=1{,}000{,}000$ nodes. In both cases, we used parameters $\gamma = 2.5$, $\delta = 5$, and $\zeta = 1/2 < 2/3 = 1/(\gamma-1)$ (that is, $D=\sqrt{n}$). On Figure~\ref{fig:degree} we plot the complement of the cumulative degree distribution (that is, the fraction of nodes of degree at least $K$) and compare it with asymptotic, theoretical predictions, namely, function
$$
\sum_{k=K}^{D} q_k = \frac { \int_{K}^{D+1} x^{-\gamma} dx }{ \int_{\delta}^{D+1} x^{-\gamma} dx } = \frac { K^{1-\gamma} - (D+1)^{1-\gamma} } { \delta^{1-\gamma} - (D+1)^{1-\gamma} }.
$$
We observe almost perfect agreement, especially for a larger graph. As mentioned earlier, the theoretical distribution uses the continuous model whereas when generating \ABCD\ we used the discrete distribution. As a consequence, we see a slight deviation between the curves indicating that the theoretical continuous model generates slightly larger node degrees.
\begin{figure}[ht]
     \centering
     \includegraphics[width=0.48\textwidth]{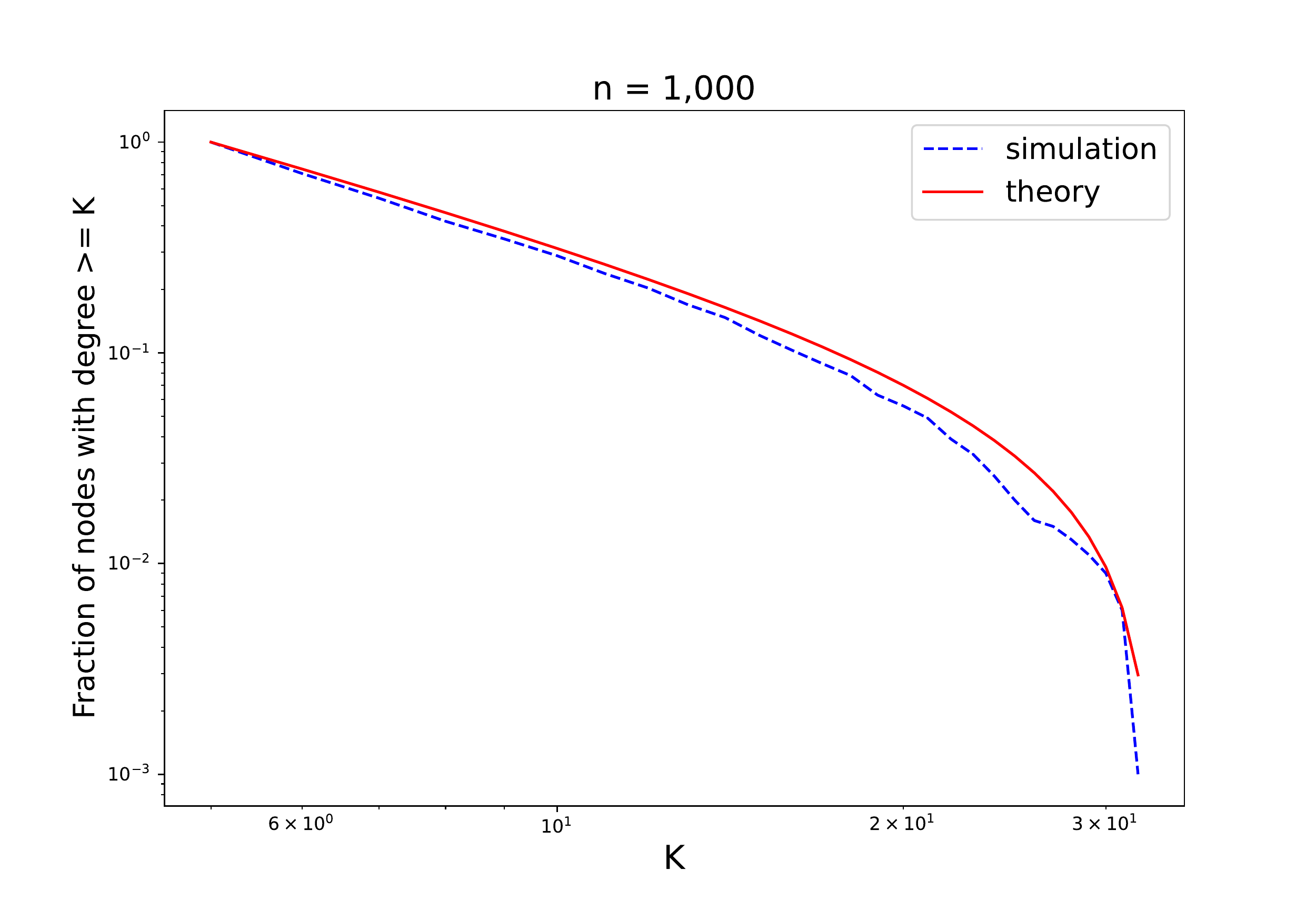}
         \hspace{.1cm}
     \includegraphics[width=0.48\textwidth]{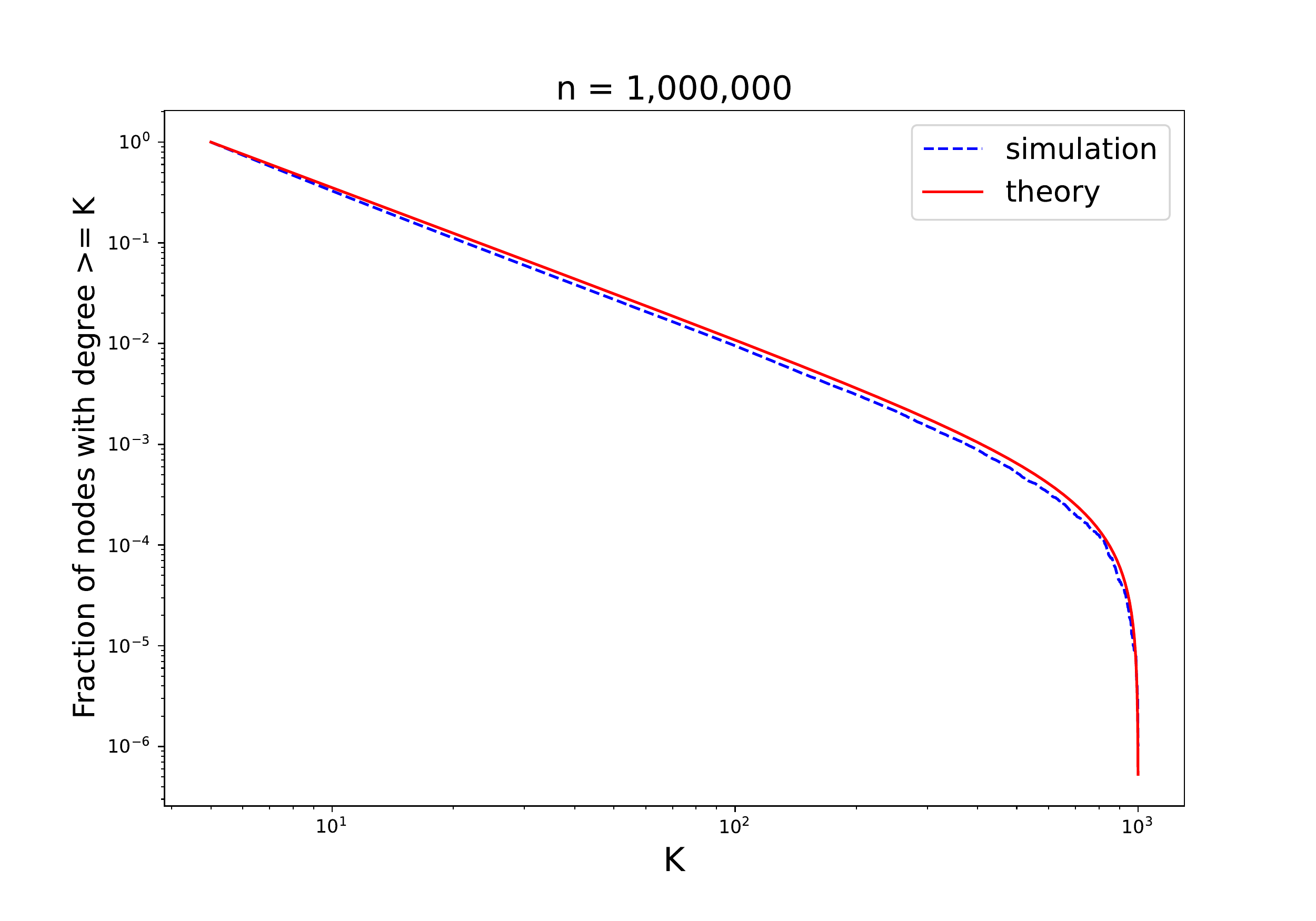}
     \caption{Complement of cumulative degree distribution for small ($n=1{,}000$; left plot) and large ($n=1{,}000{,}000$; right plot) graphs using the following parameters: $\gamma = 2.5$, $\delta = 5$, and $\zeta = 1/2$.}
     \label{fig:degree}
 \end{figure}

Our next experiment investigates how well Corollary~\ref{cor:volume_of_A} predicts the volume of $\Ac$ in practice. For each value of $n = 1000 \cdot 2^i$, $i \in \{0, 1, \ldots,15\}$, we independently generated 100 graphs with the same parameters as in the first experiment. On Figure~\ref{fig:degree_ratio} we present the average value and the standard deviation of $\vol(V)$. We use two different scalings: $dn$, the theoretical asymptotic prediction using continuous power-law distribution, and $\hat{d}n$ where $\hat{d} = \sum_{k = \delta}^{D} \ k r_k$ (see~(\ref{eq:power-law-discrete}) for a definition of $r_k$), the discrete counterpart of~$d$. Since discrete distribution is used by default by the \ABCD\ generator, in line what we observed in Figure \ref{fig:degree}, the continuous prediction is a little bit off but the discrete one works well, especially for large graphs. 

\begin{figure}[ht]
     \centering
     \includegraphics[width=0.48\textwidth]{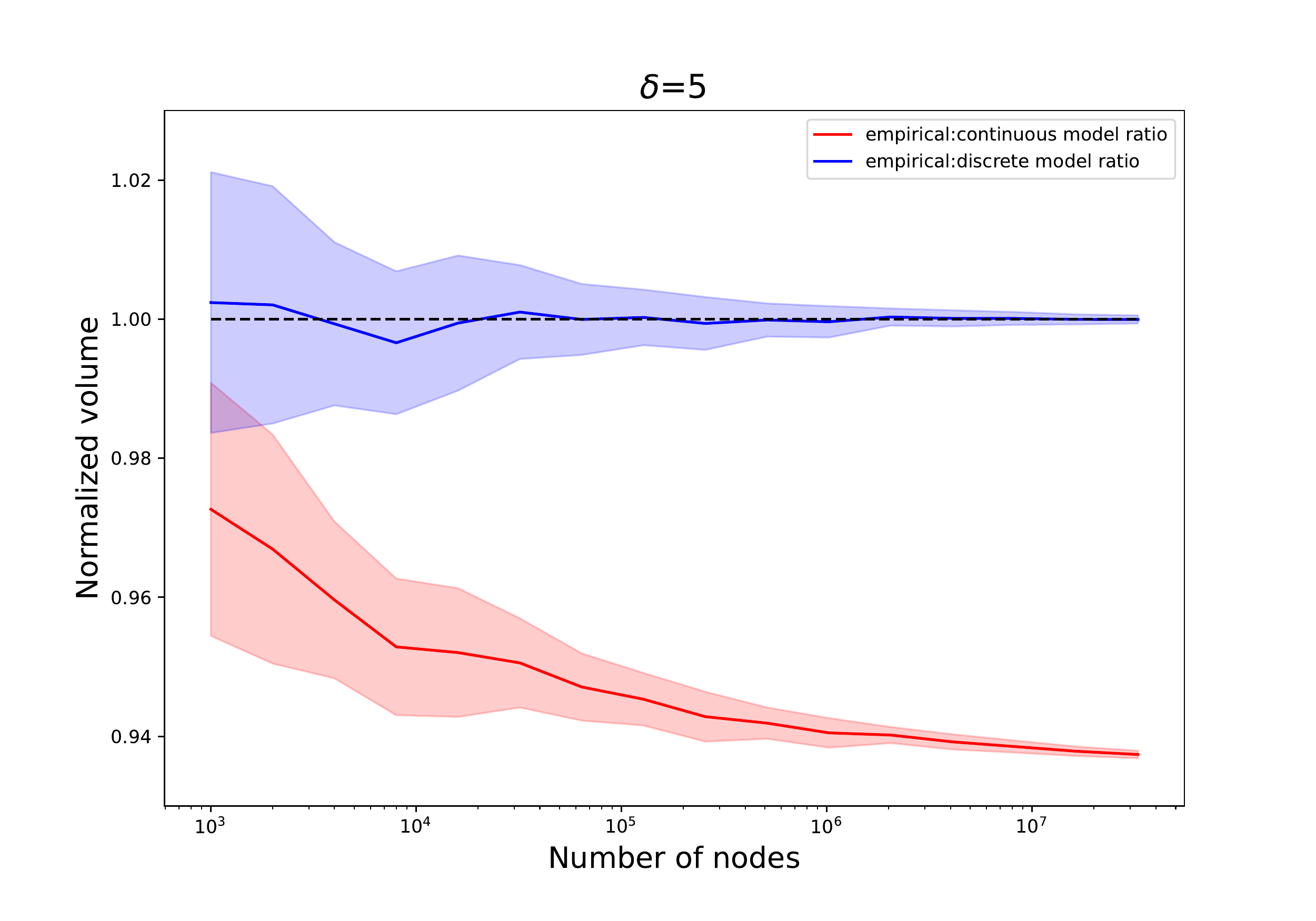}
        \hspace{.1cm}
     \includegraphics[width=0.48\textwidth]{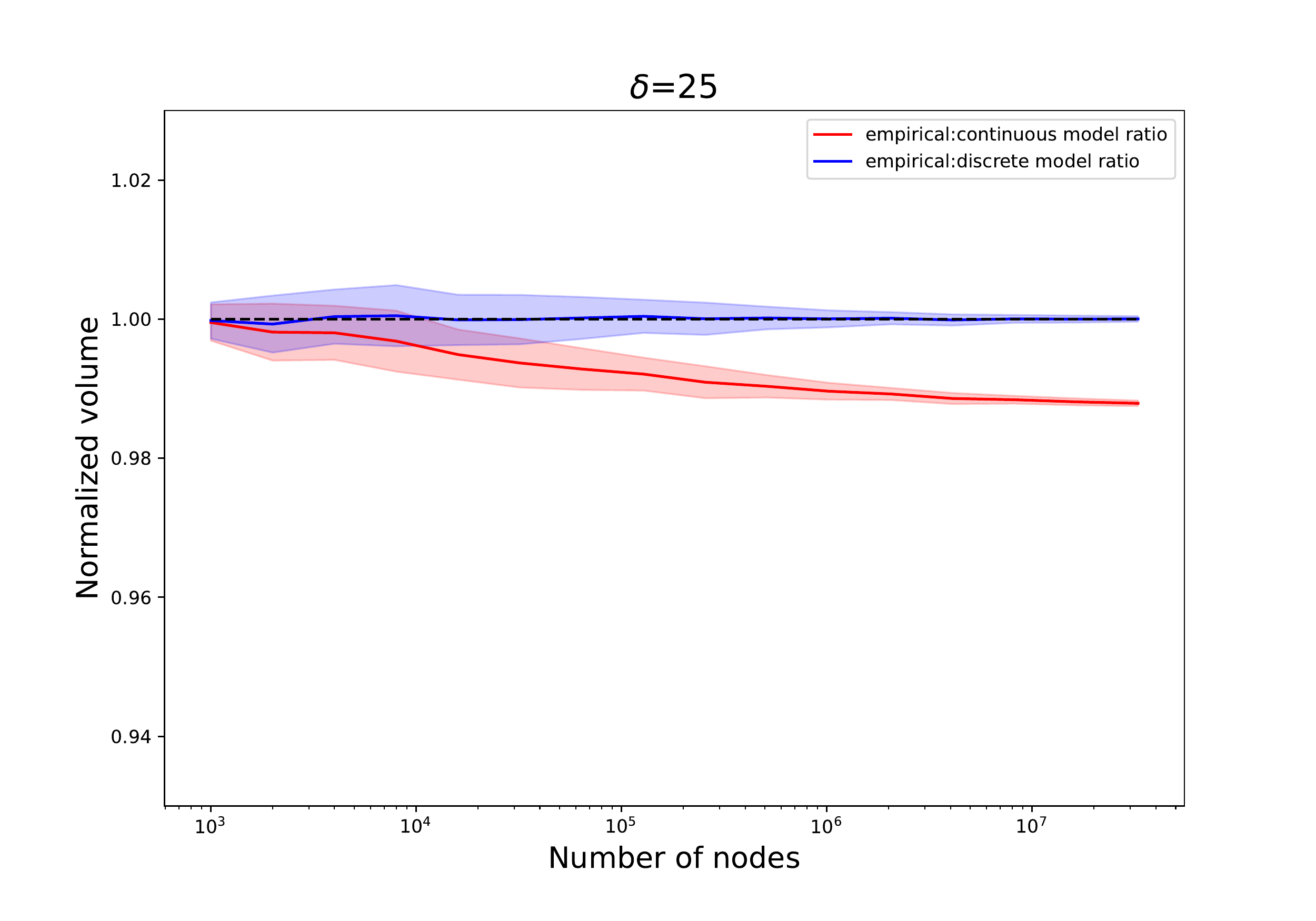}
     \caption{The average volume of 100 independently generated graphs; shaded area represents the standard deviation. Both quantities were normalized by the theoretical prediction for continuous model (red curve) and discrete one (blue curve). The dashed line at 1 corresponds to a perfect prediction. Parameters used: $\gamma = 2.5$, $\delta = 5$, and $\zeta = 1/2$ (left plot), and $\gamma = 2.5$, $\delta = 25$, and $\zeta = 1/2$ (right plot).}
     \label{fig:degree_ratio}
 \end{figure}
 
%%%%%%%%%%%%%%%%%%%%%%%%%%%%%%%%%%%%%%%%%%%%%%%%%%%%%%%%%%%
\subsection{Distribution of Community Sizes}

Let $\beta \in (1,2)$, $s \in \N$, and $\tau \in (\zeta,1)$. Recall that community sizes of the \ABCD\ model are generated randomly following the (truncated) \emph{power-law distribution} $\Pc(\beta, s, \tau)$ with exponent $\beta$, minimum value $s$, and maximum value $S = n^{\tau}$. More precisely, if $X \in \Pc(\beta, s, \tau)$, then after following exactly the same computation as in~(\ref{eq:qk_small}) and~(\ref{eq:qk_large}) we get that for any $k \in \{ s, s+1, \ldots, S\}$,
\begin{eqnarray}
p_k &=& \Pr( X = k ) = \frac { \int_{k}^{k+1} x^{-\beta} dx }{ \int_{s}^{S+1} x^{-\beta} dx } = \frac { k^{1-\beta} - (k+1)^{1-\beta} } { s^{1-\beta} - (S+1)^{1-\beta} } \nonumber \\
&=& (1+\bigo(n^{-\tau (\beta-1)})) \left( k^{-(\beta-1)} - (k+1)^{-(\beta-1)} \right) s^{\beta-1} \label{eq:pk_small} \\
%&=& (1+O(n^{-\tau (\beta-1)})) \ k^{-(\beta-1)} \left( 1 - (1+1/k)^{-(\beta-1)} \right) s^{\beta-1} \nonumber \\
%&=& (1+O(n^{-\tau (\beta-1)})) \ k^{-(\beta-1)} \left( 1 - (1-(\beta-1)/k+O(1/k^2)) \right) s^{\beta-1} \nonumber \\
%&=& (1+O(n^{-\tau (\beta-1)})+O(k^{-1})) \ k^{-(\beta-1)} \frac { \beta-1 }{ k } s^{\beta-1} \nonumber \\
&=& (1+\bigo(n^{-\tau (\beta-1)})+\bigo(k^{-1})) \ k^{-\beta} (\beta-1) s^{\beta-1}.  \label{eq:pk_large}
\end{eqnarray}

Community sizes of \ABCD\ are generated randomly and independently, each of them following power-law distribution $\Pc(\beta, s, \tau)$. 
Our first task is to investigate how many communities there are in a graph on $n$ nodes, and how their sizes are distributed. It is slightly easier to fix the number of communities, generate their sizes independently, and check how large graph they span. Once we establish this, we will simply inverse the process, fix the number of nodes of the graph to be $n$ and compute the number of communities~$\ell$ that need to be generated to reach the desired number of nodes.

The assumption that $\tau > \zeta$ is introduced to make sure large degree nodes have large enough communities to be assigned to. We do not need this assumption in the first lemma as it is concerned with a sequence of random variables, not the \ABCD\ model. 

\begin{lemma}\label{lem:communities}
Let $\beta \in (1,2)$, $s \in \N$, and $\tau \in (0,1)$. 
Let $$\ell = \ell(n) = c n^{1-\tau (2-\beta)}$$ for some function $c = c(n) \to \hat{c}$, where $\hat{c} \in \R_+$.
Let $(X_1, X_2, \ldots, X_\ell)$ be a sequence of independent random variables, $X_i \in \Pc(\beta, s, \tau)$ for any $i \in [\ell]$. 
For $k \in \N$, let $Y_k$ be the random variable counting the number of variables $X_i$ that are equal to $k$.
For $k \in \N$ and $\eta=\eta(n)$, let $Y_k^{\eta}$ be the random variable counting the number of variables $X_i$ that are at least $k$ but at most $(1+\eta)k$. 

The following properties hold \wep:
\begin{itemize}
\item[(a)] If $\tau \in (0,1/2)$, then for any $k \in \N$ such that $s \le k \le n^{\tau}$ we have
\begin{eqnarray}
Y_k &=& (1+\bigo( (\log n)^{-1} )) \ \ell p_k \nonumber \\
&=& (1+\bigo( (\log n)^{-1} )) \ c n^{1-\tau (2-\beta)} \left( k^{-(\beta-1)} - (k+1)^{-(\beta-1)} \right) s^{\beta-1} \nonumber \\
&=& (1+\bigo( k^{-1}) + \bigo( (\log n)^{-1} )) \ c n^{1-\tau (2-\beta)} k^{-\beta} (\beta-1) s^{\beta-1}. \label{eq:light_nodes_comm}
\end{eqnarray}
\item[(b)] If $\tau \in [1/2, 1)$, then for any $k \in \N$ such that $s \le k \le n^{\tau - (2\tau - 1)/\beta} (\log n)^{-4/\beta} \ll n^{\tau}$ random variable $Y_k$ satisfies~(\ref{eq:light_nodes_comm}),
and for any $k \in \N$ such that $n^{\tau - (2\tau - 1)/\beta} (\log n)^{-4/\beta} \le k \le n^{\tau}/(1+\delta)$ we have
\begin{eqnarray*}
Y_k^{\eta} &=& (1+\bigo( (\log n)^{-1} )) \ \ell \eta k p_k \\
&=& (1+\bigo( (\log n)^{-1} )) \ c (\beta-1) s^{\beta-1} (\log n)^4,
\end{eqnarray*}
where 
$$
\eta = \eta(k) = n^{\tau(2-\beta)-1} (\log n)^4 k^{\beta - 1} \le n^{-(1-\tau)} (\log n)^4 = \bigo( (\log n)^{-1} ) = o(1).
$$
\end{itemize}
In any case, 
\begin{equation}\label{eq:sum_communities}
\sum_{i=1}^{\ell} X_i = (1+\bigo( (\log n)^{-1} )) \ c n s^{\beta-1} \ \frac { \beta-1 } {2-\beta} = (1+o(1)) n,
\end{equation}
provided
$$
\hat{c} = \frac {2-\beta}{(\beta-1) s^{\beta-1}}.
$$
\end{lemma}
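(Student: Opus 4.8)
The plan is to mirror the proof of Lemma~\ref{lem:degree_distribution} almost verbatim, with the exponent $\gamma\in(2,3)$ replaced by $\beta\in(1,2)$ and the number of samples $n$ replaced by $\ell=cn^{1-\tau(2-\beta)}$, and then to add the computation of the sum~(\ref{eq:sum_communities}) at the end. As before I would call $X_i$ \emph{light} if $X_i\le K$ and \emph{heavy} otherwise, where the threshold is now
\[
K:=n^{\tau-(2\tau-1)/\beta}(\log n)^{-4/\beta}=n^{(1-\tau(2-\beta))/\beta}(\log n)^{-4/\beta},
\]
chosen so that $\E[Y_K]=\ell p_K=\Theta((\log n)^4)$. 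Note that $K\gg n^{\tau}$ precisely when $\tau<1/2$, which is exactly why case~(a) has no heavy variables.

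For a light value $s\le k\le K$, equation~(\ref{eq:pk_large}) gives $\E[Y_k]=\ell p_k=\Theta(\ell\,k^{-\beta}s^{\beta-1})=\Omega((\log n)^4)$, so Chernoff's bound~(\ref{chern}) with $\eps=(\log n)^{-1}$ shows $Y_k=(1+\bigo((\log n)^{-1}))\ell p_k$ with error probability $\exp(-\Omega((\log n)^2))$; a union bound over the polynomially many admissible $k$ gives this \wep, which is part~(a) and the light half of part~(b). For the heavy values (only relevant when $\tau\ge 1/2$) I would, exactly as in Lemma~\ref{lem:degree_distribution}, group variables into batches $[k,(1+\eta(k))k]$ with $\eta(k)=n^{\tau(2-\beta)-1}(\log n)^4k^{\beta-1}$. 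The two required sanity checks are that $\eta(K)K=1$ (so at the bottom of the heavy range batches are trivial) and that $\eta$ is maximised at $k=n^{\tau}$ with $\eta(n^{\tau})=n^{-(1-\tau)}(\log n)^4=\bigo((\log n)^{-1})=o(1)$ since $\tau<1$ (so all variables in any batch agree up to a $(1+\bigo((\log n)^{-1}))$ factor). Then $\E[Y_k^{\eta}]=\ell\sum_{i=k}^{(1+\eta)k}p_i=\Theta(\ell\,\eta k^{1-\beta}s^{\beta-1})=\Theta(\ell\,n^{\tau(2-\beta)-1}(\log n)^4 s^{\beta-1})=\Theta((\log n)^4)$ after substituting $\ell=cn^{1-\tau(2-\beta)}$, and Chernoff~(\ref{chern}) with $\eps=(\log n)^{-1}$ plus a union bound over the polynomially many batches finishes the heavy half of part~(b) \wep.

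For the sum, write $\sum_{i=1}^{\ell}X_i=\sum_k kY_k$ and condition on the event, which holds \wep, that all the estimates above are valid simultaneously. The light range contributes $(1+\bigo((\log n)^{-1}))\,\ell\sum_{s\le k\le K}kp_k$, while the batch $[k,(1+\eta(k))k]$ contributes a quantity between $kY_k^{\eta}$ and $(1+\eta(k))kY_k^{\eta}$, hence $(1+\bigo((\log n)^{-1}))\,\ell\,\eta(k)k^2p_k$; since $\eta(k)k$ is the increment $\Delta k$ between consecutive batch endpoints, summing over batches is a Riemann approximation to $\ell\int kp_k\,dk$. By~(\ref{eq:pk_large}) and $\beta<2$ one has $\int_s^{S}kp_k\,dk=(1+o(1))(\beta-1)s^{\beta-1}S^{2-\beta}/(2-\beta)$, dominated by its upper endpoint $S=n^{\tau}$; in particular the light range contributes only $\bigo(\ell K^{2-\beta}s^{\beta-1})=o(\ell S^{2-\beta}s^{\beta-1})$ because $K\ll S$. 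Therefore $\sum_iX_i=(1+o(1))\,\ell\,(\beta-1)s^{\beta-1}n^{\tau(2-\beta)}/(2-\beta)=(1+o(1))\,c\,n\,s^{\beta-1}(\beta-1)/(2-\beta)$, which is $(1+o(1))n$ exactly when $\hat c=(2-\beta)/((\beta-1)s^{\beta-1})$.

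The main obstacle, and really the only place this differs conceptually from Lemma~\ref{lem:degree_distribution}, is that since $\beta<2$ the sum $\sum_kkp_k$ is dominated by the \emph{largest} communities rather than the smallest; so the batch decomposition is genuinely needed to estimate~(\ref{eq:sum_communities}) and not merely a device for getting $Y_k$ concentrated, and one has to verify that the $\bigo((\log n)^{-1})$ relative errors accumulated over the polynomially many batches stay $\bigo((\log n)^{-1})$ in the aggregate --- which holds because each summand is nonnegative and carries a uniformly $\bigo((\log n)^{-1})$ relative error.
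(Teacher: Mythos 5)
Your proposal follows the paper's proof essentially verbatim: the paper itself states that the lemma is "a straightforward adaptation" of Lemma~\ref{lem:degree_distribution}, and you reproduce the same light/heavy split with the same threshold $K$, the same batching $\eta(k)$ with the same two sanity checks, and the same Chernoff-plus-union-bound argument. Your treatment of the sum~(\ref{eq:sum_communities}) is also the same in substance; the paper isolates the first $\log n$ terms and bounds them by $\bigo(n/\log n)$ rather than appealing to dominance of the upper endpoint, but this is a cosmetic difference in how the $\bigo(k^{-1})$ relative error in~(\ref{eq:pk_large}) is discharged, not a different method.
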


Before we prove the lemma, let us state the following straightforward implication for the \ABCD\ model. 

\begin{corollary}\label{cor:community_sizes}
Let $\beta \in (1,2)$, $s \in \N$, and $\tau \in (\zeta,1)$. The following properties hold \wep\ for $\Ac$. 
\begin{itemize}
\item [(a)] The number of communities is equal to 
$$
\ell = \ell(n) = (1+\bigo( (\log n)^{-1} )) \, \hat{c} \, n^{1-\tau (2-\beta)},
$$
where
$$
\hat{c} = \frac {2-\beta}{(\beta-1) s^{\beta-1}}.
$$
\item [(b)] For $k \in \N$, let $Y_k$ be the number of communities of size $k$.
For $k \in \N$ and $\eta=\eta(n)$, let $Y_k^{\eta}$ be the number of communities of size at least $k$ but at most $(1+\eta)k$.
Random variables $Y_k$ and $Y_k^{\eta}$ satisfy properties (a) and (b) in Lemma~\ref{lem:communities}.
\end{itemize}
\end{corollary}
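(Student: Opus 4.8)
The plan is to deduce both parts of the corollary from Lemma~\ref{lem:communities} by \emph{inverting} the community-generation process. Lemma~\ref{lem:communities} controls a sequence $(X_1,\ldots,X_\ell)$ of a \emph{fixed} length, whereas in $\Ac$ the number of communities is random: one samples i.i.d.\ sizes $X_1, X_2, \ldots$ from $\Pc(\beta,s,\tau)$ until the partial sum $T_j := \sum_{i=1}^j X_i$ first reaches $n$, and then performs a small size adjustment. Write $m$ for the smallest index with $T_m \ge n$. From the description of the model, the ground-truth partition $\C$ of $\Ac$ has either $m$ or $m-1$ parts, and its sizes are $X_1,\ldots,X_{m-1}$ together with either $X_m$ decreased by some amount that keeps it $\ge s$, or with at most $s$ of the $X_i$ each increased by one. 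Hence for every $k$ the $\Ac$-quantities $Y_k$ and $Y_k^\eta$ differ from the corresponding counts among $X_1,\ldots,X_m$ by an additive $\bigo(1)$, and $\ell$ differs from $m$ by at most one; it therefore suffices to understand $m$ and the counts among $X_1,\ldots,X_m$.

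For part~(a), set $\ell_0 := \hat c\, n^{1-\tau(2-\beta)}$. A direct estimate of $\sum_k k p_k$ (replacing $k$ by $\int_k^{k+1}$ and bounding the truncation error at $s$ and $S=n^\tau$) gives $\ell_0\,\E[X] = n\,(1+o((\log n)^{-1}))$. Inspecting the proof of Lemma~\ref{lem:communities} --- in particular~(\ref{eq:sum_communities}) --- the implied constant there can be taken uniform over $c = j/n^{1-\tau(2-\beta)}$ ranging in any fixed compact subset of $(0,\infty)$; applying this to every integer $j$ in the polynomial-size range $[\ell_0/2, 2\ell_0]$ and using that polynomially many \wep\ events hold simultaneously \wep, we get that \wep, for all such $j$,
$$
T_j \,=\, \big(1+\bigo((\log n)^{-1})\big)\,\frac{j}{\ell_0}\, n .
$$
Since $T_j$ is non-decreasing in $j$, putting $j^{\pm} := \lceil \ell_0\,(1 \pm 2C_0(\log n)^{-1})\rceil$ for the implied constant $C_0$ gives $T_{j^-} < n \le T_{j^+}$ \wep, hence $j^- < m \le j^+$, and so $\ell = m + \bigo(1) = \big(1+\bigo((\log n)^{-1})\big)\,\hat c\, n^{1-\tau(2-\beta)}$, which is part~(a).

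For part~(b), couple $\Ac$ to the same i.i.d.\ sequence, and for a fixed $k$ let $Y_k^{(j)}$ (resp.\ $Y_k^{(j),\eta}$) be the number of $X_i$ with $i\le j$ that equal $k$ (resp.\ lie in $[k,(1+\eta)k]$); these are non-decreasing in $j$. As noted above, the $\Ac$-counts equal $Y_k^{(m)}+\bigo(1)$ (resp.\ $Y_k^{(m),\eta}+\bigo(1)$), and \wep\ $j^- \le m \le j^+$ with $j^{\pm} = c^{\pm}\, n^{1-\tau(2-\beta)}$ and $c^{\pm} = \hat c\,(1+\bigo((\log n)^{-1})) \to \hat c$. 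Lemma~\ref{lem:communities} therefore applies at both endpoints $j^-$ and $j^+$; since $c^{\pm}$ differs from $\hat c$ by $\bigo((\log n)^{-1})$ while the formulae in Lemma~\ref{lem:communities} already carry a $1+\bigo((\log n)^{-1})$ factor, both $Y_k^{(j^-)}$ and $Y_k^{(j^+)}$ equal $\big(1+\bigo((\log n)^{-1})\big)\,\hat c\, n^{1-\tau(2-\beta)}\big(k^{-(\beta-1)}-(k+1)^{-(\beta-1)}\big)s^{\beta-1}$ (and the analogous formula holds for $Y_k^{\eta}$). Sandwiching $Y_k^{(m)}$ between these two and absorbing the additive $\bigo(1)$ --- which is negligible because the main term tends to infinity ($\Omega(n^{1-2\tau})$ in the light regime, $\Theta((\log n)^4)$ in the heavy regime) --- gives the claim.

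The step I expect to be the main obstacle is the first one in part~(a): upgrading the single-sequence statement of Lemma~\ref{lem:communities} to a bound with an \emph{absolute} constant in the $\bigo((\log n)^{-1})$ term, valid uniformly over a whole polynomial range of lengths $j$, so that the monotone sandwich pins $m$ (and hence $\ell$) down to a $1+\bigo((\log n)^{-1})$ factor rather than merely $1+o(1)$; this also forces one to check that $\ell_0\E[X] = n(1+o((\log n)^{-1}))$, i.e.\ that the continuous approximation of the mean community size is accurate to order $o(n/\log n)$. The remaining steps are routine bookkeeping around the last-community adjustment.
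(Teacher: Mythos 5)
Your proof is correct and takes essentially the same route as the paper. Both arguments sandwich the random stopping index $m$ (and hence $\ell$) between two deterministic values $\ell_0(1\pm\Theta((\log n)^{-1}))$ by applying~\eqref{eq:sum_communities} from Lemma~\ref{lem:communities} to slightly under- and over-shot sequence lengths, and then transfer the counts $Y_k$, $Y_k^\eta$ by monotonicity in $j$ plus the observation that the end-of-process adjustment affects only $\bigo(1)$ communities, negligible against the $\Omega((\log n)^4)$ lower bound on the counts in Lemma~\ref{lem:communities}. The only stylistic difference is that you invoke uniform concentration of $T_j$ over the whole polynomial range $[\ell_0/2, 2\ell_0]$, whereas the paper applies the lemma at just the two endpoints $\ell_\ell,\ell_u$ and then uses the deterministic monotonicity of $T_j$; the two-point version suffices and is marginally simpler, but your extra care does no harm.
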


\begin{proof}
Recall that community sizes are generated independently following power-law distribution $\Pc(\beta, s, \tau)$ until the sum of their sizes reaches $n$. By Lemma~\ref{lem:communities} (see~(\ref{eq:sum_communities})), \wep\ after generating
$$
\ell_{\ell} = \ell_{\ell}(n) = (1 - a \, (\log n)^{-1} ) \, \hat{c} \, n^{1-\tau (2-\beta)}
$$
communities, the sum is still below $n$ provided that $a$ is a large enough constant. On the other hand, after generating 
$$
\ell_{u} = \ell_{u}(n) = (1 + a \, (\log n)^{-1} ) \, \hat{c} \, n^{1-\tau (2-\beta)}
$$
communities, \wep\ the sum exceeds $n$, again, provided that $a$ is large enough. It follows that \wep\ $\ell_{\ell} \le \ell \le \ell_u$. This finishes part (a) of the corollary.

Recall that once the desired number of nodes is reached, either the last community is ``trimmed'' or at most $s-1 = O(1)$ communities increase their sizes. To see that part~(b) holds, note that the bounds for $Y_k$ and $Y_k^{\eta}$ implied by Lemma~\ref{lem:communities} are asymptotically the same, regardless whether the lemma is applied for $\ell_{\ell}$ or for $\ell_u$. Moreover, these bounds are of order at least $(\log n)^4$ and so affecting $O(1)$ communities at the end of the process will not affect the final bounds. The proof of the corollary is finished.
\end{proof}

Now we are ready to get back to Lemma~\ref{lem:communities}. The proof of the lemma is a straightforward adaptation of the one of Lemma~\ref{lem:degree_distribution}.

\begin{proof}[Proof of Lemma~\ref{lem:communities}] 
We will call a random variable $X_i$ \emph{light} if $X_i \le K$, where
$$
K := \min \left( n^{\tau - (2\tau - 1)/\beta} (\log n)^{-4/\beta}, n^{\tau} \right);
$$
otherwise, $X_i$ will be called \emph{heavy}. Note that if $\tau < 1/2$, then $K = n^{\tau}$ so all variables are light. On the other hand, if $\tau \ge 1/2$, then $K = n^{\tau - (2\tau - 1)/\beta} (\log n)^{-4/\beta} \ll n^{\tau}$ and so we will have a mixture of light and heavy variables. 
% We need to make a distinction because for a given value of $k$, $s \le k \le K$, there are plenty of light variables that are equal to $k$, that is, $Y_k > 0$. Unfortunately, for a given value of $k$, $K < k \le n^{\tau}$, it might happen (in fact, it is quite often the case) that there will be no heavy variable that is equal to $k$, that is, $Y_k = 0$. In order to solve this problem, we will need to group heavy variables together and consider them in batches, that is, consider $Y_k^{\delta}$ instead of $Y_k$. 

We start with light variables. Let us fix $k \in \N$ such that $s \le k \le K$. Using~(\ref{eq:pk_large}), we get that 
$$
\E [ Y_k ] = \ell p_k = \Theta( n^{1-\tau (2-\beta)} k^{-\beta} ) 
= \Omega( n^{1-\tau (2-\beta)} K^{-\beta} )
= \Omega( (\log n)^4 ).
$$
After applying Chernoff's bound~(\ref{chern}) with $\eps = (\log n)^{-1}$ we get that \wep\ 
$$
Y_k = (1+\bigo( (\log n)^{-1} )) \E [ Y_k ] = (1+\bigo( (\log n)^{-1} )) \ell p_k.
$$
The other two equalities in~(\ref{eq:light_nodes_comm}) follow immediately from~(\ref{eq:pk_small}) and~(\ref{eq:pk_large}).
This finishes part~(a) and the first half of part~(b).

We move now to heavy variables. Assume that $\tau \ge 1/2$ and let us fix $k \in \N$ such that $K \le k \le n^{\tau}/(1+\eta)$. Let us also fix
$$
\eta = \eta(k) = n^{\tau(2-\beta)-1} (\log n)^4 k^{\beta - 1}.
$$
Recall that random variable $Y_k^{\eta}$ counts a \emph{batch} of variables $X_i$ that are at least $k$ but at most $(1+\eta)k$. Note that $\eta(k)$ is an increasing function of $k$. Moreover, $\eta(K) = 1/K$ so only a few values are considered for $k=K$ (note that $\eta(K)K = 1$). On the other extreme, $\eta(n^{\tau}) = n^{-(1-\tau)} (\log n)^4$ which is still $\bigo( (\log n)^{-1} ) = o(1)$.  It implies that all variables in the batch are equal to $(1+\bigo(\eta)) k = (1+\bigo( (\log n)^{-1} )) k$. As before, using~(\ref{eq:pk_large}) we get that 
$$
\E [ Y_k^{\eta} ] = \ell \sum_{i=k}^{(1+\eta)k} p_k = \Theta \left( n^{1-\tau (2-\beta)} (\eta k) k^{-\beta} \right) = \Omega( (\log n)^4 ).
$$
After applying Chernoff's bound~(\ref{chern}) with $\eps = (\log n)^{-1}$ we get that \wep\ 
$$
Y_k^{\eta} = (1+\bigo( (\log n)^{-1} )) \E [ Y_k^{\eta} ] = (1+\bigo( (\log n)^{-1} )) \ \ell \eta k p_k.
$$
This finishes the second half of part~(b).

It remains to show~(\ref{eq:sum_communities}). Since we aim for a statement that holds \wep, we may assume that parts~(a) and~(b) hold. First, let us note that
$$
\sum_{i=1}^{\ell} X_i = \sum_{k=s}^{n^{\tau}} k Y_k.
$$
Light variables can be delt with immediately: for any $k \in \N$ such that $s \le k \le K$ we have 
$$
k Y_k = (1+\bigo( (\log n)^{-1} )) \ k \ell p_k.
$$
For heavy variables we again use the fact that all variables in one batch have asymptotically the same value. For any $k \in \N$ such that $K \le k \le n^{\tau}/(1+\eta)$ we have
\begin{eqnarray*}
\sum_{i=k}^{(1+\eta)k} i Y_i &=& (1+\bigo(\eta)) k \sum_{i=k}^{(1+\eta)k} Y_i = (1+\bigo( (\log n)^{-1} )) \ k Y_k^{\eta} \\
&=& (1+\bigo( (\log n)^{-1} )) \ k \ell \eta k p_k = (1+\bigo( (\log n)^{-1} )) \sum_{i=k}^{(1+\eta)k} i \ell p_i.
\end{eqnarray*}
It follows that
$$
\sum_{i=1}^{\ell} X_i = \sum_{k=s}^{n^{\tau}} k Y_k = (1+\bigo( (\log n)^{-1} )) \sum_{k=s}^{n^{\tau}}  k \ell p_k.
$$
Now, note that the contribution from the first $\log n$ terms is negligible: 
\begin{eqnarray*}
\sum_{k=s}^{\log n}  k \ell p_k &=& \bigo \left( n^{1-\tau (2-\beta)} \sum_{k=s}^{\log n} k^{1-\beta} \right) = \bigo \left( n^{1-\tau (2-\beta)} \int_{s}^{\log n} x^{1-\beta} dx \right) \\
&=& \bigo \left( n^{1-\tau (2-\beta)} (\log n)^{2-\beta} \right) = \bigo \left( n / \log n \right).
\end{eqnarray*}
On the other hand, 
\begin{eqnarray*}
\sum_{k=\log n}^{n^{\tau}} k \ell p_k &=& (1+\bigo( (\log n)^{-1} )) \ c n^{1-\tau (2-\beta)} (\beta-1) s^{\beta-1} \sum_{k=\log n}^{n^{\tau}} k^{1-\beta} \\
&=& (1+\bigo( (\log n)^{-1} )) \ c n^{1-\tau (2-\beta)} (\beta-1) s^{\beta-1} \int_{\log n}^{n^{\tau}} x^{1-\beta} dx + \bigo(1) \\
&=& (1+\bigo( (\log n)^{-1} )) \ c n^{1-\tau (2-\beta)} (\beta-1) s^{\beta-1} \frac { n^{\tau (2-\beta)} - (\log n)^{2-\beta} } {2-\beta} \\
&=& (1+\bigo( (\log n)^{-1} )) \ c n s^{\beta-1} \ \frac { \beta-1 } {2-\beta}.
\end{eqnarray*}
This finishes the proof of~(\ref{eq:sum_communities}), and so the theorem holds.
\end{proof}

\subsubsection*{Simulation Corner}

We generated community sizes for two \ABCD\ graphs $\Ac$ on $n=1{,}000$ and, respectively, $n=1{,}000{,}000$ nodes. In both cases, we used parameters $\beta = 1.5$, $s = 50$, and $\tau = 3/4$ (that is, $S=n^{3/4}$). On Figure~\ref{fig:comm} we plot the complement of the cumulative distribution (that is, the fraction of communities that consist of at least $K$ nodes) and compare it with asymptotic, theoretical predictions, namely, function
$$
\sum_{k=K}^{S} p_k = \frac { \int_{K}^{S+1} x^{-\beta} dx }{ \int_{s}^{S+1} x^{-\beta} dx } = \frac { K^{1-\beta} - (S+1)^{1-\beta} } { s^{1-\beta} - (S+1)^{1-\beta} }.
$$
Simulation results are averaged over 30 independent runs in view of the small number of communities compared to $n$. Note that since $s=50$ is much larger than $\delta=5$, in Figure~\ref{fig:degree} we do not observe a large difference between theoretical and simulated curves (the largest discrepancy between the continuous ($q_k$) and the discrete ($r_k$) distributions is observed for small values of $k$). 

\begin{figure}[ht]
     \centering
     \includegraphics[width=0.48\textwidth]{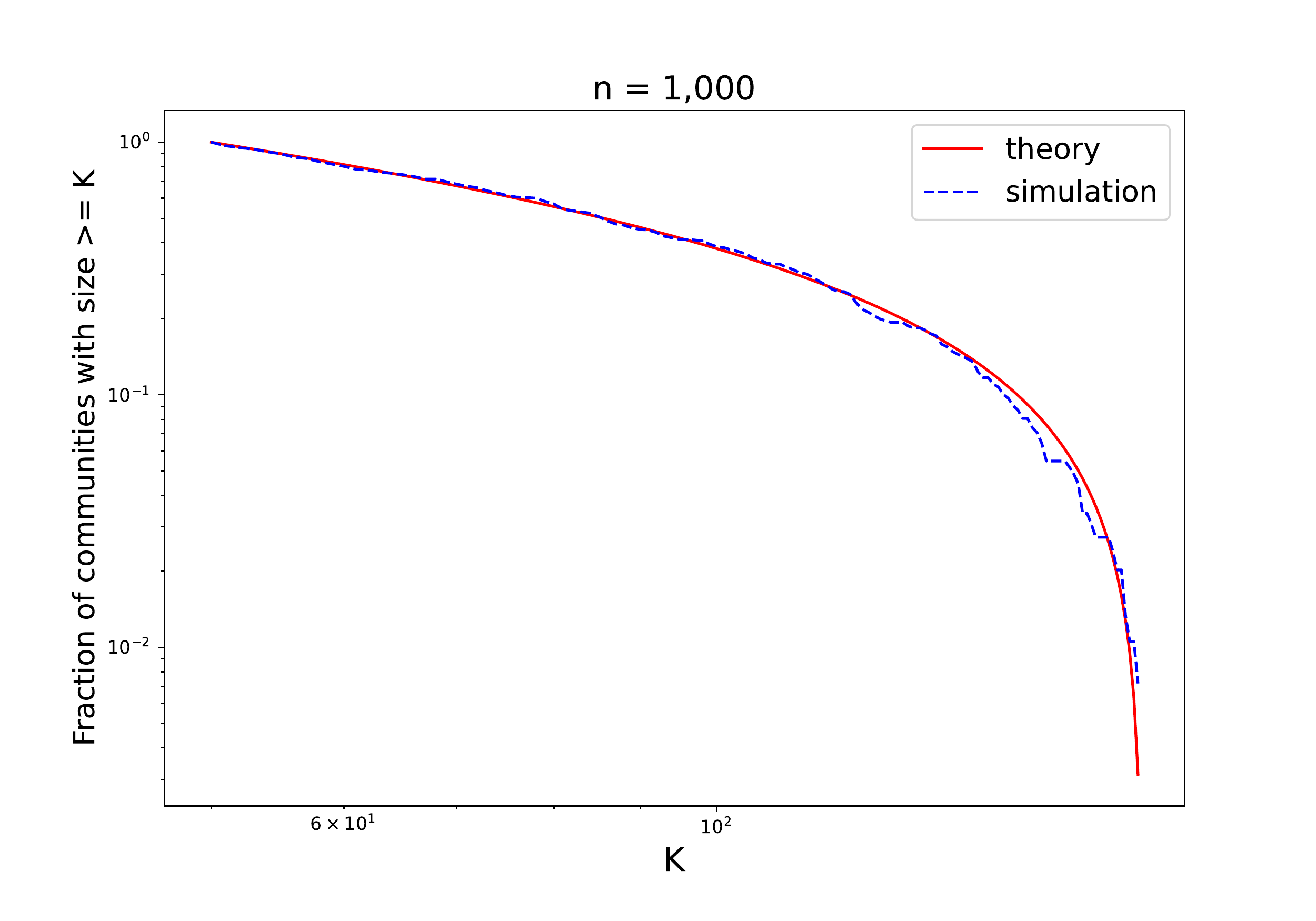}
         \hspace{.1cm}
     \includegraphics[width=0.48\textwidth]{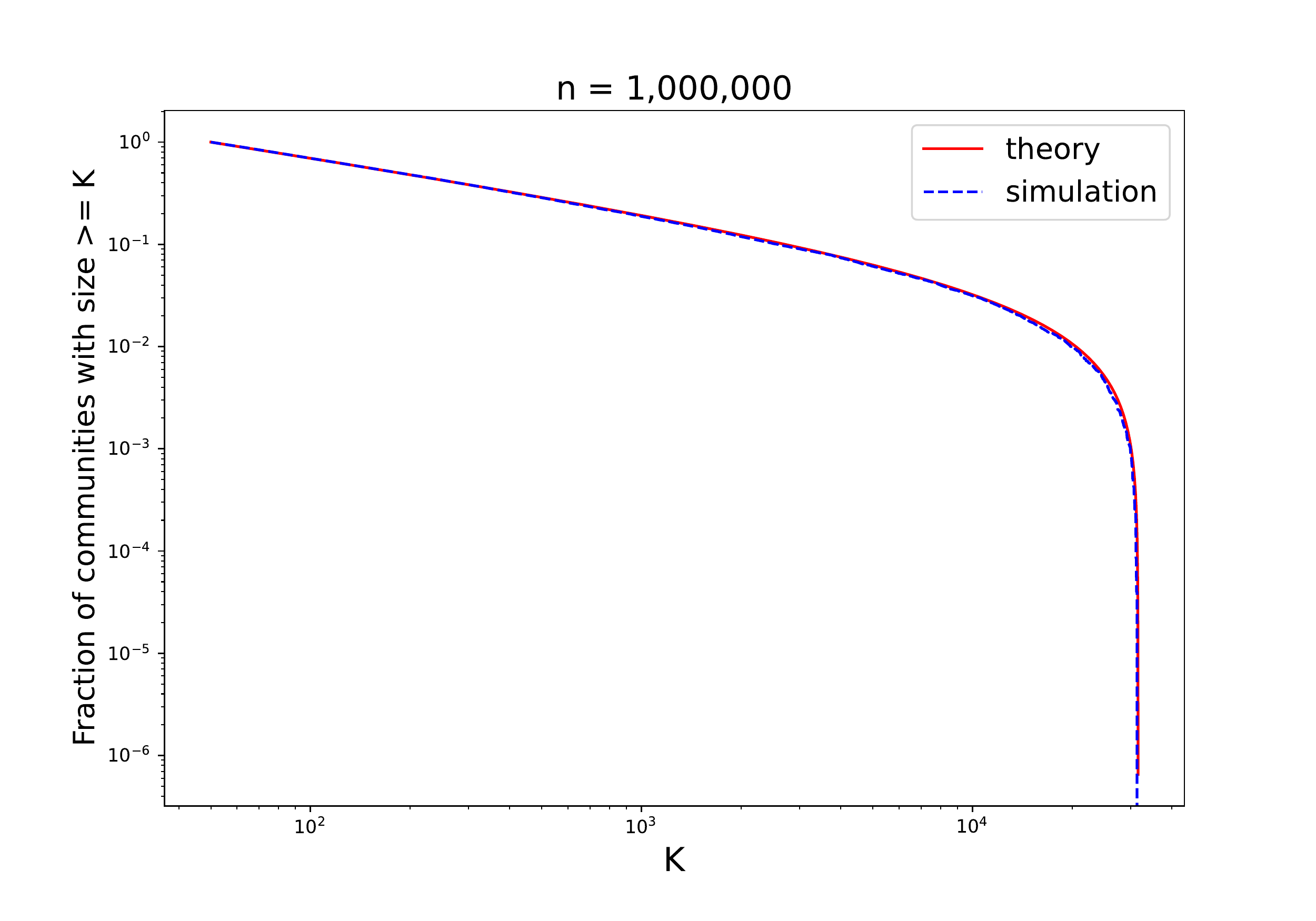}
     \caption{Complement of cumulative community size distribution for small ($n=1{,}000$; left plot) and large ($n=1{,}000{,}000$; right plot) graphs. Parameter used: $\beta = 1.5$, $s = 50$, and $\tau = 3/4$.}
     \label{fig:comm}
 \end{figure}

To see how well Corollary~\ref{cor:community_sizes} predicts the number of communities in practice, for each value of $n = 1000 \cdot 2^i$, $i \in \{0, 1, \ldots,15\}$, we independently generated 100 graphs with the same parameters as in the above experiment and three other sets of parameters. On Figure~\ref{fig:comm_ratio} we present the average number of communities and its standard deviation. In our theoretical results, we do not pay attention to signs of error terms as all of them tend to zero as $n \to \infty$. However, based on the simulation, it seems that the error term is negative for $\beta < 1.5$ and positive for $\beta > 1.5$ but, as expected, it diminishes to zero asymptotically. Additionally, note that the standard deviation of the normalized number of communities is significant even for large graphs, as opposed to standard deviation of the normalized volume presented in Figure~\ref{fig:degree_ratio}.

\begin{figure}[ht]
     \centering
     \includegraphics[width=0.48\textwidth]{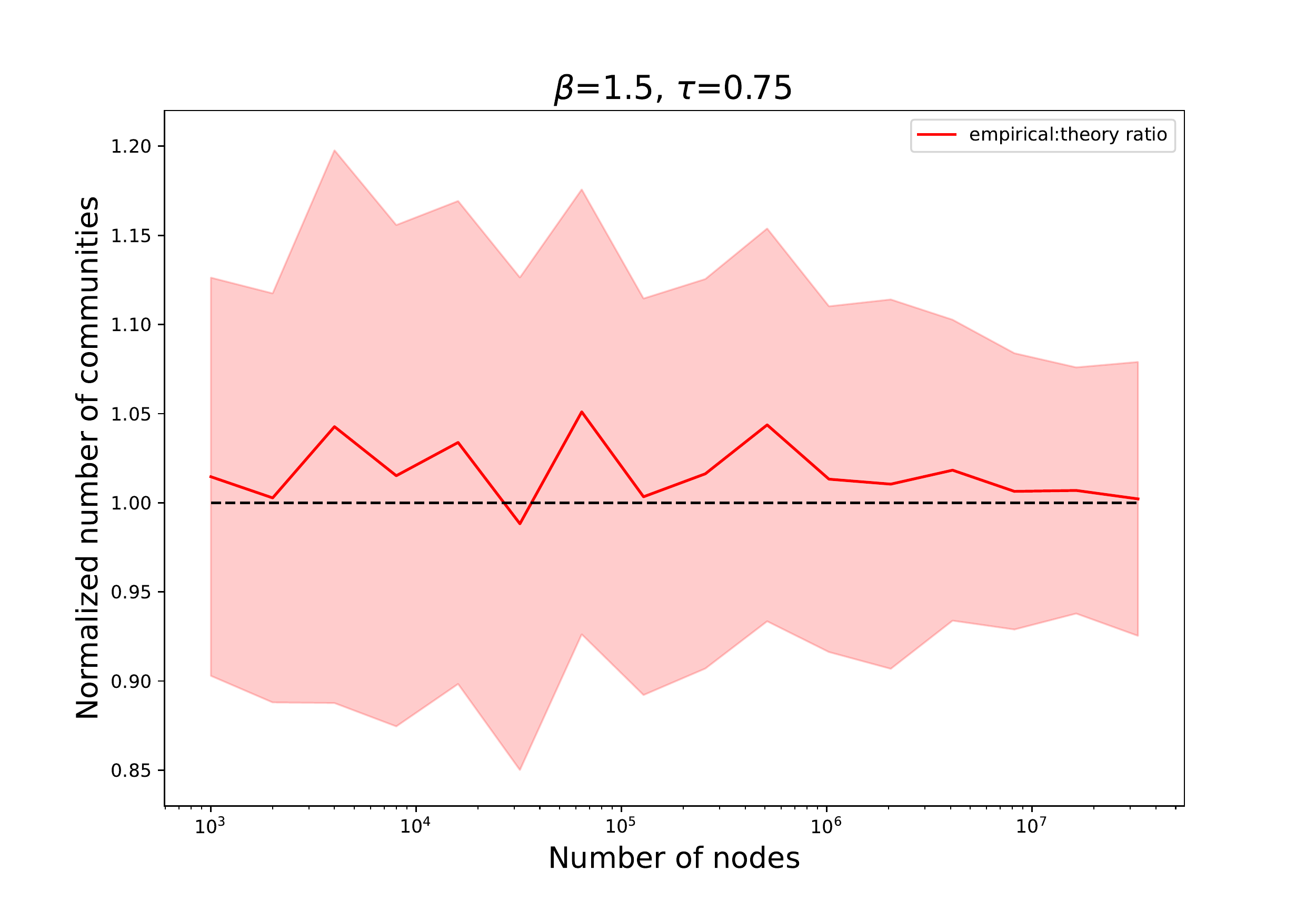}
     \hspace{.1cm}
     \includegraphics[width=0.48\textwidth]{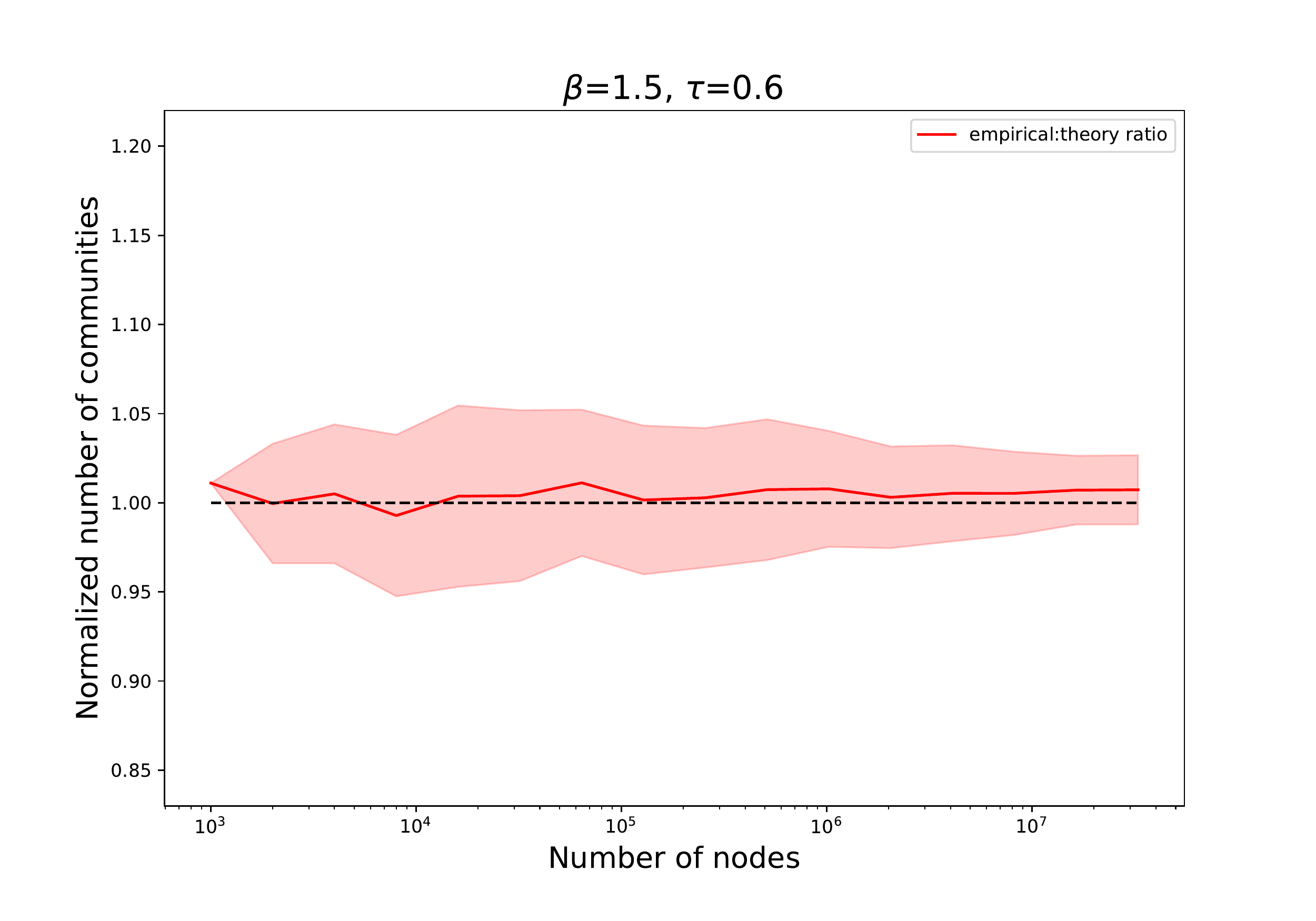}
     \vspace{.1cm}
     \includegraphics[width=0.48\textwidth]{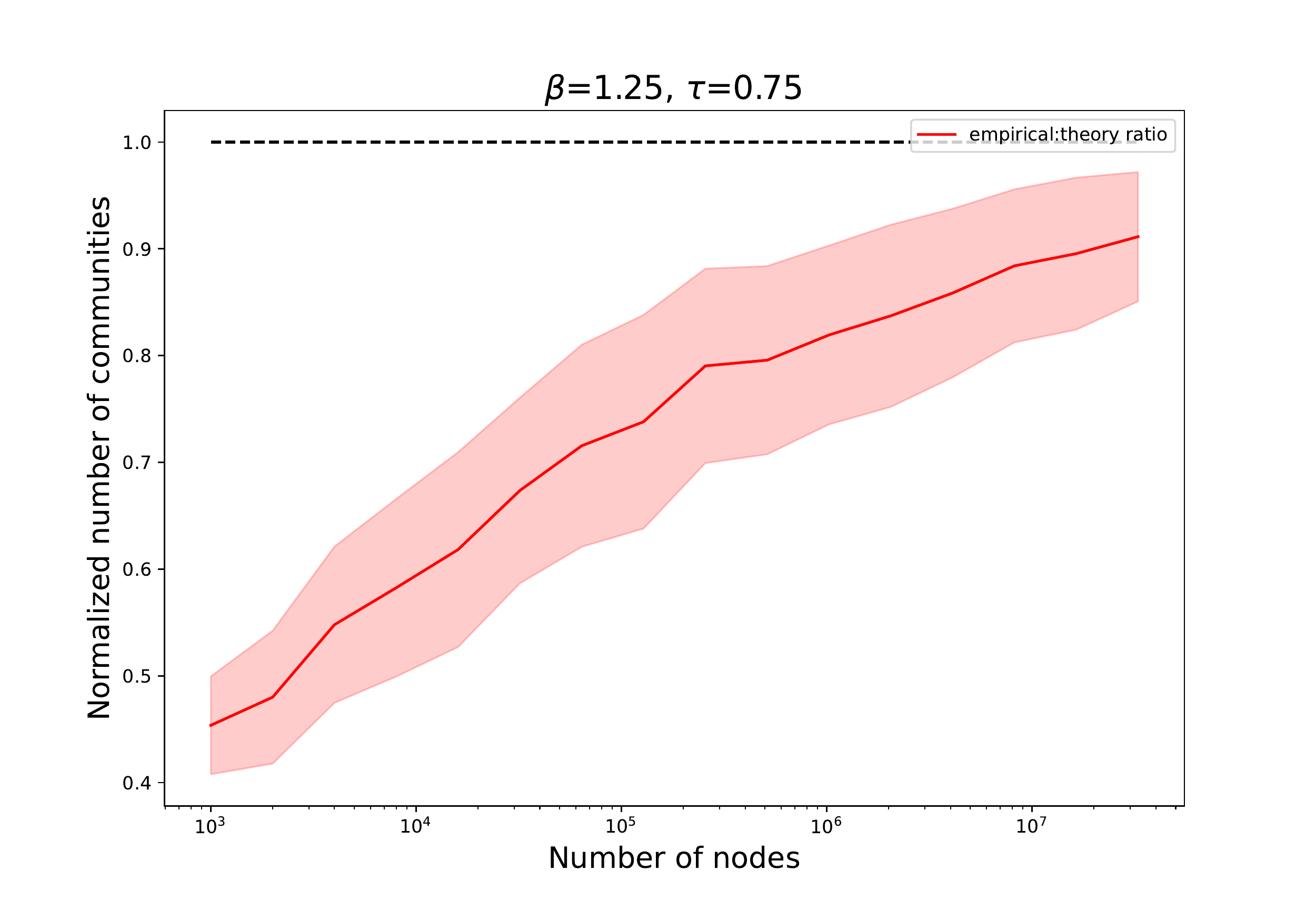}
     \hspace{.1cm}
     \includegraphics[width=0.48\textwidth]{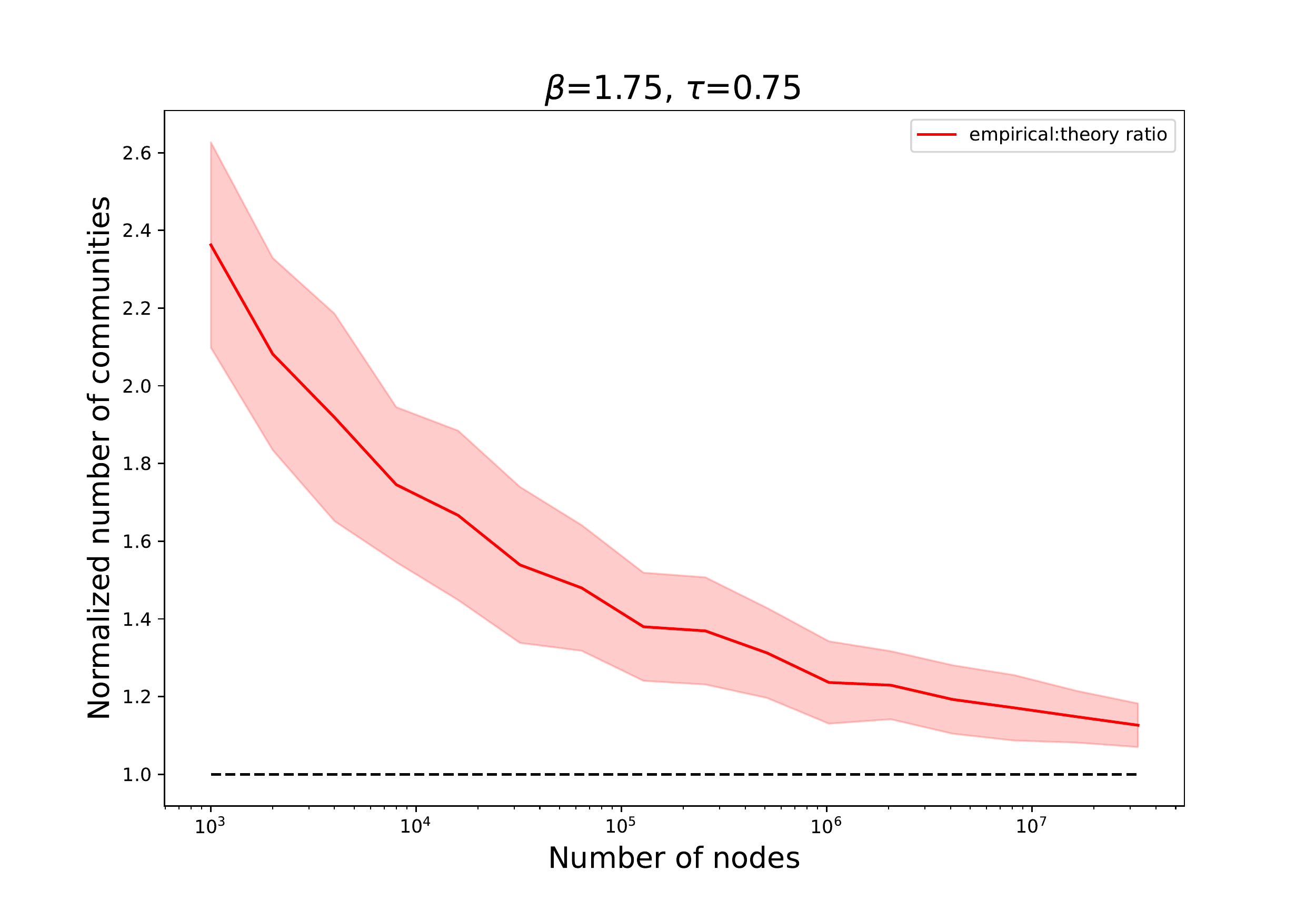}
     \caption{The average number of communities for 100 independently generated graphs; shaded area represents the standard deviation. Both quantities were normalized by the theoretical prediction. The dashed line at 1 corresponds to a perfect prediction. All plots are with $s=50$. The top two plots are shown with the same range on the y axis for comparison, i.e. decreasing $\tau$ lowers the variability. In the bottom plots, we see that convergence is slower for values of $\beta$ away from 1.5.}
     \label{fig:comm_ratio}
 \end{figure}
 
%%%%%%%%%%%%%%%%%%%%%%%%%%%%%%%%%%%%%%%%%%%%%%%%%%%%%%%%%%%
\subsection{Assigning Nodes into Communities and Distribution of Weights}\label{sec:assigning}

Recall that at this point of the process, the degree distribution $(w_1 \ge w_2 \ge \ldots \ge w_n)$ and the distribution of community sizes $(c_1 \ge c_2 \ge \ldots \ge c_\ell)$ are already fixed. In order to assign nodes to communities we will use the following easy and natural algorithm. We consider nodes, one by one, starting from $w_1$ (high degree node) and finishing with $w_n$ (low degree node). Recall that node $i$ of degree $w_i$ has to be assigned to a community of size $c_j$ so that inequality~(\ref{eq:admissible}) holds. We assign node $w_i$ randomly to one of the communities that have size larger than $\lceil (1-\xi\phi) w_i \rceil$ and still have some “available spots”. We do it with probability proportional to the number of available spots left. 

Note that it follows immediately from Corollary~\ref{cor:community_sizes} that \wep\ $\phi = 1 - \sum_{k \in [\ell]} (c_k/n)^2 = 1-o(1)$, since 
\begin{eqnarray*}
\sum_{k \in [\ell]} (c_k/n)^2 &=& \sum_{k = s}^{n^{\tau}} Y_k (k/n)^2 = \sum_{k = s}^{n^{\tau}} \bigo \left( n^{1-\tau (2-\beta)} k^{-\beta} (k/n)^2 \right) \\
&=& \bigo \left( n^{-1-\tau (2-\beta)} \sum_{k = s}^{n^{\tau}} k^{2-\beta} \right) = \bigo \left( n^{-1-\tau (2-\beta)} n^{\tau (3-\beta)} \right) = \bigo \left( n^{-(1-\tau)} \right) = o(1).
\end{eqnarray*}

In order to see that the above simple algorithm generates one of the admissible assignments uniformly at random, let $t_i$ be the number of available spots (at the beginning of the process) for node $i$, that is, the total number of nodes that belong to communities satisfying inequality~(\ref{eq:admissible}):
$$
t_i = \sum_{j \in I_i} c_j, \hspace{1cm} \text{ where } I_i = \Big\{ j \in [\ell] : w_i \le \frac {c_j - 1}{1-\xi\phi} \Big\} = \Big\{ j \in [\ell] : c_j \ge  (1-\xi\phi) w_i + 1 \Big\}.
$$
Since nodes are considered in non-increasing order of their degrees, exactly $i-1$ of these $t_i$ available spots are taken by other nodes when it is time for node $i$ to be assigned to some community. 

To see that \wep\ $t_i \ge i$, consider any node of degree $k$ such that $\delta \le k \le n^{\zeta} (\log n)^{-5/(\gamma-1)}$. It follows from Lemma~\ref{lem:degree_distribution} that when this node is considered by the algorithm, \wep\ its index $i$ satisfies the following property:
$$
i = \Theta \left( \sum_{i = \delta}^{k} n q_i \right) = \Theta \left( n \sum_{i = \delta}^{k} i^{-\gamma} \right) = \Theta \left( n k^{-(\gamma-1)} \right).
$$
On the other hand, by Corollary~\ref{cor:community_sizes}, \wep\
$$
t_i = n -  \Theta \left( \sum_{i = s}^{(1-\xi\phi) k} \ell i p_i \right) = n -  \Theta \left( n^{1-\tau (2-\beta)} \sum_{i = s}^{(1-\xi\phi) k} i^{1-\beta} \right) = n -  \Theta \left( n \left( \frac {k}{n^{\tau}} \right)^{2-\beta} \right).
$$
Both $i$ and $t_i$ are decreasing functions of $k$ but clearly $t_i \ge i$. In fact, note that \wep\ $t_1 = n(1-o(1))$ so almost all nodes are available right from the very beginning when a node of maximum degree is considered. Since \wep\ there are linearly many nodes of degree $\delta$, we process all nodes of degrees at least $\delta+1$ before there is potentially a problem.

It follows that each admissible assignment is used with probability equal to
$$
\prod_{i=1}^n \frac {1}{t_i - (i-1)},
$$
which is a fixed number that depends on the sequences $(w_1 \ge w_2 \ge \ldots \ge w_n)$ and $(c_1 \ge c_2 \ge \ldots \ge c_\ell)$ but not on the choice of admissible assignment. Since the algorithm cannot produce any non-admissible assignment, this shows a uniformity. 

\bigskip

We say that the community is \emph{large} if its size is at least $(\log n)^{8}$ and \emph{very large} if it is of size at least $n^{\zeta} (\log n)^4$; otherwise, it will be called \emph{small}. The volumes of small communities are not well concentrated around their means. For example, a community $C \subseteq V$ of constant size ($|C| = O(1)$) has to have average degree satisfying 
$$
\delta \le \frac{\vol(C)}{|C|} \le \left\lfloor \frac {|C|-1}{1-\xi\phi} \right\rfloor,
$$
but it could be equal to any of the two extreme values with probability bounded away from zero. Hence, in this situation there is no hope for determining an asymptotic value for its volume that holds \wep\ On the other hand, the volumes of very large communities are well concentrated around their means, as the next lemma shows. 

Recall that each node has its degree $w_i$ randomly split into community degree $y_i$ and background degree $z_i$. For non-leaders, we have precisely $\E[y_i] = (1-\xi) w_i$ but the leaders might require adjustment implying that $\E[y_i] = (1-\xi) w_i + O(1)$ (which is, of course, negligible from our asymptotic results point of view). For any community $C \subseteq V$, let
$$
\vol_c(C) := \sum_{i \in C} y_i
$$
be the community volume. We expect $(1-\xi)$ fraction of the total volume of each community to be assigned to community degrees, that is, $\E[ \vol_c(C) ] = (1-\xi) \vol(C) + O(1)$. Similarly to the total volumes, for very large communities we will prove that $\vol_c(C)$ is well concentrated around their expectations.

Now we are ready to state the next lemma.

\begin{lemma}\label{lem:volume_of_communities}
Let $C \subseteq V$ be any large community in $\Ac$, that is, community of size $|C| \ge (\log n)^{8}$. Then, 
\begin{eqnarray*}
\E [\vol(C)] &=& (1+\bigo( (\log n)^{-(\gamma-2)} )) \, d |C|, \text{ and } \\
\E [\vol_c(C)] &=& (1+\bigo( (\log n)^{-(\gamma-2)} )) \, (1-\xi) d |C|, \hspace{1cm} \text{ where } d := \sum_{k = \delta}^{D} \ k q_k.
\end{eqnarray*}
Moreover, the following properties hold  \wep\  
If $C$ is very large (that is, $|C| \ge n^{\zeta} (\log n)^4$), then
\begin{eqnarray*}
\vol(C) &=& (1+\bigo( (\log n)^{-(\gamma-2)} )) \, d |C|, \text{ and }\\
\vol_c(C) &=& (1+\bigo( (\log n)^{-(\gamma-2)} )) \, (1-\xi) d |C|.
\end{eqnarray*}
If $C$ is large (that is, $(\log n)^{8} \le |C| < n^{\zeta} (\log n)^4$), then $\vol(C) = \bigo( n^{\zeta} (\log n)^4 )$.\\
If $C$ is small (that is, $|C| < (\log n)^{8}$), then trivially $\vol(C) \le D |C| < n^{\zeta} (\log n)^8$.\\
Finally, $\vol(F) = \bigo( n (\log n)^{-2} )$, where $F$ is the union of small and large communities. 
\end{lemma}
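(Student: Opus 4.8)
The plan is to condition once and for all on the degree sequence $(w_1\ge\cdots\ge w_n)$ and on the community sizes $(c_1\ge\cdots\ge c_\ell)$, which by Lemma~\ref{lem:degree_distribution}, Corollary~\ref{cor:volume_of_A} and Corollary~\ref{cor:community_sizes} satisfy the stated bounds \wep, so that the only remaining randomness is the uniform admissible assignment of the greedy procedure from Subsection~\ref{sec:assigning}. Call a node \emph{unconstrained} if its degree is at most the constant $(c_\ell-1)/(1-\xi\phi)$ (hence admissible for every community) and \emph{constrained} otherwise; recall $\phi=1-o(1)$ and $1-\xi\phi$ is bounded away from $0$, so constrained nodes have degree at least $s\ge\delta+1$. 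The backbone of everything is one estimate: when a constrained node $i$ is processed, it joins a fixed community $C$ with conditional probability $(|C|-N_C(i))/(t_i-(i-1))$, where $N_C(i)$ counts earlier nodes already in $C$, and here $t_i-(i-1)=\Theta(n)$. Indeed, by Corollary~\ref{cor:community_sizes} the communities too small to host a node of degree $\le D=n^{\zeta}$ span only $\bigo(n^{1-(\tau-\zeta)(2-\beta)})=o(n)$ nodes, so $t_i=n-o(n)$, while $i=\Theta(n\,w_i^{-(\gamma-1)})\le c^{\ast}n$ for a constant $c^{\ast}<1$ because constrained nodes miss the $\Theta(n)$ nodes of degree $\delta$. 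Hence $\Prob(i\in C)=\bigo(|C|/n)$ for every constrained $i$, and since unconstrained nodes merely fill the leftover spots and have bounded degree, $\sum_{i\in C\text{ unconstrained}}w_i=\bigo(|C|)$ \emph{deterministically}.

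Next I would handle the \emph{expectations}. Writing $\E[\vol(C)]=\sum_i w_i\Prob(i\in C)$, a node with $w_i>T_C:=(|C|-1)/(1-\xi\phi)$ is forbidden for $C$, and by Lemma~\ref{lem:degree_distribution} such nodes carry total weight $\bigo(n\,T_C^{-(\gamma-2)})$; since $C$ is large, $T_C=\Omega((\log n)^{8})$, so they change $\E[\vol(C)]$ only by a relative $1\pm\bigo((\log n)^{-(\gamma-2)})$. For admissible $i$ I would prove $\Prob(i\in C)=(1+o(1))|C|/n$ by a self-consistent induction on the step index: assuming this for all earlier $j$, the count $N_C(i)$ concentrates \wep\ around $(i/n)|C|$ whenever that mean exceeds $(\log n)^{2}$ (by Chernoff, or Lemma~\ref{lem:chernoff_gen}) and is otherwise $\le i-1\ll|C|$, giving $|C|-N_C(i)=(1+o(1))|C|(n-i)/n$, which together with $t_i-(i-1)=(1+o(1))(n-i)$ (valid since $n-i=\Omega(n)$ for constrained $i$) closes the induction; unconstrained admissible nodes then get the same probability by exchangeability and $\#\{\text{unconstrained in }C\}=|C|-\#\{\text{constrained in }C\}$. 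Combining, $\E[\vol(C)]=(1+o(1))\tfrac{|C|}{n}\bigl(\vol(V)-\bigo(nT_C^{-(\gamma-2)})\bigr)=(1+\bigo((\log n)^{-(\gamma-2)}))\,d|C|$ via Corollary~\ref{cor:volume_of_A}. For $\vol_c(C)$ note $\E[y_i]=(1-\xi)w_i$ exactly for non-leaders while the one leader of $C$ shifts $\vol_c(C)$ by $\bigo(1)$, so $\E[\vol_c(C)]=(1-\xi)\E[\vol(C)]+\bigo(1)$, which is absorbed since $\E[\vol(C)]=\Omega((\log n)^{8})$.

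For the \emph{concentration} of very large $C$ (where $|C|\ge n^{\zeta}(\log n)^{4}\ge D$, so all nodes are admissible), I would reveal the assignments one node at a time: $\vol(C)$ becomes a bounded-increment martingale whose conditional increments have mean $\bigo(|C|/n)$ (the unconstrained part split off and bounded deterministically), so its predictable quadratic variation is $\bigo\bigl((|C|/n)\sum_i w_i^{2}\bigr)=\bigo\bigl(n^{\zeta(3-\gamma)}|C|\bigr)$ using $\sum_i w_i^{2}=\bigo(n^{1+\zeta(3-\gamma)})$ from Lemma~\ref{lem:degree_distribution}. Pushing the moment-generating-function computation behind Lemma~\ref{lem:chernoff_gen} through this martingale (a Bernstein/Freedman-type bound) with $\eps=(\log n)^{-(\gamma-2)}$ and the inequality $|C|/n^{\zeta(3-\gamma)}\ge(\log n)^{4}$ yields a failure probability $\exp(-\Omega((\log n)^{8-2\gamma}))$, which is \wep\ since $8-2\gamma>2$; the linear Bernstein term is harmless because $\E[\vol(C)]/D=\Omega((\log n)^{4})$. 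The identical computation applies to $\vol_c(C)$. The cruder cases: if $C$ is small, $\vol(C)\le D|C|<n^{\zeta}(\log n)^{8}$ is trivial; if $C$ is large but not very large ($|C|<n^{\zeta}(\log n)^{4}$), write $\vol(C)$ as its unconstrained part ($\bigo(|C|)$ deterministically) plus its constrained part split into $\bigo(\log n)$ dyadic degree bands — for a band $[2^{j},2^{j+1})$ with $2^{j}\le n^{\zeta}(\log n)^{-20}$ the number landing in $C$ is \wep\ at most $\bigo(\max(\mu_j,(\log n)^{3}))$ with $\mu_j=\bigo(|C|2^{-j(\gamma-1)})$ (Chernoff on the $\bigo(|C|/n)$ conditional probabilities), contributing $\bigo(|C|)+\bigo(n^{\zeta}(\log n)^{-16})$ in total, while the $\bigo(\log\log n)$ higher bands have $\mu_j$ polynomially small, so by $\Prob(X\ge t)\le(e\mu_j/t)^{t}$ at most $(\log n)^{4}$ of their nodes lie in $C$ \wep, contributing $\le(\log n)^{4}D=\bigo(n^{\zeta}(\log n)^{4})$. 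Finally $\vol(F)$ is bounded \emph{deterministically} by $t_0|F|+\sum_{i:\,w_i\ge t_0}w_i$ for any $t_0$; taking $t_0=n^{(\tau-\zeta)(2-\beta)}(\log n)^{-c}$ with $c$ large, the first term is $\bigo(n(\log n)^{-2})$ because $|F|=\bigo(n^{1-(\tau-\zeta)(2-\beta)}(\log n)^{4(2-\beta)})$ by Corollary~\ref{cor:community_sizes}, and the second is $\bigo(n\,t_0^{-(\gamma-2)})=n^{1-(\tau-\zeta)(2-\beta)(\gamma-2)}(\log n)^{\bigo(1)}=o(n(\log n)^{-2})$ by Lemma~\ref{lem:degree_distribution}.

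The hard part will be the uniform control of the conditional placement probabilities across all $n$ steps, especially the matching \emph{lower} bound $\Omega(|C|/n)$ for constrained admissible nodes — needed both for the $(1+o(1))$ precision in $\E[\vol(C)]$ and for the lower tail of the concentration: this forces one to show that a large community is never prematurely saturated (so $|C|-N_C(i)=\Omega(|C|)$ throughout), and to treat the last $\bigo(n(\log n)^{-2})$ steps, at which point only bounded-degree unconstrained nodes remain, by the deterministic argument instead. The dyadic-band bookkeeping for the ``large but not very large'' case and the collapsing of all the error exponents into a single $\bigo((\log n)^{-(\gamma-2)})$ are routine but tedious.
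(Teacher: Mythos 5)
Your proposal follows the same high-level plan as the paper (condition on degree sequence and community sizes, argue that each admissible node lands in $C$ with conditional probability $\approx |C|/n$, then concentrate), but differs in several technical choices. The paper controls $N_C(t)$, the running count of nodes assigned to $C$, by a stochastic sandwich between two hypergeometric random variables with parameters $(n-n'-n'',z,t)$ and $(n-n',z,t)$, which avoids any induction and gives uniform control over all $t$ up to $n - n(\log n)^{-1}$; it then applies Lemma~\ref{lem:chernoff_gen} with $\eps=(\log n)^{-1}$, $c=D=n^{\zeta}$ for the concentration of $\vol(C)$ and $\vol_c(C)$, and with $u=n^{\zeta}(\log n)^4$ for the large-but-not-very-large case. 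You instead propose a self-consistent induction on the conditional placement probabilities, a Doob-martingale/Freedman argument for concentration, a dyadic-band decomposition for the large-but-not-very-large case, and a purely deterministic bound for $\vol(F)$. These are all legitimate alternatives, and your deterministic handling of $\vol(F)$ is in fact tidier than the paper's appeal to $\E[\vol(F)]=\bigo(n(\log n)^{-2})$. You also split nodes into constrained/unconstrained and bound the unconstrained contribution deterministically plus by exchangeability; the paper instead handles the low-degree tail by discarding the last $n(\log n)^{-1}$ steps, whose volume is $\bigo(n(\log n)^{-1})$ because those nodes all have degree~$\delta$.

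Two things to watch. First, in the branch of your induction where the mean $(i/n)|C|$ is below $(\log n)^2$, the fallback ``$N_C(i)\le i-1\ll|C|$'' is not valid in general: for such $i$ you have $i<n(\log n)^2/|C|$, and this is $\ll|C|$ only when $|C|\gg\sqrt{n}\log n$, which can fail for very large communities when $\zeta<1/2$. The fix is what the paper does: in that regime $\E[N_C(i)]=\bigo((\log n)^2)$, so a one-sided Chernoff/martingale bound gives $N_C(i)=\bigo((\log n)^2)\ll(\log n)^8\le|C|$ \wep\ (the paper uses threshold $(\log n)^6$ for a comfortable margin). Second, your phrase ``by Chernoff'' applied to $N_C(i)$ and to the per-band counts glosses over the fact that the indicators $\mathbbm{1}[j\in C]$ are not independent; this has to be run through the same Azuma/Freedman machinery you invoke for $\vol(C)$ (or through the hypergeometric domination, which is exactly what makes the paper's route cleaner). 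You flag these as the hard part, correctly; once filled in, the proposal works and gives the same $\bigo((\log n)^{-(\gamma-2)})$ relative error after collapsing the error exponents as you note.
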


Let us note that the constant $d$ used in the above lemma is the same as in Corollary~\ref{cor:volume_of_A}. Some useful and explicit bounds for $d$ are provided right after the statement of this corollary. 

\begin{proof}
Let us fix any community $C \subseteq V$ of size $z = |C| \ge z_0 := (\log n)^{8}$. If, for example, $z > n^{\zeta}$, then community $C$ has enough room even for a node of the largest degree. However, if $z$ is small then only nodes of degree at most $\left\lfloor \frac {z-1}{1-\xi\phi} \right\rfloor$ are allowed to be assigned to $C$. Fortunately, \wep\ the number of nodes of degree larger than $K_0 := \left\lfloor \frac {z-1}{1-\xi\phi} \right\rfloor \ge k_0 := \left\lfloor \frac {z_0-1}{1-\xi\phi} \right\rfloor = \Omega ( (\log n)^8 )$ is, by Lemma~\ref{lem:degree_distribution}, equal to
\begin{eqnarray*}
n' &=& \sum_{k > K_0} Y_k \le \sum_{k > k_0} Y_k = (1+\bigo( (\log n)^{-1} )) \ \sum_{k > k_0} n q_k = \bigo\left( \sum_{k > k_0} n k^{-\gamma} \right) \\
&=& \bigo\left( n k_0^{1-\gamma} \right) = \bigo\left( n \ (\log n)^{-8(\gamma-1)} \right) = \bigo\left( n \ (\log n)^{-2} \right) = o(n),
\end{eqnarray*}
and so is negligible. 

In order to track which nodes end up in community $C$, we need to start paying attention to the algorithm generating admissible assignments from time $n'+1$ when considered nodes have a chance to end up in community $C$ (because their degrees are at most $K_0$). Based on the observation above, right before this happens there are only $n' = \bigo\left( n \ (\log n)^{-2} \right)$ spots taken from communities of size at least $z+1$. On the other hand, all the remaining nodes may be assigned to any community of size more than $n^{\zeta}$, the trivial upper bound for the maximum degree. By Corollary~\ref{cor:community_sizes}, \wep\ the total number of spots available in communities of size at most $n^{\zeta}$ is equal to
\begin{eqnarray}
n'' &=& (1+\bigo( (\log n)^{-1} )) \ \sum_{k \le n^{\zeta}} \ell k p_k = \bigo\left( \sum_{k \le n^{\zeta}} n^{1-\tau (2-\beta)} k^{1-\beta} \right) \nonumber \\
&=& \bigo\left( n^{1-\tau (2-\beta)} n^{\zeta (2-\beta)} \right) = \bigo\left( n^{1- (\tau-\zeta) (2-\beta)} \right) = \bigo\left( n \ (\log n)^{-2} \right) = o(n). \label{eq:nbis}
\end{eqnarray}
It follows that once $n'+t$ nodes are assigned to communities, the number of nodes assigned to community $C$ can be stochastically 
upper bounded by $U_t$, the hypergeometric random variable with parameters $n-n'-n''=n(1- \bigo( (\log n)^{-2} ))$, $z$, $t$ and 
lower bounded by $L_t$, the hypergeometric random variable with parameters $n-n'=n(1- \bigo( (\log n)^{-2} ))$, $z$, $t$.

Suppose first that $t \le n \, (\log n)^6 / z = \bigo( n \, (\log n)^{-2} )$. Note that $\E [U_t] = zt/(n-n'-n'') = \bigo ( (\log n)^6 )$. Chernoff bound~(\ref{chern2}) applied with $u = (\log n)^6$ gives us that \wep\ $U_t = \bigo ( (\log n)^6 )$. Hence, during this phase of the algorithm, a node considered at time $t+1$ is assigned to community $C$ with probability 
\begin{eqnarray*}
\exp( - \Omega ( (\log n)^2 )) + \frac {z - \bigo (U_t) }{ n(1- \bigo( (\log n)^{-2} )) - t} &=& \frac {z - \bigo ( (\log n)^6 )}{ n(1- \bigo( (\log n)^{-2} )) - \bigo ( n \, (\log n)^{-2} )} \\
&=& \frac {z}{n} \left( 1 + \bigo( (\log n)^{-2} ) \right).
\end{eqnarray*}
Suppose now that $n \, (\log n)^6 / z \le t \le n - n \, (\log n)^{-1}$ so that $\E [U_t] \ge \E [L_t] = zt/(n-n') \ge (\log n)^6$. Chernoff bound~(\ref{chern}) applied with $\eps = (\log n)^{-2}$ gives us that \wep\ 
$$
U_t = \E [U_t] \left( 1 + \bigo ( (\log n)^{-2} ) \right) = \frac {zt}{n} \left( 1 + \bigo ( (\log n)^{-2} ) \right).
$$ 
Similarly, \wep\ $L_t = (zt/n) ( 1 + \bigo ( (\log n)^{-2} ))$. During this phase of the algorithm, a node considered at time $t+1$ is assigned to community $C$ with probability 
\begin{eqnarray*}
\exp( - \Omega ( (\log n)^2 )) + \frac {z - (zt/n) ( 1 + \bigo ( (\log n)^{-2} ))}{ n(1- \bigo( (\log n)^{-2} )) - t} &=& \frac {z ( 1 - t/n + \bigo ( (t/n) (\log n)^{-2} ))}{ (n-t) (1- \bigo( (\log n)^{-1} )) } \\
&=& \frac {z ( 1 - t/n ) (1 + \bigo ( (\log n)^{-1} ))}{ (n-t) (1- \bigo( (\log n)^{-1} )) } \\
&=& \frac {z}{n} \left( 1 + \bigo( (\log n)^{-1} ) \right),
\end{eqnarray*}
since $t \le n - n \, (\log n)^{-1}$. The end of the algorithm is unpredictable; we claim no bound for the probability of assigning a node to community $C$ when $t > n - n \, (\log n)^{-1}$. However, fortunately, the contribution from these nodes will turn out to be negligible. 

\bigskip

Let us now ``rewind'' the process and ``play'' it from the very beginning, this time paying attention what kind of nodes are assigned to community $C$. Our goal is to estimate $\vol(C)$, the volume of community $C$ of size $z$ and $\vol_c(C)$, the community volume of $C$. Since the adjustment needed for the leader of $C$ changes the value of $\vol_c(C)$ by at most 1 (and so is negligible), we may assume that each node of degree $w_i$ assigned to community $C$ is non-leader, that is, its community degree is equal to $y_i = \lfloor (1-\xi) w_i \rceil$.

Let $A$ be the set of nodes of degree larger than $K_0 = \left\lfloor \frac {z-1}{1-\xi\phi} \right\rfloor \ge k_0 = \left\lfloor \frac {z_0-1}{1-\xi\phi} \right\rfloor = \Omega ( (\log n)^8 )$ that cannot be assigned to community $C$. By Lemma~\ref{lem:degree_distribution}, \wep\ the volume of $A$ is equal to
\begin{eqnarray*}
\vol(A) &=& \sum_{k > K_0} k Y_k \le \sum_{k > k_0} k Y_k = (1+\bigo( (\log n)^{-1} )) \ \sum_{k > k_0} n k q_k = \bigo\left( \sum_{k > k_0} n k^{1-\gamma} \right) \\
&=& \bigo\left( n k_0^{2-\gamma} \right) = \bigo\left( n \ (\log n)^{-8(\gamma-2)} \right) = o(n),
\end{eqnarray*}
and so is negligible comparing to the total volume that is \wep\ linear (see Corollary~\ref{cor:volume_of_A}). Let $B$ be the set of the last $n \, (\log n)^{-1}$ nodes that we do not control, that is, we cannot predict the probability that a given node joins $C$ but we will be able to say how many of them do it. Since \wep\ these nodes have degree $\delta$ and so $\vol(B) = \bigo (n \, (\log n)^{-1})$. More importantly, the expected number of spots already taken from community $C$ when the nodes from $B$ are about to be considered is equal to
$$
\frac {z}{n} \left( 1 + \bigo( (\log n)^{-1} ) \right) \cdot (|V| - |A| - |B|) = z \left( 1 + \bigo( (\log n)^{-1} ) \right).
$$
Hence, the expected number of spots left is $\bigo( z (\log n)^{-1} )$ and so \wep\ at most $\bigo( z (\log n)^{-1} )$ are indeed left by Chernoff bound~(\ref{chern2}) applied with $u = z (\log n)^{-1} \ge (\log n)^7$. It follows that
\begin{eqnarray*}
\E [\vol(C)] &=& \frac {z}{n} \left( 1 + \bigo( (\log n)^{-1} ) \right) \cdot (\vol(V) - \vol(A) - \vol(B)) + \bigo( z (\log n)^{-1} ) \\
&=& \vol(V) \cdot \frac {z}{n} \left( 1 + \bigo( (\log n)^{-\min(1,8(\gamma-2))} ) \right) \\
&=& d z \left( 1 + \bigo( (\log n)^{-(\gamma-2)} ) \right).
\end{eqnarray*}
From this we immediately get that
$$
\E [\vol_c(C)] = (1-\xi) \E [\vol(C)] = (1-\xi) d z \left( 1 + \bigo( (\log n)^{-(\gamma-2)} ) \right).
$$

The concentration for very large communities follows from Lemma~\ref{lem:chernoff_gen}. Note that if $C$ is very large, then both $\E [\vol(C)]$ and $\E [\vol_c(C)]$ are of order $z = \Omega( n^{\zeta} (\log n)^4 )$. We get the desired bound for the failure probability by applying the lemma with $\eps = (\log n)^{-1}$ and $c = D = n^{\zeta}$. If $C$ is large, then $\E [\vol(C)] = \bigo( n^{\zeta} (\log n)^4 )$. The lemma applied with $u = n^{\zeta} (\log n)^4$ and $c = D = n^{\zeta}$ gives us that \wep\ $\vol(C) = \bigo( n^{\zeta} (\log n)^4 )$. Finally, by performing the same computation as in~(\ref{eq:nbis}) we get that \wep\ $|F| = \bigo\left( n \ (\log n)^{-2} \right) = o(n)$. It follows that $\E [ \vol(F) ] = \bigo\left( n \ (\log n)^{-2} \right)$ and so \wep\ $\vol(F) = \bigo\left( n \ (\log n)^{-2} \right)$, which finishes the proof of the lemma.
\end{proof}

\subsubsection*{Simulation Corner}

In order to see whether asymptotic predictions of volumes of large communities can be used to predict the behaviour for all communities and for relatively small values of $n$, we generated two \ABCD\ graphs $\Ac$ on $n=1{,}000$ and, respectively, $n=1{,}000{,}000$ nodes. In both cases, we used parameters $\gamma = 2.5$, $\delta = 5$, $\zeta = 1/2 < 2/3 = 1/(\gamma-1)$ (that is, $D=\sqrt{n}$), $\beta=1.5$, $s=50$, and $\tau=3/4$ (that is, $S = n^{3/4}$). On Figure~\ref{fig:comm_vol} for each generated graph $\Ac$ we plot $\ell$ points $(x_i, y_i)$, one for each community $C_i$, $i \in [\ell]$: $x_i = |C_i|$ and $y_i = \vol(C_i) / (\hat{d} |C_i|)$, where $\hat{d} = \sum_{k = \delta}^{D} \ k r_k$ is the discrete counterpart of~$d$ (see~(\ref{eq:power-law-discrete}) for a definition of $r_k$). As expected, larger communities in a larger graph on $n=1{,}000{,}000$ nodes show good concentration but small graphs on $n=1{,}000$ nodes are too small and deviate from the expectation even for the largest communities (that are still too small).

\begin{figure}[ht]
     \centering
     \includegraphics[width=0.48\textwidth]{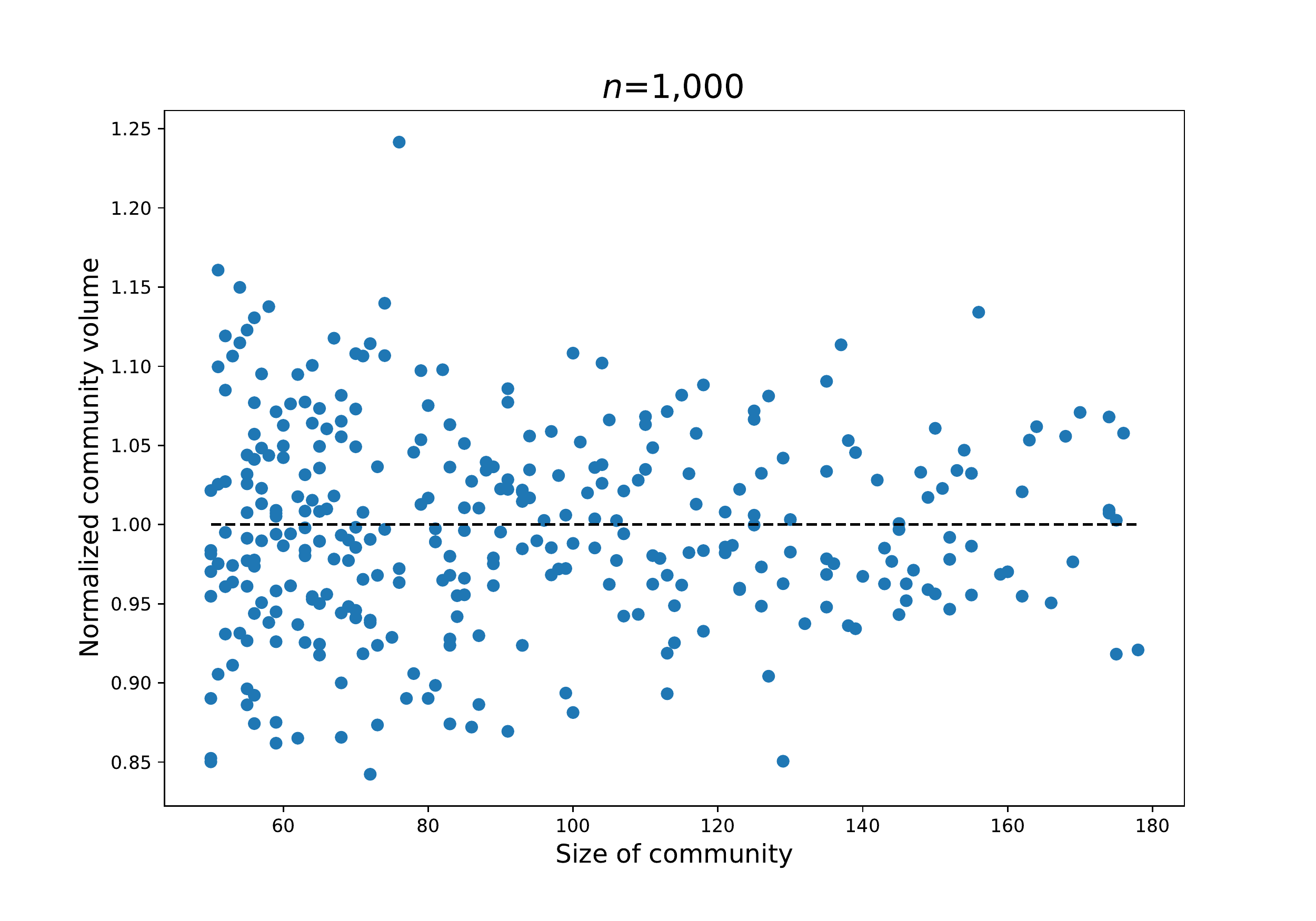}
     \hspace{.1cm}
     \includegraphics[width=0.48\textwidth]{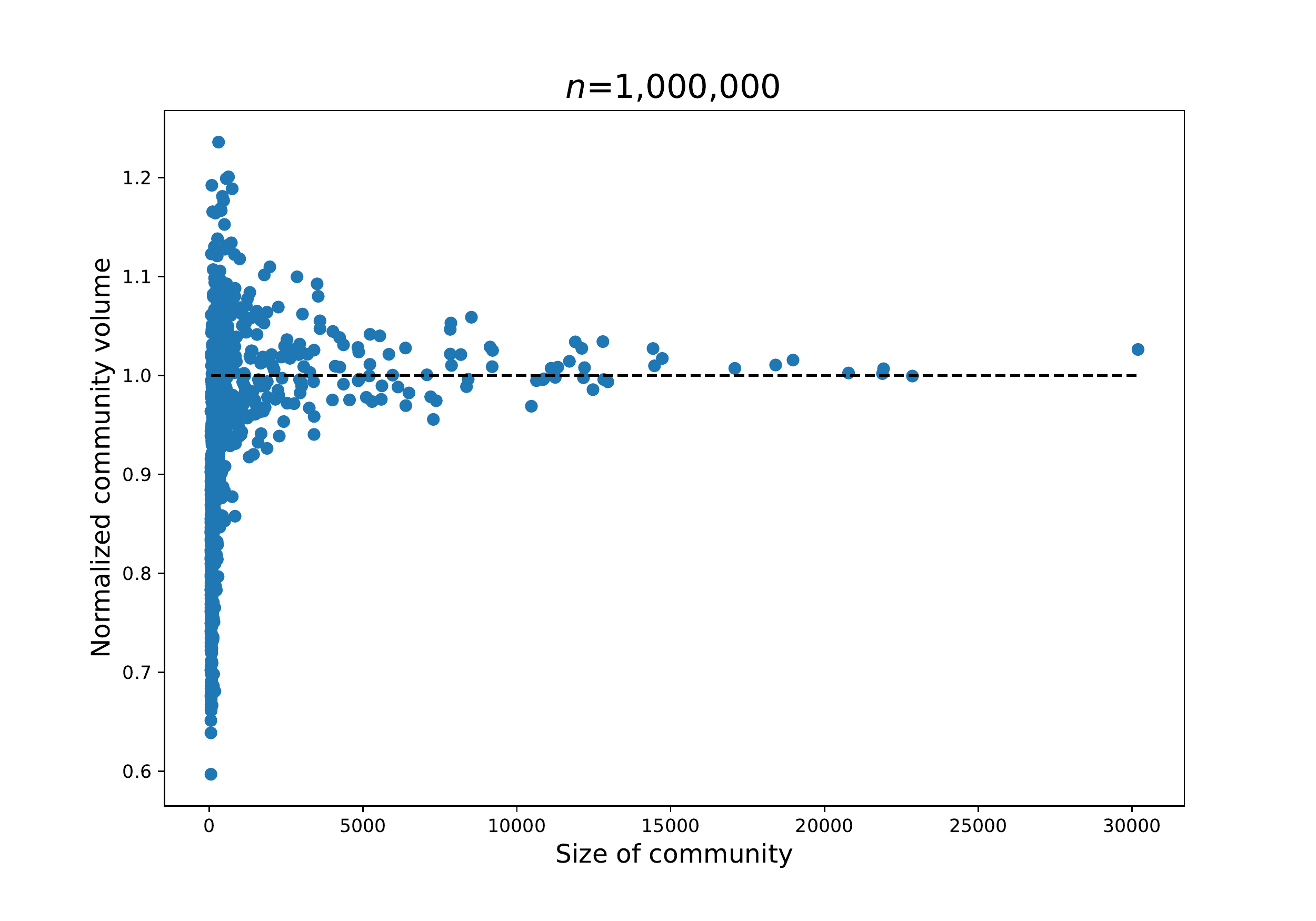}
     \caption{Volumes of communities $C_i$ scaled by $\hat{d}|C_i|$. The dashed line at 1 corresponds to a perfect prediction. 30 independent small graphs were generated ($n=1{,}000$; left plot) but only one large ($n=1{,}000{,}000$; right plot).}
     \label{fig:comm_vol}
 \end{figure}

%%%%%%%%%%%%%%%%%%%%%%%%%%%%%%%%%%%%%%%%%%%%%%%%%%%%%%%%%%%
\section{Modularity}
%%%%%%%%%%%%%%%%%%%%%%%%%%%%%%%%%%%%%%%%%%%%%%%%%%%%%%%%%%%

%%%%%%%%%%%%%%%%%%%%%%%%%%%%%%%%%%%%%%%%%%%%%%%%%%%%%%%%%%%
\subsection{Modularity of the Ground-truth Partition: $q(\C)$}

Let us start by investigating the modularity of the ground-truth partition of $\Ac$.

\begin{theorem}\label{thm:ground-truth}
Let $\C = \{C_1, C_2, \ldots, C_{\ell}\}$ be the ground-truth partition of the set of nodes of $\Ac$. Then, \wep
$$
q^*(\Ac) \ge q(\C) = (1+\bigo( (\log n)^{-(\gamma-2)} )) \, (1-\xi).
$$
\end{theorem}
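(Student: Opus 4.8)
Recall $q(\C) = \sum_i e(C_i)/|E| - \sum_i (\vol(C_i)/\vol(V))^2$, and that $\Ac$ is the multigraph union of the community graphs $G_i$ on $C_i$ (carrying the community degrees $y_v$) and the background graph $G_0 = \Pc(\textbf{z})$ on $V$ (carrying the background degrees $z_v$). An edge lies inside $C_i$ exactly when it is an edge of $G_i$ or a background edge with both ends in $C_i$, so $e(C_i) = |E_i| + e_{G_0}(C_i)$, where $e_{G_0}(C_i)$ counts background edges inside $C_i$. Since $\deg_\Ac(v) = y_v + z_v = w_v$ for all $v$, the identities $|E| = \vol(V)/2$ and $|E_i| = \vol_c(C_i)/2$ hold deterministically, whence
$$
\sum_i \frac{e(C_i)}{|E|} = \frac{\sum_v y_v}{\vol(V)} + \frac{2\sum_i e_{G_0}(C_i)}{\vol(V)}.
$$
The plan is to show that the first term is $(1+\bigo((\log n)^{-(\gamma-2)}))(1-\xi)$ \wep, that the second term and the degree tax are both $o((\log n)^{-(\gamma-2)})$ \wep, and then to invoke $q^*(\Ac)\ge q(\C)$, which holds by definition of $q^*$.

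\emph{The community volumes.} I would split $V$ into the very large communities and the set $F$ of nodes lying in small or large communities. Applying Lemma~\ref{lem:volume_of_communities} to each very large community (a union bound over at most $n$ of them keeps the bound \wep), $\sum_{C \text{ very large}} \vol_c(C) = (1+\bigo((\log n)^{-(\gamma-2)}))(1-\xi)\,d\sum_{C \text{ very large}}|C|$; combining this with $\vol(F) = \bigo(n(\log n)^{-2})$ (so $|F| = \bigo(n(\log n)^{-2})$ since degrees are at least $\delta$) and with $\vol(V) = (1+\bigo((\log n)^{-1}))dn$ from Corollary~\ref{cor:volume_of_A} yields $\sum_{C \text{ very large}} \vol_c(C) = (1+\bigo((\log n)^{-(\gamma-2)}))(1-\xi)\vol(V)$. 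The nodes of $F$ contribute $0 \le \sum_{v\in F} y_v \le \vol(F) = \bigo(n(\log n)^{-2})$, which is absorbed because $\vol(V) = \Theta(n)$ and $1-\xi = \Theta(1)$. Hence \wep\ $\sum_v y_v = (1+\bigo((\log n)^{-(\gamma-2)}))(1-\xi)\vol(V)$.

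\emph{The degree tax and the background edges.} For the degree tax, $\sum_i (\vol(C_i)/\vol(V))^2 \le (\max_i \vol(C_i)/\vol(V))\sum_i \vol(C_i)/\vol(V) = \max_i \vol(C_i)/\vol(V)$; by Lemma~\ref{lem:volume_of_communities} every community has $\vol(C) = \bigo(n^\tau)$ (very large ones since $\vol(C) = \bigo(d|C|) = \bigo(n^\tau)$, the rest since $\vol(C) = \bigo(n^\zeta(\log n)^8) = o(n^\tau)$), so the degree tax is $\bigo(n^{\tau-1}) = o((\log n)^{-(\gamma-2)})$. For the background edges, conditioning on the assignment and the weights, a given point of the bucket-set of $C_i$ lands inside $C_i$ in the pairing of $G_0$ with probability $\bigo(\vol_0(C_i)/\vol_0(V))$, so $\E[\sum_i e_{G_0}(C_i)] = \bigo\big(\sum_i \vol_0(C_i)^2/\vol_0(V)\big) = \bigo(\max_i \vol_0(C_i)) = \bigo(n^\tau)$, using $\sum_i \vol_0(C_i) = \vol_0(V)$, $\vol_0(C_i) \le \vol(C_i) = \bigo(n^\tau)$, and $\vol_0(V) = \vol(V) - \sum_v y_v = (1+o(1))\xi\vol(V) = \Theta(n)$ (as $\xi \in (0,1)$ is fixed). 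Exposing the $\Theta(n)$ pairs of $G_0$ one at a time changes $\sum_i e_{G_0}(C_i)$ by $\bigo(1)$, so a standard bounded-differences argument on the random pairing gives $\sum_i e_{G_0}(C_i) \le \bigo(n^\tau) + n^{3/4} = o(n(\log n)^{-(\gamma-2)})$ \wep.

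\emph{Assembling.} Putting the pieces together, \wep\ the edge contribution equals $(1+\bigo((\log n)^{-(\gamma-2)}))(1-\xi)$; subtracting a degree tax that is nonnegative and $o((\log n)^{-(\gamma-2)})$ yields $q(\C) = (1+\bigo((\log n)^{-(\gamma-2)}))(1-\xi)$ (note that the lower bound direction uses only $\sum_i e(C_i) \ge \sum_i |E_i|$ together with the degree-tax bound, so it needs no estimate on the background edges). Since $q^*(\Ac) \ge q(\C)$ by definition of maximum modularity, the proof is complete. The one step that does not follow essentially verbatim from Section~\ref{sec:basic_properties} is the bound on background edges inside communities: it rests on the ``second moment'' estimate $\sum_i \vol_0(C_i)^2 = \bigo(n^{1+\tau})$ (a consequence of $\tau < 1$ and the volume bounds of Lemma~\ref{lem:volume_of_communities}) and a concentration argument on the random pairing, and I expect this to be the only mildly delicate point, everything else being bookkeeping of error terms and \wep\ union bounds.
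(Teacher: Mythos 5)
Your proof is correct, and it follows the same high-level strategy as the paper's (split into very large communities vs.\ the rest, read off the $(1-\xi)$ from $|E_i| = \vol_c(C_i)/2$, show the background-edge contribution and degree tax are negligible), but two of the sub-steps are handled differently. For the \emph{background edges} inside communities, the paper bounds $e_{G_0}(C_i)$ for each very large $C_i$ separately by sequentially exposing the pairing, dominating the count by a binomial $\mathrm{Bin}(W,W/\vol(V))$ with $W = \vol_{G_0}(C_i)$, and applying Chernoff twice depending on whether $|C_i| \ge \sqrt{n}\log n$; it then sums these bounds over the at most $n^{1-\zeta}(\log n)^{-4}$ very large communities. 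Your version instead bounds $\E[\sum_i e_{G_0}(C_i)] = \bigo(\sum_i \vol_0(C_i)^2/\vol_0(V)) = \bigo(\max_i\vol_0(C_i)) = \bigo(n^\tau)$ in one stroke and then concentrates the total with a bounded-differences/Azuma argument on the pairing; this avoids the case split and the community-by-community union bound and is arguably cleaner, though the phrase ``exposing the $\Theta(n)$ pairs one at a time changes the count by $\bigo(1)$'' should really be stated as a switching-type Lipschitz condition (swapping two pairs changes the count by at most $4$) so that the martingale-difference bound is literally justified. For the \emph{degree tax}, you bound $\max_i\vol(C_i) = \bigo(n^\tau)$ and get $\bigo(n^{\tau-1})$; the paper writes ``$\vol(C_i) = \bigo(n^\zeta(\log n)^8)$ for each community,'' which is actually a small slip (very large communities can have $\vol(C_i) \sim d|C_i|$ with $|C_i|$ up to $n^\tau \gg n^\zeta(\log n)^8$ since $\tau > \zeta$); the paper's conclusion is unaffected, but your bound is the right one. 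Everything else --- the identity $\sum_i e(C_i)/|E| = \sum_v y_v/\vol(V) + 2\sum_i e_{G_0}(C_i)/\vol(V)$, the reduction of $\sum_v y_v$ to the very large communities via Lemma~\ref{lem:volume_of_communities} and $\vol(F) = \bigo(n(\log n)^{-2})$, and the \wep\ union bounds --- matches the paper.
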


\begin{proof}
Let us first estimate the degree tax. By Corollary~\ref{cor:volume_of_A}, \wep\ $\vol(V) = (1+\bigo( (\log n)^{-1} )) \ d n$, where $d = \sum_{k = \delta}^{D} \ k q_k$. By Lemma~\ref{lem:volume_of_communities}, \wep\ for each community $C_i$ we have $\vol(C_i) = \bigo( n^{\zeta} (\log n)^8 )$. It follows that \wep\
$$
\sum_{C_i \in \C} \left( \frac{\vol(C_i)}{\vol(V)} \right)^2 =  \bigo( n^{-(1-\zeta)} (\log n)^8 ) \sum_{C_i \in \C} \frac{\vol(C_i)}{\vol(V)} =  \bigo( n^{-(1-\zeta)} (\log n)^8 ) =  \bigo( (\log n)^{-2} ) = o(1),
$$
and so it is negligible. 

Let us now move to the edge contribution that is more challenging to estimate. We will use the terminology introduced in Lemma~\ref{lem:volume_of_communities}. In particular, we will call community $C_i$ very large if $|C_i| \ge n^{\zeta} (\log n)^4$, and $F$ is the union of communities that are not very large. By Lemma~\ref{lem:volume_of_communities}, \wep\ $\vol(F) = \bigo( n \, (\log n)^{-2} )$ and so the contribution (to the edge contribution) from communities that are \emph{not} very large is \wep\ equal to
$$
\sum_{C_i \in \C, |C_i| < n^{\zeta} (\log n)^4} \frac{e(C_i)}{|E|} \le  \sum_{C_i \in \C, |C_i| < n^{\zeta} (\log n)^4} \frac{\vol(C_i)/2}{\vol(V)/2} = \frac {\vol(F)}{\vol(V)} = \bigo( (\log n)^{-2} ) = o(1),
$$
and so it is negligible. It remains to concentrate on very large communities. 

Let $C_i$ be any very large community. By definition, trivially, all edges of the community graph $G_i=(C_i,E_i)$ appear in $C_i$. By Lemma~\ref{lem:volume_of_communities}, \wep\ the number of edges in $G_i$ is equal to
\begin{eqnarray*}
|E_i| = \vol_c(C_i) / 2 &=& (1+\bigo( (\log n)^{-(\gamma-2)} )) \, (1-\xi) d |C| / 2 \\
&=& (1+\bigo( (\log n)^{-(\gamma-2)} )) \, (1-\xi) \vol(C_i) / 2.
\end{eqnarray*}
To estimate the number of edges in the background graph $G_0$ that appear within $C_i$ we can use the following useful property of the pairing model. One does not need to select one pairing uniformly at random from the set of all pairings but, instead, pairs of points may be chosen sequentially. Moreover, the first point may be selected using any rule whatsoever as long as the second point is chosen uniformly at random from the set of the remaining unchosen points. By Lemma~\ref{lem:volume_of_communities}, \wep\ the number of points in the background graph that are associated with nodes in $C_i$ is equal to
$$
W = \vol(C_i) - \vol_c(C_i) = (1+\bigo( (\log n)^{-(\gamma-2)} )) \, \xi d |C_i|.
$$
In our application, we will always select the first point of the $j$th pair from the set of unchosen points associated with nodes in $C_i$ (arbitrarily). The probability that the second point is also in $C_i$ is equal to
$$
p_j = \frac {W - (j+e_j)} {\vol(V) - (2j - 1)} \le \frac{W}{\vol(V)} =: p,
$$
where $e_j$ is the number of pairs of points that already appeared within $C_i$.
(Indeed, at this point $2(j-1)+1=2j-1$ points are already chosen, $j+e_j$ of them are associated with nodes in $C_i$.)
Hence, the number of edges from the background graph that end up within community $C_i$ can be stochastically upper bounded by the binomial random variable $X \in \textrm{Bin}(w,p)$ with $\E [X] = W p = W^2 / \vol(V)$. 
If $|C_i| \ge \sqrt{n} \, (\log n)$, then $\E [X] = \Omega( (\log n)^2 )$ and we get from Chernoff bound (applied with $\eps = 1$) that \wep\ 
$$
X = \bigo(W^2 / n) = \bigo( W n^{-(1-\tau)}) = \bigo( W \, (\log n)^{-2} ) = \bigo( \vol(C_i) \, (\log n)^{-2} ),
$$
since, by definition, $W = \Theta( |C_i| ) = \bigo ( n^{\tau} )$.
On the other hand, if $|C_i| < \sqrt{n} \, (\log n)$, then $\E [X] = \bigo( (\log n)^2 )$ and we get from Chernoff bound (applied with $u = (\log n)^2$) that \wep\ $X = \bigo( (\log n)^2 )$. 

Since there are clearly at most $n / (n^{\zeta} (\log n)^4) = n^{1-\zeta} (\log n)^{-4}$ very large communities, \wep\ the total number of edges in the background graph $G_0$ that appear within some very large community $C_i$ is at most
\begin{align*}
\sum_{C_i \in \C, |C_i| \ge n^{\zeta} (\log n)^4} & \bigo( \vol(C_i) \, (\log n)^{-2} ) + n^{1-\zeta} (\log n)^{-4} \cdot \bigo( (\log n)^2 ) \\
& = \bigo( \vol(V) \, (\log n)^{-2} ) +  \bigo( n^{1-\zeta} (\log n)^{-2} ) =  \bigo( n \, (\log n)^{-2} ).
\end{align*}
The contribution (to the edge contribution) from very large communities is then \wep\ equal to
\begin{align*}
\sum_{C_i \in \C, |C_i| \ge n^{\zeta} (\log n)^4} & \frac{e(C_i)}{|E|} = \frac { \bigo( n (\log n)^{-2} ) } {|E|}  \ + \sum_{C_i \in \C, |C_i| \ge n^{\zeta} (\log n)^4} \frac{|E_i|}{|E|} \\
&= \bigo( (\log n)^{-2} ) \ + (1+\bigo( (\log n)^{-(\gamma-2)} )) \, \sum_{C_i \in \C, |C_i| \ge n^{\zeta} (\log n)^4} \frac{(1-\xi) \vol(C_i) / 2}{\vol(V) / 2} \\
&= \bigo( (\log n)^{-2} ) \ + (1+\bigo( (\log n)^{-(\gamma-2)} )) \, (1-\xi) \, \frac{ \vol(V) - \vol(F) }{\vol(V)} \\ 
&= \bigo( (\log n)^{-2} ) \ + (1+\bigo( (\log n)^{-(\gamma-2)} )) \, (1-\xi) \, (1+\bigo( (\log n)^{-2} )) \\ 
&= (1+\bigo( (\log n)^{-(\gamma-2)} )) \, (1-\xi),
\end{align*}
which is the only non-negligible contribution to the modularity function. This finishes the proof of the theorem.
\end{proof}

\subsubsection*{Simulation Corner}

In order to see how well Theorem~\ref{thm:ground-truth} predicts the modularity function $q(\C)$ in practice, for each value of $n = 1000 \cdot 2^i$, $i \in \{0, 1, \ldots,15\}$, we independently generated 30 graphs with the same parameters as in the previous experiment: $\gamma = 2.5$, $\delta = 5$, $\zeta = 1/2 < 2/3 = 1/(\gamma-1)$ (that is, $D=\sqrt{n}$), $\beta = 1.5$, $s = 50$, and $\tau = 3/4$ (that is, $S=n^{3/4}$). On Figure~\ref{fig:q_two_values_of_xi} we present the average value and the standard deviation of $q(\C)$, the modularity of the ground-truth partition, for two values of $\xi$: $\xi=0.2$ (low level of noise) and $\xi=0.7$ (high level of noise). We also present the edge contribution part of $q(\C)$, again, its average value and the standard deviation. It seems that the edge contribution for small graphs is slightly larger than the corresponding asymptotic prediction but it converges quite fast. As expected, the degree tax for small graphs is non-negligible but it converges to zero quickly. As a result, both the value of $q(\C)$ and the edge contribution tend to $1-\xi$ as $n$ grows.
 
\begin{figure}[ht]
     \centering
     \includegraphics[width=0.48\textwidth]{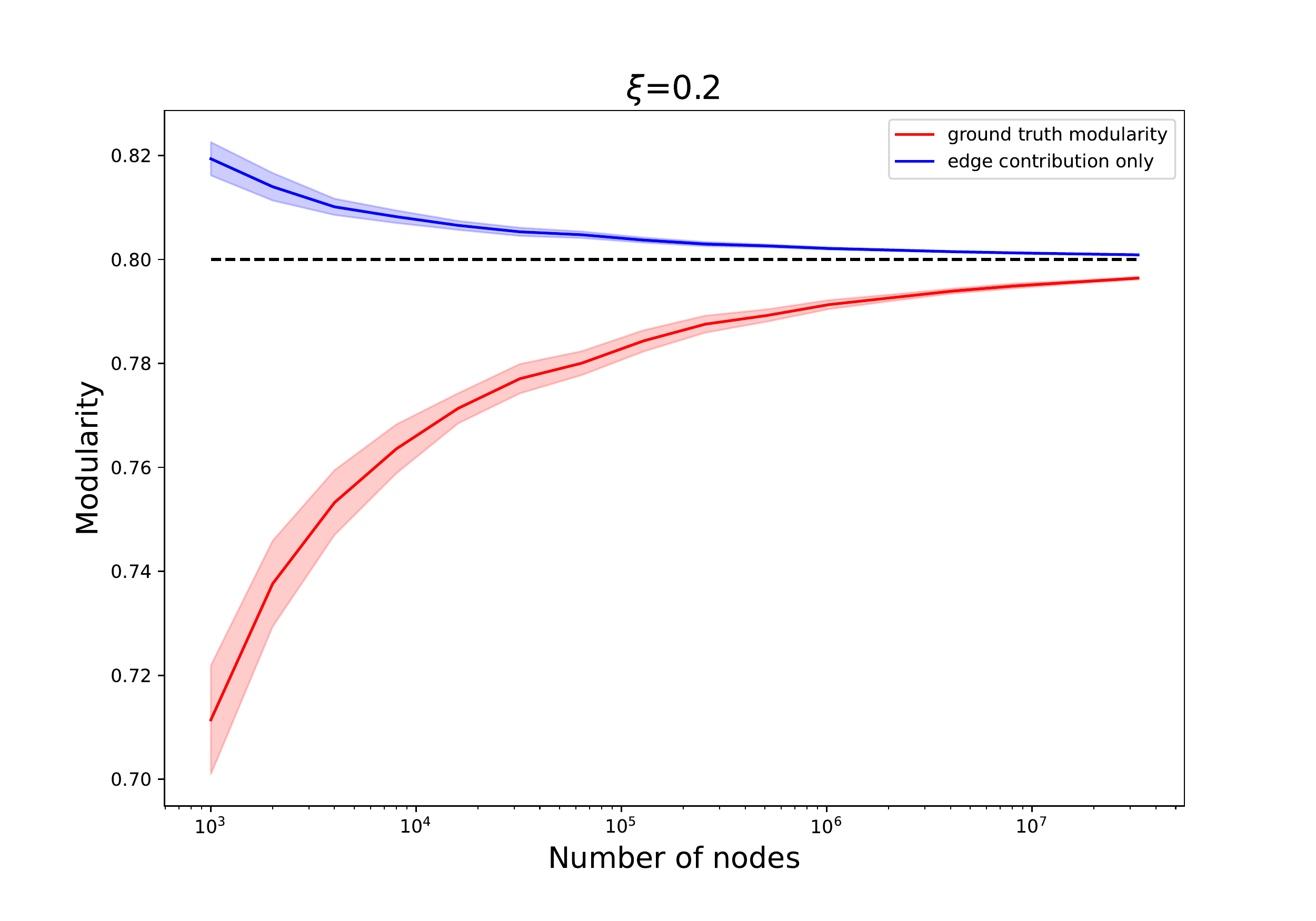}
     \hspace{.1cm}
     \includegraphics[width=0.48\textwidth]{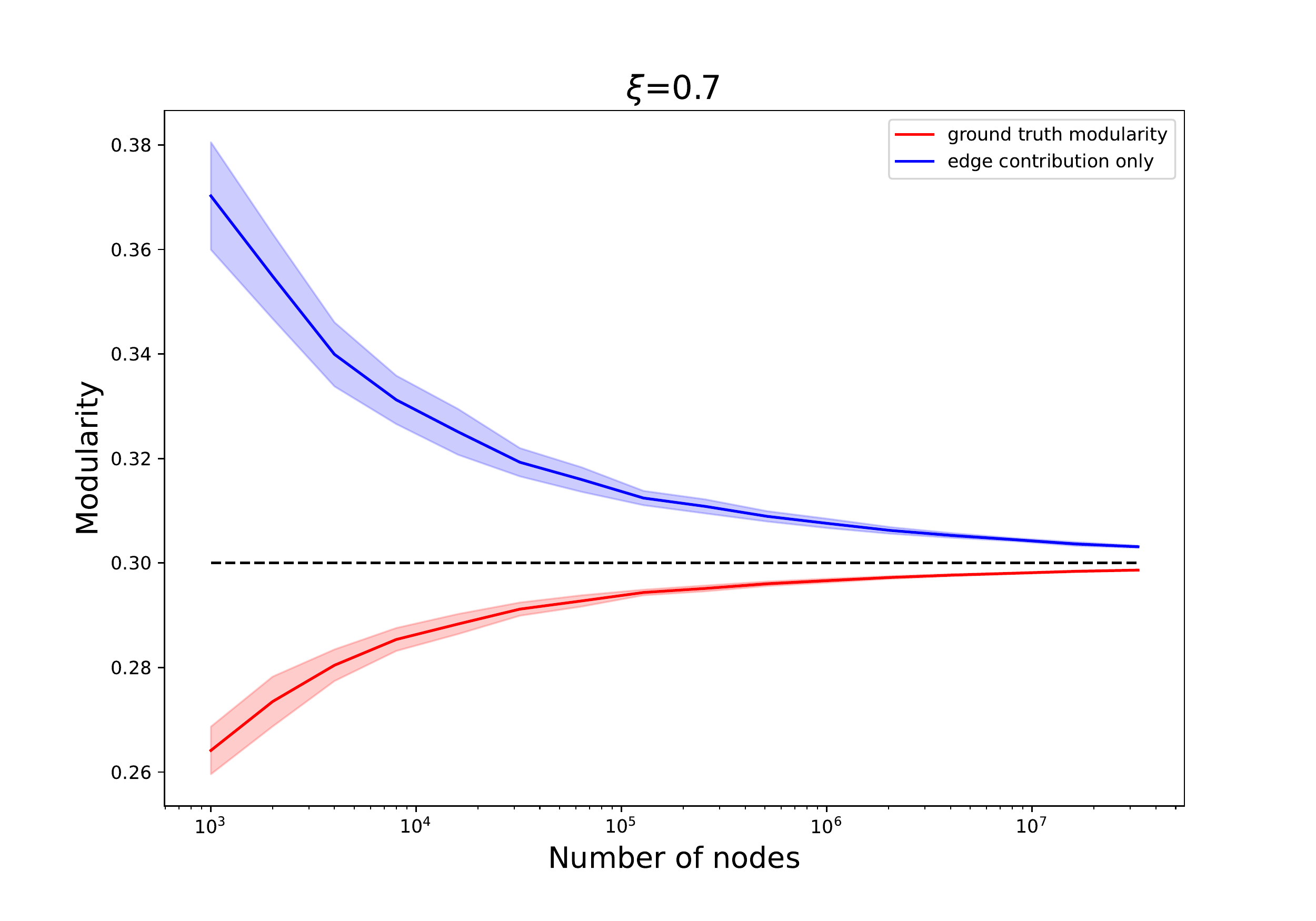}
     \caption{The modularity $q(\C)$ of the ground-truth partition (red) and the corresponding edge contribution (blue) for 30 independently generated graphs; shaded areas represent the standard deviation. The dashed line at $1-\xi$ corresponds to a perfect prediction. Parameters used: $\gamma = 2.5$, $\delta = 5$, $\zeta = 1/2$, $\beta = 1.5$, $s = 50$, and $\tau = 3/4$. Two different levels of noise are investigated.}
     \label{fig:q_two_values_of_xi}
 \end{figure}
 
Additionally, in order to investigate whether there is any difference for various levels of noise, for each value of $\xi=(0.1) i$, $i \in [9]$, we independently generated 30 graphs on $n=1{,}000$ nodes and $n=1{,}000{,}000$ nodes. The results are presented in Figure~\ref{fig:q}. As observed earlier, the edge contribution for small graphs is slightly larger than $1-\xi$, the asymptotic prediction. As expected, the difference is more visible for larger values of $\xi$ as background graph edges in noisy graphs contribute more. On the other hand, the modularity function is closer to its asymptotic prediction for more noisy graphs. Large graphs show almost perfect agreement with the asymptotic prediction. 

\begin{figure}[ht]
     \centering
     \includegraphics[width=0.48\textwidth]{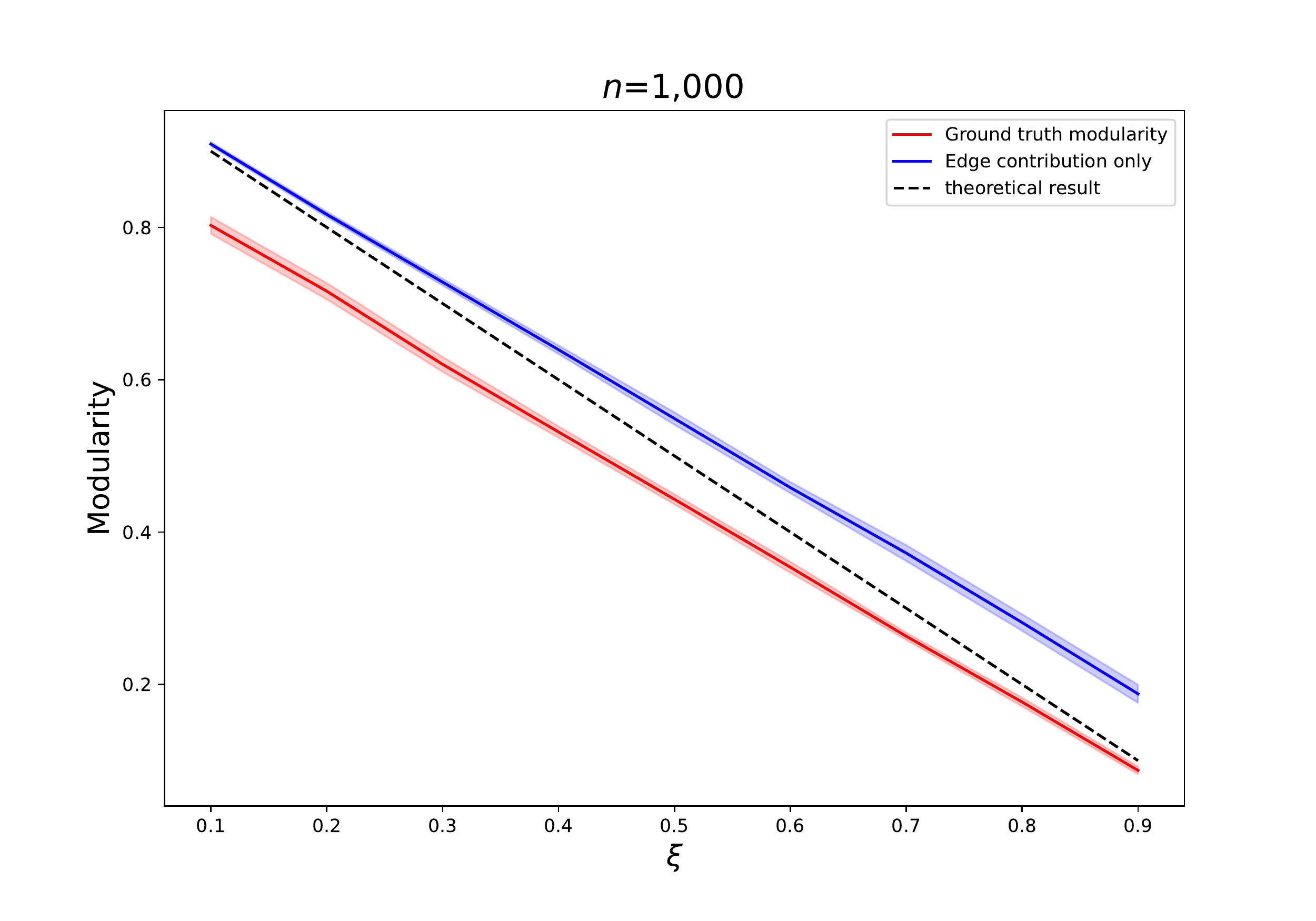}
     \hspace{.1cm}
     \includegraphics[width=0.48\textwidth]{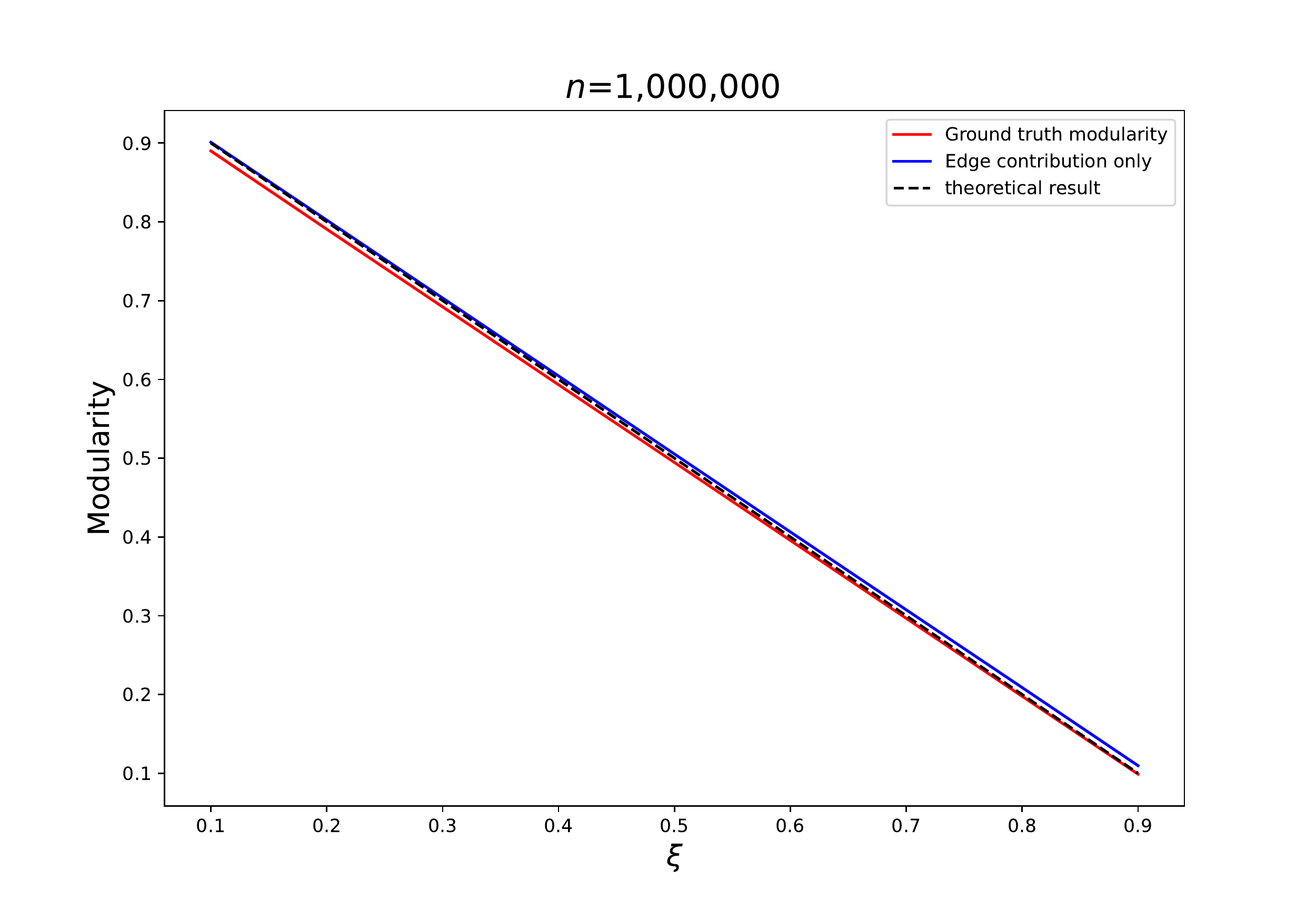}
     \caption{The modularity $q(\C)$ of the ground-truth partition (red) and the corresponding edge contribution (blue) for 30 independently generated graphs; shaded areas represent the standard deviation. The dashed line at $1-\xi$ corresponds to a perfect prediction. Parameters used: $\gamma = 2.5$, $\delta = 5$, $\zeta = 1/2$, $\beta = 1.5$, $s = 50$, and $\tau = 3/4$. Two different graph sizes are investigated.}
     \label{fig:q}
\end{figure}

\subsection{Maximum Modularity: $q^*(G)$}

As mentioned in Section~\ref{sec:related_results}, analyzing the maximum modularity $q^*(G)$ for sparse random graphs is a challenging task and typically only bounds for $q^*(G)$ are known that are far apart from each other. Since the \ABCD\ model $\Ac$ is more complex than other sparse random graphs, especially random $d$-regular graphs, there is no hope for tight bounds for the maximum modularity function but we will make some interesting observations below.

\subsubsection*{Large Level of Noise}

Let us start with investigating graphs with a large level of noise, that is, with $\xi$ close to one. For such graphs, one should focus on the background graph $G_0$ which involves all but a small fraction of edges. It turns out that $G_0$ is connected \whp, provided that its minimum degree is at least 3, or otherwise \whp\ it has a giant component. By restricting ourselves to a spanning tree of the giant component of $G_0$, we may partition the set of nodes into small parts such that each part induces a connected graph. This is not much, but for noisy graphs it yields the modularity that is larger than the modularity of the ground-truth partition. 

\begin{theorem}\label{thm:large_level_of_noise}
Let $\gamma \in (2,3)$, $\delta \in \N$, $\zeta \in (0,\frac{1}{\gamma-1}]$, and $\xi \in (0,1)$. 
\begin{itemize} 
\item[(a)]  If $\xi \delta \ge 3$, then set $\alpha = 1$.
\item[(b)]  If $\xi \delta < 3$, then there exists a universal constant $\alpha > 0$ which depends on the parameters of the model but it is always separated from 0 (that is, $\alpha$ is \emph{not} a function of $n$).
\end{itemize}
There exists a partition $\C$ of the set of nodes $V$ of $\Ac$ such that the following properties hold \whp
\begin{eqnarray*}
q^*(\Ac) \ge q(\C) &\ge& ( 1 + \bigo( n^{-(1-\zeta)/2}) ) \frac {2\alpha n}{\vol(V)} \\
&=& (1+\bigo( (\log n)^{-1} )) \ \frac {2 \alpha}{d}, \hspace{1cm} \text{ where } d = \sum_{k = \delta}^{D} \ k q_k.
\end{eqnarray*}
(Note that $q_i$ is defined in~(\ref{eq:qk_small}).)
\end{theorem}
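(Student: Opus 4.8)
The plan is to build the partition $\C$ from the background graph $G_0$, which for $\xi$ bounded away from $0$ carries a linear number of edges and, more importantly, is well connected. First I would record the relevant structure of $G_0$: it is the configuration model on the background degree sequence $(z_1,\ldots,z_n)$ with $z_i=w_i-y_i$, and apart from the $\ell=o(n)$ leaders one has $z_i\in\{\lfloor\xi w_i\rfloor,\lceil\xi w_i\rceil\}$. Combining the weight split with Lemma~\ref{lem:degree_distribution} (this is Lemma~\ref{lem:degree_distribution_of_G0}) gives, \wep: all background degrees are at most $D=n^{\zeta}$; there is a linear number of vertices with $z_i\ge 3$ (they come from the $\Theta(n\,(3/\xi)^{1-\gamma})=\Theta(n)$ vertices of $\Ac$ with $w_i\ge 3/\xi$); and the number of non-leader vertices with $z_i\le 2$ is controlled. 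In case (a), $\xi\delta\ge 3$ forces $z_i\ge\lfloor\xi w_i\rfloor\ge 3$ for every non-leader, so $G_0$ has minimum degree $\ge 2$ with only $o(n)$ vertices of degree exactly $2$; the standard first-moment bound ruling out isolated cycles in the configuration model then shows $G_0$ is \whp\ connected, and we take $\alpha=1$, $H:=G_0$. In case (b) the background degree sequence is (robustly) supercritical — a positive density of degree-$\ge 3$ vertices makes the Molloy--Reed quantity positive, in fact with large margin since $\gamma<3$ — so the configuration-model giant-component results give a component $H$ of $G_0$ on $|V(H)|=(1+o(1))\alpha n$ vertices \whp, for a constant $\alpha=\alpha(\gamma,\delta,\zeta,\beta,s,\tau,\xi)>0$.

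Next I would convert $H$ into a partition. Fix a spanning tree $T$ of $H$; since $T\subseteq\Ac$, every vertex satisfies $\deg_T(v)\le\deg_{\Ac}(v)\le D$ (and \whp\ even $\le D\cdot\omega$ for any slowly growing $\omega$, by Lemma~\ref{lem:max_degree}). Put $M:=n^{(1+\zeta)/2}$, note $M\ge D$ since $\zeta\le 1$, and weight each vertex $v$ of $T$ by $\deg_{\Ac}(v)\le D\le M$. Run the following carving: while the current tree has total weight $\ge 2M$, pick a deepest vertex $v$ whose rooted subtree has weight $\ge M$ and form a piece by adding child-subtrees of $v$ to $\{v\}$ until the accumulated weight first reaches $M$; each added subtree has weight $<M$ (by the choice of $v$), so the piece is connected, has weight in $[M,2M)$, remove it, and recurse on the remaining forest, which consists of the unincluded child-subtrees of $v$ (each of weight $<M$, hence a piece on its own) together with the part of $T$ above $v$. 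Any tree of weight $<2M$ that is reached becomes a piece. Finally put every vertex of $\Ac$ outside $V(H)$ into a singleton part. Since each carved piece has weight $\ge M$, there are at most $\vol(V)/M$ carvings; each carving orphans at most $\deg_T(v)-1\le D$ child-subtrees; hence the total number $m$ of parts satisfies $m=\bigo\!\big(\vol(V)/M + D\,\vol(V)/M\big)=\bigo(n^{(1+\zeta)/2})$, using $\vol(V)=\Theta(n)$ (Corollary~\ref{cor:volume_of_A}) and $D\,n/M=\Theta(M)$ for this $M$. Also, each part has $\Ac$-volume $<2M$.

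It remains to estimate $q(\C)$. Each part of $t$ vertices is connected in $\Ac$, hence contributes at least $t-1$ internal edges; summing, the edge contribution is at least $(|V(H)|-m)/|E| = (2/\vol(V))\big(\alpha n - \bigo(n^{(1+\zeta)/2})\big) = (2\alpha n/\vol(V))\big(1-\bigo(n^{-(1-\zeta)/2})\big)$. For the degree tax, every part has volume $<2M$, so $\sum_i\vol(A_i)^2 \le 2M\sum_i\vol(A_i) = 2M\,\vol(V)$, whence the degree tax is at most $2M/\vol(V)=\bigo(n^{-(1-\zeta)/2})=o(1)$. Subtracting, $q(\C)\ge(1+\bigo(n^{-(1-\zeta)/2}))\,2\alpha n/\vol(V)$; the second form in the statement follows from $\vol(V)=(1+\bigo((\log n)^{-1}))dn$ (Corollary~\ref{cor:volume_of_A}), and $q^*(\Ac)\ge q(\C)$ holds by definition of $q^*$.

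The main obstacle is the structural input on $G_0$. Case (b) requires the giant-component analysis of the configuration model for the heavy-tailed background degree sequence (exponent $\gamma\in(2,3)$, maximum degree as large as $n^{1/(\gamma-1)}$, and possibly linearly many degree-$0$ and degree-$1$ vertices when $\xi$ is small): one must check that the usual hypotheses for a linear-size giant are met and that the fraction $\alpha$ stays bounded away from $0$ for every fixed choice of parameters. Case (a) needs the (easier but still not completely trivial) fact that $G_0$ is \whp\ connected despite the $o(n)$ leader vertices that may have background degree $2$. Once these are in hand, the rest — the degree-sequence bookkeeping through Lemma~\ref{lem:degree_distribution} and the counting in the tree-carving — is routine; the only genuinely forced choice is the threshold $M=n^{(1+\zeta)/2}$, which is precisely the value balancing the ``number of parts'' error $\bigo(n/M)$ against the ``degree tax'' error $\bigo(M/n)$.
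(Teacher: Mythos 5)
Your plan follows the same route as the paper: pass to the background graph $G_0$, obtain a linear-size connected subgraph $H$ (all of $G_0$ in case (a), a giant component in case (b)), carve a spanning tree of $H$ into connected pieces of controlled volume, and pad the partition with singletons. Your explicit carving with threshold $M=n^{(1+\zeta)/2}$ is a correct re-derivation of Lemma~\ref{lem:Liuda}, which the paper simply cites from~\cite{prokhorenkova2017modularity}; note that $M=\Theta(\sqrt{\Delta\cdot\vol(V)})$ since $\Delta\le n^{\zeta}$ and $\vol(V)=\Theta(n)$, so your choice is the one implicit in that lemma. The degree-tax bookkeeping for the singleton parts of $V\setminus V(H)$ and the normalization via Corollary~\ref{cor:volume_of_A} also match the paper.

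What is missing --- and you flag this yourself --- is a rigorous version of the structural input on $G_0$, and as written the sketches there would not go through. For case (a), a first-moment bound ruling out isolated cycles excludes only one shape of small component; to prove connectivity one must exclude non-giant components of every size and shape, which is not automatic for a heavy-tailed degree sequence with maximum degree up to $n^{\zeta}$. The paper invokes Lemma~\ref{lem:Jane} (from~\cite{gao2020subgraph}) and verifies its one nontrivial hypothesis $J=o(M)$, where $J$ is the sum of the $\Delta$ largest degrees. For case (b), a bare appeal to Molloy--Reed is delicate: the background degree sequence has power-law exponent $\gamma\in(2,3)$ (so the empirical second moment diverges) and, for small $\xi$, a linear proportion of degree-$0$, $1$, $2$ vertices, which the classical criterion does not handle cleanly. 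The paper instead uses Lemma~\ref{lem:Guillem} (from~\cite{joos2018determine}), verifying $R\ge\eps M$ by comparing $\sum_{i:3\le w_i\le K} w_i^2$ with $\sum_{i:3\le w_i\le K} w_i$ for a large constant $K$; this also supplies the universal constant $\alpha=\alpha(\eps)>0$ whose existence you need. Supplying these two inputs (or equivalent rigorous arguments) is what closes your proof; the carving and the final estimates are sound.
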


Recall that the modularity function of the ground-truth partition is \wep\ asymptotic to $1-\xi$. The above theorem implies that if $\delta \ge 4$ and the graph has a large level of noise, namely, $\xi \ge 3/\delta$ and $\xi > 1-2/d$, then \whp\ the modularity function obtained from dissecting the spanning tree of $G_0$ is larger! The same conclusion can be derived when $\delta \le 3$ by considering $\xi$ sufficiently close to one.

\medskip

In order to prove the above theorem, we first investigate the degree distribution of the background graph $G_0$. 

\begin{lemma}\label{lem:degree_distribution_of_G0}
Let $\gamma \in (2,3)$, $\delta \in \N$, $\zeta \in (0,\frac{1}{\gamma-1}]$, and $\xi \in (0,1)$. 
For $k \in \N \cup \{0\}$, let $\hat{Y}_k$ be the random variable counting the number of nodes in the background graph $G_0$ of $\Ac$ that are of degree $k$.
For $k \in \N$ and $\eta = \eta(n)$, let $\hat{Y}_k^{\eta}$ be the random variable counting the number of nodes in the background graph $G_0$ of $\Ac$ that are of degree at least $k$ but at most $(1+\eta)k$.
Finally, for $k \in \N \cup \{0\}$, let
$$
u_k := \sum_{\substack{i \in \N \\ k-1 < \xi i < k+1 \\ \delta \le i \le \Delta}} \Big( 1-|\xi i - k| \Big) q_i ,
$$
where $q_i$ is defined in~(\ref{eq:qk_small}).

The following properties hold \wep:
\begin{itemize}
\item[(a)] If $\zeta \in (0, \frac{1}{\gamma})$, then for any $k \in \N$ such that $\lfloor \xi \delta \rfloor \le k \le \xi \lfloor n^{\zeta} \rfloor$ we have
\begin{equation}\label{eq:light_nodes_G0}
\hat{Y}_k = (1+\bigo( (\log n)^{-1} )) \ n u_k = \Theta( n q_k) = \Theta( n k^{-\gamma} ).
\end{equation}
\item[(b)] If $\zeta \in [\frac{1}{\gamma}, \frac{1}{\gamma-1})$, then for any $k \in \N$ such that $\lfloor \xi \delta \rfloor \le k \le n^{1/\gamma} (\log n)^{-4/\gamma} \ll n^{\zeta}$ random variable $\hat{Y}_k$ satisfies~(\ref{eq:light_nodes_G0}), and for any $k \in \N$ such that $n^{1/\gamma} (\log n)^{-4/\gamma} \le k \le \xi \lfloor n^{\zeta} \rfloor/(1+\eta)$ we have
\begin{equation}\label{eq:heavy_nodes_G0}
\hat{Y}_k^{\eta} = \Theta( n \eta k q_k ),
\end{equation}
where 
$$
\eta = \eta(k) = n^{-1} (\log n)^4 k^{\gamma - 1} = \bigo( (\log n)^{-1} ) = o(1).
$$
\item[(c)] If $\zeta = \frac{1}{\gamma-1}$, then for any $k \in \N$ such that $\lfloor \xi \delta \rfloor \le k \le n^{1/\gamma} (\log n)^{-4/\gamma} \ll n^{\zeta}$ random variable $\hat{Y}_k$ satisfies~(\ref{eq:light_nodes_G0}), and for any $k \in \N$ such that $n^{1/\gamma} (\log n)^{-4/\gamma} \le k \le n^{\zeta} (\log n)^{-5/(\gamma-1)} \ll n^{\zeta}$ random variable $Y_k^{\eta}$ satisfies~(\ref{eq:heavy_nodes_G0}). The number of nodes in $G_0$ of degree at least $n^{\zeta} (\log n)^{-5/(\gamma-1)}$ is equal to $\Theta( (\log n)^5 )$.
\item[(d)] The minimum and the maximum degrees of $G_0$ are respectively $\delta_0 \ge \lfloor \xi \delta \rfloor$ and $\Delta_0 \le \lceil \xi n^{\zeta} \rceil \le n^{\zeta}$. 
\item[(e)] The volume of $G_0$ satisfies
$$
\vol_{\, G_0}(V) = \sum_{k = \delta}^{D} k \hat{Y}_k = (1+\bigo( (\log n)^{-1} )) \ \xi d n, \hspace{1cm} \text{ where } d := \sum_{k = \delta}^{D} \ k q_k.
$$
\end{itemize}
\end{lemma}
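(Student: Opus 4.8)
The plan is to trace everything back to the weight–splitting rule of Subsection~\ref{sec:def_ABCD} and then re-run, almost verbatim, the concentration arguments already used for Lemma~\ref{lem:degree_distribution}. First I would unwind the community/background split. For a non-leader of degree $i$ we have $z_i = i - \lfloor (1-\xi)i \rceil$; writing $(1-\xi)i = m+f$ with $m\in\Z$ and $f=\{(1-\xi)i\}$, the rounding rule~(\ref{eq:rounding}) gives $z_i = i-m-1 = \lfloor\xi i\rfloor$ with probability $1-f$ and $z_i = i-m = \lceil\xi i\rceil$ with probability $f$, so that for every integer $k$
\[
\Pr(z_i = k) = \max\bigl(0,\ 1-|\xi i - k|\bigr),
\]
independently over non-leaders; this is exactly the ``tent weight'' defining $u_k$. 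A leader of degree $i$ has its community weight forced into the same pair $\{\lfloor(1-\xi)i\rfloor,\lceil(1-\xi)i\rceil\}$ (only the selection rule changes), so it still contributes only to the two background degrees $\lfloor\xi i\rfloor$ and $\lceil\xi i\rceil$.

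Next I would condition on the realised degree sequence $(w_1\ge\cdots\ge w_n)$, which by Lemma~\ref{lem:degree_distribution} we may assume satisfies the concentration statements there. The key structural remark is that a background degree $k$ can only arise from original degrees $i$ with $k-1<\xi i<k+1$, i.e.\ from $O(1/\xi)=O(1)$ consecutive values of $i$, all of which satisfy $i=\Theta(k)$ and hence $q_i=\Theta(k^{-\gamma})$. Consequently, up to the contribution of the $L_i$ leaders of degree $i$ (discussed below), $\E[\hat{Y}_k\mid\text{degrees}]=(1+o(1))\,n\sum_i q_i(1-|\xi i-k|)=(1+o(1))\,nu_k$, and $nu_k=\Theta(nq_k)=\Theta(nk^{-\gamma})$ by the above remark; this already pins down the order of magnitude, which is what parts (d)--(e) and the downstream application (Theorem~\ref{thm:large_level_of_noise}) really use. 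For the concentration, in the light regimes (part~(a), and the light halves of (b), (c)) I would write $\hat{Y}_k = \sum_i B_{i,k} + (\text{leader terms})$, where $B_{i,k}\sim\textrm{Bin}(Y_i-L_i,\,1-|\xi i-k|)$ are independent over the $O(1)$ relevant $i$, each with $Y_i=(1+o(1))nq_i$; the total mean is $\Theta(nq_k)=\Omega((\log n)^4)$, so Chernoff's bound~(\ref{chern}) with $\eps=(\log n)^{-1}$ gives $\hat{Y}_k=(1+\bigo((\log n)^{-1}))nu_k$ \wep. In the heavy regimes I would instead group the $O(1)$ contributing degrees into one or two $\eta$-batches exactly as in the proof of Lemma~\ref{lem:degree_distribution}, use that within a batch both $q_i$ and $1-|\xi i-k|$ change only by a factor $1+o(1)$, and apply~(\ref{chern}) to the (already concentrated) batch counts; here only the $\Theta(\cdot)$ bound~(\ref{eq:heavy_nodes_G0}) is required, which keeps the bookkeeping light. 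The count of near-maximum-degree nodes of $G_0$ in part~(c) transfers directly from the corresponding statement of Lemma~\ref{lem:degree_distribution}(c).

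Parts (d) and (e) are then short. For (d): since $z_i = w_i-y_i$ with $y_i\in\{\lfloor(1-\xi)w_i\rfloor,\lceil(1-\xi)w_i\rceil\}$ (up to an $O(1)$ leader correction), we get $\lfloor\xi w_i\rfloor\le z_i\le\lceil\xi w_i\rceil$, and $\delta\le w_i\le D=n^{\zeta}$ yields $\delta_0\ge\lfloor\xi\delta\rfloor$ and $\Delta_0\le\lceil\xi n^{\zeta}\rceil\le n^{\zeta}$. For (e): one may either sum $\sum_k k\hat{Y}_k$ using parts (a)--(c) exactly as Corollary~\ref{cor:volume_of_A} is deduced from Lemma~\ref{lem:degree_distribution}, or argue directly that $\vol_{\,G_0}(V)=\sum_j z_j$ has $\E[\vol_{\,G_0}(V)] = \sum_j \xi w_j + O(\ell) = \xi\,\vol(V)+o(n) = (1+o(1))\,\xi d n$ (using $\ell=o(n)$ and Corollary~\ref{cor:volume_of_A}) and conclude with Lemma~\ref{lem:chernoff_gen} applied with $c=D$.

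The delicate point, and the only nonroutine part, is the bookkeeping for leaders when one wants the sharp $(1+\bigo((\log n)^{-1}))$ form rather than just the order of magnitude: a leader of degree $i$ rounds $\xi i$ via a parity-driven coin rather than via the biased coin $\bigl(1-\{\xi i\},\{\xi i\}\bigr)$, so one must show that $L_i=o(Y_i/\log n)$ throughout the relevant ranges. Here I would use that a leader of degree $i$ forces its community to have size at least $(1-\xi\phi)i+1=\Theta(i)$, so $L_i$ is at most the number of communities of that size, which Corollary~\ref{cor:community_sizes} bounds by $\bigo(\ell\, i^{1-\beta})$; combined with $Y_i=\Theta(nq_i)$ for light $i$ (respectively with the batch bounds for heavy $i$), and with the near-top degrees treated separately as in Lemma~\ref{lem:degree_distribution}(c), this should give the needed domination. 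Everything else is a transcription of arguments already present in Section~\ref{sec:basic_properties}.
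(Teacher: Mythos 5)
Your overall architecture matches the paper's proof closely: reduce $\hat{Y}_k$ to a sum of binomial contributions from nodes of $\Ac$-degree $i$ with $k-1<\xi i<k+1$ (the ``tent weights''), appeal to Lemma~\ref{lem:degree_distribution} for the concentration of the $Y_i$, note there are $\bigo(1/\xi)=\bigo(1)$ contributing values of $i$ so $\E[\hat{Y}_k]=\Theta(nq_k)=\Omega((\log n)^4)$ in the light range, and then Chernoff with $\eps=(\log n)^{-1}$; parts~(d) and~(e) are the quick deterministic bound on $z_i$ and a Chernoff-generalization argument (the paper uses the ``expectation equals $\xi\,\vol(V)$ plus Chernoff'' route, which is also one of your two options).

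Where you diverge in spirit is in the treatment of leaders. The paper's proof states ``a node of degree $i$ in $\Ac$ has degree $\lfloor\xi i\rceil$ in $G_0$ ... the probabilities are tuned so that the expectation is $\xi i$'' and simply applies this to all nodes; it never separates leaders, whose community degree is determined by a parity constraint rather than by the biased coin~(\ref{eq:rounding}). You are right to flag this, and your remark that the support is still $\{\lfloor\xi i\rfloor,\lceil\xi i\rceil\}$ (so the $\Theta$ bounds, part~(d), and part~(e) are unaffected) is exactly the reason the downstream use of this lemma in Theorem~\ref{thm:large_level_of_noise} is safe.

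However, the specific bound you propose for the leader count, namely $L_i\le$ (number of communities of size $\ge (1-\xi\phi)i+1$) $=\bigo(\ell\, i^{1-\beta})$, does \emph{not} by itself yield $L_i=o(Y_i/\log n)$ across the whole light range. At the top of that range, $i\approx n^{1/\gamma}(\log n)^{-4/\gamma}$, so $Y_i=\Theta((\log n)^4)$, while $\ell\, i^{1-\beta}=\Theta\bigl(n^{1-\tau(2-\beta)+(1-\beta)/\gamma}\bigr)$ up to polylog factors; the exponent $1-\tau(2-\beta)+(1-\beta)/\gamma$ is positive for, e.g., the paper's own simulation parameters $\gamma=2.5,\ \beta=1.5,\ \tau=3/4$ (giving exponent $0.425$), so your upper bound for $L_i$ is a positive power of $n$ and hence vacuous against $Y_i=\Theta((\log n)^4)$. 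In other words, ``number of communities large enough to admit a degree-$i$ node'' badly overcounts the number of leaders of degree \emph{exactly} $i$, and the claim ``this should give the needed domination'' is not actually established. To close this you would need a genuinely different argument controlling how many communities have \emph{maximum} degree exactly in a narrow window around $i$ --- a question the paper itself silently skips. That is the one real gap in your write-up; everything else is a faithful (and in places slightly more scrupulous) transcription of the paper's argument.
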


\begin{proof}
As in the statement of Lemma~\ref{lem:degree_distribution}, let $Y_k$ and $Y_k^{\eta}$ be the counterparts of $\hat{Y}_k$ and $\hat{Y}_k^{\eta}$ but defined for the whole graph $\Ac$ instead of its subgraph, the background graph $G_0$. By Lemma~\ref{lem:degree_distribution}, \wep\ the degree distribution of $\Ac$ is well concentrated around its expectation. 

Recall that a node of degree $i$ in $\Ac$ has degree $\lfloor \xi i \rceil$ in $G_0$, where $\lfloor \xi i \rceil$ is a random variable equal to $\lfloor \xi i \rfloor$ or $\lceil \xi i \rceil$, and the probabilities are tuned such that the expectation is equal to $\xi i$---see definition~(\ref{eq:rounding}). We immediately get part (d) of the lemma (in fact, it holds deterministically, not only \wep). Moreover, it implies that $\E [ \vol_{\, G_0}(V) ] = \xi \, \vol_{\, \Ac}(V)$ and so, by Corollary~\ref{cor:volume_of_A} and Chernoff's bound, we get part (e). More importantly, in order for a node of degree $i$ in $\Ac$ to have a chance to be of degree $k$ in $G_0$, we must have $k-1 < \xi i < k+1$. If $k-1 < \xi i \le k$, then its degree in $G_0$ is $k$ with probability $\xi i - (k-1) = 1 - |\xi i - k|$. On the other hand, if $k < \xi i < k+1$, then the probability is equal to $(k+1)-\xi i = 1 - |\xi i - k|$. Hence, we expect $1 - |\xi i - k|$ fraction of nodes of degree $i$ in $\Ac$ to be of degree $k$ in $G_0$ and so
$$
\E \left[ \hat{Y}_k \right] = \sum_{\substack{i \in \N \\ k-1 < \xi i < k+1 \\ \delta \le i \le \Delta}} \Big( 1-|\xi i - k| \Big) \E \left[ Y_i \right].
$$
Note that for any value of $k$, $\E [ \hat{Y}_k ]$ is the linear combination of $\E \left[ Y_i \right]$ with values of $i \in \N$ in the interval of length $2/\xi$. So there are at least $\lfloor 2/\xi \rfloor \ge 2$ terms in the above sum but no more than $2/\xi + 1$. The coefficient $1-|\xi i - k|$ of at least one of them is bounded away from zero so we get that for the range considered in the statement of the lemma, we have $\E [ \hat{Y}_k ] = \Theta( \E [Y_k] )$. As the result, all expectations tend to infinity fast enough, that is, are of order at least $(\log n)^4$. Hence, Lemma~\ref{lem:degree_distribution} combined with the Chernoff's bound, implies the concentration for $\hat{Y}_k$ and $\hat{Y}_k^{\eta}$. The proof of the lemma is finished.
\end{proof}

\medskip

We will also need the following three results from~\cite{gao2020subgraph}, \cite{joos2018determine}, and~\cite{prokhorenkova2017modularity}. The first two provide sufficient conditions for the graph with a given degree sequence to be connected and, respectively, to have a giant component, that is, a component of linear order. (Both papers provide also necessary conditions but we do not need them so we only concentrate on the sufficient ones.)

For any graphical degree sequence $\textbf{w} := (w_1, w_2, \ldots, w_n)$, let $\Gc(\textbf{w})$ denote a random graph on the set of nodes $[n]$ selected uniformly at random from the family of simple graphs where node $i$ has degree $w_i$ for every $i \in [n]$. Recall that a degree sequence is graphic if there exists at least one simple graph with such degree sequence. Our background graph is a multi-graph ($\Pc(\textbf{w})$ instead of $\Gc(\textbf{w})$) but the result below applies to both families of graphs, and the condition of $\textbf{w}$ being graphical is not needed for $\Pc(\textbf{w})$. (In fact, as discussed in Subsection~\ref{sec:expanders}, in order to prove results that hold \whp\ for $\Gc(\textbf{w})$ one typically proves them for $\Pc(\textbf{w})$ and then transfers them to $\Gc(\textbf{w})$.) 

\medskip

The first lemma will be useful when the background graph $G_0$ has minimum degree at least 3.

\begin{lemma}[\cite{gao2020subgraph}]\label{lem:Jane}
Let $\textbf{w} := (w_1, w_2, \ldots, w_n)$ be any graphical sequence such that $w_1 \ge w_2 \ge \ldots \ge w_n \ge 1$. Let 
$$ \Delta = w_1 \text{ \ \ (maximum degree)}, \hspace{1cm} M=\sum_{i=1}^n w_i \text{ \ \ (volume)}, \hspace{1cm} J = \sum_{i=1}^{\Delta} w_i, $$ 
and for each $k$ let 
$$
n_k = \sum_{i=1}^n \textbf{1}_{w_i = k}  \text{ \ \ (the number of nodes of degree $k$)}.
$$ 
If $n_1 = o(\sqrt{M})$, $n_2 = o(M)$, and $J=o(M)$, then \whp\ $\Gc(\textbf{w})$ is connected. 
\end{lemma}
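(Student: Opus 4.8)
The plan is to prove the statement for the configuration model $\Pc(\textbf{w})$ and then deduce it for $\Gc(\textbf{w})$: the conclusion is used in this paper only for the multigraph $G_0 = \Pc(\textbf{w})$, and in general one transfers to $\Gc(\textbf{w})$ by conditioning $\Pc(\textbf{w})$ on being simple (cheap when $\sum_i w_i^2 = \bigo(M)$, and otherwise via a switching argument as in~\cite{gao2020subgraph}). In $\Pc(\textbf{w})$, the graph is disconnected iff there is a nonempty vertex set $S$ that is \emph{closed}, i.e.\ every point of $S$ is paired within $S$; such an $S$ is a union of components, so if the graph is disconnected we may take one with $\vol(S) =: j$ even and $2 \le j \le M/2$. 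The first step is the union bound, with the exact pairing probability,
\[
\Prob[\,\Pc(\textbf{w})\text{ disconnected}\,] \;\le\; \sum_{\substack{2 \le j \le M/2\\ j\text{ even}}} N_j \cdot \frac{(j-1)!!\,(M-j-1)!!}{(M-1)!!}, \qquad \frac{(j-1)!!\,(M-j-1)!!}{(M-1)!!} \le \left(\frac{j}{M}\right)^{j/2},
\]
where $N_j = \#\{S \subseteq [n] : \vol(S) = j\} = [x^j]\prod_{i=1}^n (1 + x^{w_i})$. The whole argument reduces to showing this sum is $o(1)$, and the three hypotheses on $\textbf{w}$ are precisely what kill the different ranges of $j$.

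For the smallest $j$ (equivalently, for closed sets built only from degree-$1$ and degree-$2$ vertices) I would argue directly by first moments on the possible component shapes: such a set is a disjoint union of isolated loops/cycles (all degree $2$) and isolated paths (two degree-$1$ ends, degree-$2$ interior, including the degenerate length-one path between two degree-$1$ vertices). The expected number of isolated cycles is $\sum_{k\ge 1}\tfrac{1}{2k}(1+o(1))(2n_2/M)^k = \bigo(n_2/M) = o(1)$ by $n_2 = o(M)$, and the expected number of isolated paths is $\tfrac{n_1^2}{2M}\sum_{k\ge 0}(1+o(1))(2n_2/M)^k = \bigo(n_1^2/M) = o(1)$ by $n_1 = o(\sqrt M)$; the length-zero term here (an isolated edge between two degree-$1$ vertices) is exactly what forces the $\sqrt{M}$ threshold.

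The bulk of the work is the remaining range, for which I would bound $N_j \le [x^j]\prod_i e^{x^{w_i}} \le F(x)/x^j$ for any $x\in(0,1)$, with $F(x) = \exp\big(n_1 x + n_2 x^2 + \sum_{k\ge 3} n_k x^k\big)$ and $\sum_{k\ge 3}n_k x^k \le x^3\,|W_{\ge 3}| \le \tfrac{M}{3}x^3$, then choose $x = x(j)$ by a saddle-point balance and substitute into $N_j (j/M)^{j/2}$; one checks that the terms $n_1 x$, $n_2 x^2$, and the contribution of the top of the degree sequence (this is where $J = o(M)$ must be invoked, to prevent the highest-degree vertices from cheaply populating a small closed set) are each of smaller order than $\tfrac{j}{2}\log(M/j)$, so the summand is summably small, while for $j$ proportional to $M$ the crude bound $N_j \le 2^n \le 2^M$ already beats $(j/M)^{j/2}$. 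The main obstacle is exactly this last estimate: making the union bound converge \emph{uniformly} across all scales of $j$, the genuinely delicate part being the ``medium'' range $\sqrt{M} \ll j \ll M$, where neither the explicit small-component enumeration nor the crude $2^M$ count is sharp, and where one must carefully track how $n_1 = o(\sqrt M)$, $n_2 = o(M)$ and $J = o(M)$ interlock in the saddle-point bound. (A cleaner route, which~\cite{gao2020subgraph} in fact takes, is to obtain connectivity as a corollary of a general asymptotic formula for the number of subgraphs of $\Gc(\textbf{w})$ with a prescribed degree sequence, applied to spanning unions of small components; the same three conditions are what make that formula applicable.)
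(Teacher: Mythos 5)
This lemma is quoted in the paper directly from~\cite{gao2020subgraph}; the paper gives no proof of it, so there is no in-paper argument to compare your attempt against. Your outline is the natural direct route --- a first-moment union bound over closed vertex sets, with the exact pairing probability and a generating-function control of $N_j$ --- and you correctly identify what each hypothesis is for: $n_1 = o(\sqrt{M})$ kills isolated edges on two degree-$1$ vertices, $n_2 = o(M)$ kills isolated cycles and long paths, and $J = o(M)$ controls the top of the degree sequence.

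There is, however, a concrete step that fails in the large-$j$ regime. The bound $\tfrac{(j-1)!!\,(M-j-1)!!}{(M-1)!!} \le (j/M)^{j/2}$ is correct but lossy: the exact value equals $\binom{M/2}{j/2}/\binom{M}{j}$, which is smaller by a factor of order $(1-j/M)^{(M-j)/2}$, and this loss is fatal once $j$ is a positive fraction of $M$. Take the cleanest instance covered by the lemma, a $3$-regular degree sequence (so $n_1 = n_2 = 0$ and $J = O(1)$). At $j = M/2$ one has $N_j = \binom{M/3}{M/6} = 2^{M/3(1+o(1))}$ while $(j/M)^{j/2} = 2^{-M/4}$, so $N_j\,(j/M)^{j/2} = 2^{M/12(1+o(1))} \to \infty$; the even cruder $N_j \le 2^n \le 2^M$ that you invoke for $j \asymp M$ is of course worse, so the assertion that it ``already beats $(j/M)^{j/2}$'' is simply wrong. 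Using the exact $\binom{M/2}{j/2}/\binom{M}{j}$ against an entropy estimate for $N_j$ does close this range, but that requires knowing $n \le (1/2-\Omega(1))M$, and this is where $n_1, n_2 = o(M)$ (pushing almost all volume into degree $\ge 3$) must re-enter. You also flag, rightly, that the middle range $\sqrt{M} \ll j \ll M$ is where the saddle-point choice of $x$ and the condition $J = o(M)$ have to interlock, and you leave that part uncarried out. So even after repairing the large-$j$ estimate, what you have is a plan rather than a proof; as you note yourself,~\cite{gao2020subgraph} sidesteps this balancing act by deriving connectivity from an asymptotic formula for the number of subgraphs of $\Gc(\textbf{w})$ with a prescribed degree sequence.
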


The next lemma will be used when the background graph has minimum degree at most 2.

\begin{lemma}[\cite{joos2018determine}]\label{lem:Guillem}
Let $\textbf{w} := (w_1, w_2, \ldots, w_n)$ be any graphical sequence such that $w_1 \le w_2 \le \ldots \le w_n $. Let 
\begin{eqnarray*}
J &=& \min \left( \left\{ j : j \in [n] \text{ and } \sum_{i=1}^j w_i (w_i-2) > 0 \right\} \cup \{ n \} \right), \\
R &=& \sum_{i=J}^n d_i, \\
M &=& \sum_{i \in [n] :  d_i \neq 2} d_i.
\end{eqnarray*}
If $R \ge \eps M$ for some $\eps > 0$, then there exists $\alpha=\alpha(\eps)$ such that \whp\ $\Gc(\textbf{w})$ has a component of order at least $\alpha n$. 
\end{lemma}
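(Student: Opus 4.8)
The plan is to explore the graph by breadth-first search from a suitable seed, couple the number of active half-edges with a supercritical branching process, and conclude that \whp\ some component has order at least $\alpha n$ for a suitable $\alpha=\alpha(\eps)>0$. It suffices to prove this for the configuration model $\Pc(\textbf{w})$, since the property ``some component has order at least $\alpha n$'' transfers to $\Gc(\textbf{w})$ by the usual switching/conditioning arguments (cf.\ the discussion of $\Pc_{n,d}$ versus $\Gc_{n,d}$ in Subsection~\ref{sec:expanders}, and~\cite{janson2020random} for unbounded degrees). In $\Pc(\textbf{w})$ the inert vertices can be stripped away: degree-$2$ vertices merely subdivide edges, so contracting each maximal path through internal degree-$2$ vertices into one edge alters no component other than by deleting those vertices; and degree-$1$ vertices are pendant and can only enlarge components. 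After contraction one is left with a random multigraph on the vertices of degree $\neq 2$ whose backbone is the configuration model on the sequence $(w_i:\ w_i\ge 3)$, with each stub becoming, with some probability, a pendant edge to a degree-$1$ vertex; since the degree-$\ge 3$ vertices carry volume $M-n_1\ge R\ge\eps M$ out of the backbone volume $M=\sum_{i:\,w_i\neq 2}w_i$, each backbone stub stays internal with probability at least $\eps-o(1)$.

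Next, explore the percolated backbone and track $S_t$, the number of active stubs after $t$ steps: following an active stub to a vertex of degree $w$ changes $S_t$ by $w-2$, so $S_t$ is a near-martingale with per-step drift $M^{-1}\sum_i w_i(w_i-2)$, which is positive by hypothesis but possibly tiny. The point of $R\ge\eps M$ is to make this supercriticality robust: the definition of $J$ forces $w_i\ge 3$ for all $i\ge J$, whence $\sum_{i\ge J}w_i(w_i-2)\ge\sum_{i\ge J}w_i=R\ge\eps M$, so a constant fraction of exploration steps land on this high-degree tail and each gives a strictly positive push. Exposing the tail vertices first, so that these large increments arrive early, and then continuing the exploration, one gets that $S_t$ stays positive for a linear stretch $t\le\alpha n$ (removing $o(n)$ stubs perturbs the relevant volumes only by $o(n)$, so the drift estimate is stable under iterating the exploration linearly often). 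Azuma-type concentration, or the differential equation method, then shows that, started from a uniformly random stub---which with probability $\Omega(1)$ quickly reaches the tail---the exploration survives, i.e.\ $S_t>0$, until it has discovered $\alpha n$ vertices, with probability bounded away from $0$. A standard sprinkling/second-moment argument upgrades this to a \whp\ statement: reveal the pairing in two rounds, the first producing $\Omega(n)$ vertices lying in components of size $\ge\alpha n$, the second merging a constant fraction of them; since there are at most $O(1/\alpha)$ components of that size, one obtains \whp\ a single component of order at least $\alpha' n$ for some $\alpha'=\alpha'(\eps)>0$.

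The genuine obstacle is the robust-supercriticality step when the second moment $\sum_i w_i^2$ is large or a few vertices carry a constant fraction of the volume, for then the branching-process coupling is not automatically tight. I would resolve this by a case split. If some vertex has degree $\Omega(n)$ (the extreme case of concentrated $R$, including $J=n$), its neighbourhood together with the subtrees hanging from it already has linear volume \whp, giving the conclusion directly. Otherwise every degree is $o(n)$ and, because $R\ge\eps M$, the vertices of degree in $[3,\log n]$ still carry volume $\Omega(n)$; then a two-round exposure works---first reveal the pairs among these medium-degree vertices, where bounded degrees make the usual sparse configuration-model analysis routine and produce components of size $\Theta(\log n)$ totalling volume $\Omega(n)$, and then reveal the rest and run the branching argument above with bounded degrees, so that concentration is immediate, to fuse a constant fraction of these medium components into one. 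Carrying the constants through this gives $\alpha=\alpha(\eps)$ and completes the proof.
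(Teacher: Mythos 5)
First, a point of reference: the paper does not prove this lemma at all --- it is quoted verbatim from~\cite{joos2018determine} and used as a black box, so there is no internal proof to compare against. Measured against the actual proof in that reference (which is a substantial, delicate argument), your sketch follows the right general strategy --- configuration-model exploration, neutralizing degree-$1$ and degree-$2$ vertices, a drift governed by $\sum_i w_i(w_i-2)$, and sprinkling to upgrade to \whp\ --- but it assumes away precisely the crux. You cannot ``expose the tail vertices first'': in the pairing model the partner of each stub is uniform over unmatched stubs, so the discovery order is size-biased random, not chosen, and in fact the size-biasing works against you, since the high-degree vertices that carry the positive drift are discovered (and used up) preferentially, so the drift can decay along the exploration. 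What the hypothesis actually yields is only this: by minimality of $J$, $\sum_{i<J}w_i(w_i-2)>-w_J(w_J-2)$, whence $\sum_i w_i(w_i-2)> R-w_J\ge \eps M-w_J$; so either the Molloy--Reed sum is $\Omega(M)$ or a single vertex has degree $\Omega(M)$. Even in the first case your ``Azuma-type concentration, or the differential equation method'' step is not available as stated, because the increments are unbounded for heavy-tailed sequences and because one must show the drift stays positive over a \emph{linear-volume} stretch, not just at time zero; this is exactly where the cited proof does its real work, and your sketch replaces it with an assertion.

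Your fallback for the heavy-tailed case is also incorrect as stated: $R\ge\eps M$ does not imply that vertices of degree in $[3,\log n]$ carry volume $\Omega(n)$. For instance, $n-n^{0.9}$ vertices of degree $1$ together with $n^{0.9}$ vertices of degree $n^{0.1}$ give $J$ attained inside the degree-$n^{0.1}$ block, $R\sim n$, $M\sim 2n$, yet zero volume on degrees in $[3,\log n]$ (such graphs do have a giant, but not via your two-round medium-degree argument). Two further gaps: (i) a component of linear \emph{volume} in the degree-$\ge 3$ core need not contain $\alpha n$ \emph{vertices}, since $n_2$ (and $n_1$) may be $(1-o(1))n$; to reach ``order at least $\alpha n$'' you must quantify how many stripped degree-$1$ and degree-$2$ vertices are re-absorbed by that component, a step you do not carry out. (ii) The transfer from $\Pc(\textbf{w})$ to $\Gc(\textbf{w})$ by conditioning on simplicity requires the simplicity probability to be bounded away from $0$, which fails when $\sum_i w_i^2\gg\sum_i w_i$ --- precisely the heavy-tailed regime in which this paper applies the lemma; the reference proves the statement for simple graphs directly, and this paper only remarks that the multigraph version is what it needs. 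So the proposal is a reasonable roadmap aligned with the known proof strategy, but as written it has genuine gaps and two steps that are false.
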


The final observation that we will need, regardless of the minimum degree of $G_0$, is that the spanning tree of the background graph (and so also a spanning tree of $\Ac$) can be decomposed into small subtrees. This can be relatively easily done by analyzing some greedy algorithm. We direct the reader to~\cite{prokhorenkova2017modularity} for more details. 

\begin{lemma}[\cite{prokhorenkova2017modularity}]\label{lem:Liuda}
Suppose that $G=(V,E)$ is a connected graph on $n$ nodes. Then there exists a partition $\C$ of $V$ such that 
$$
q^*(G) \ge q(\C) \ge \frac {2n}{\vol(V)} - 3 \sqrt{ \frac {\Delta}{\vol(V)} } - \frac {\Delta}{\vol(V)},
$$
where $\Delta$ is the maximum degree of $G$ and $\vol(V)$ is the volume of $G$. In fact, the edge contribution is at least $2n/\vol(V) - 2 \sqrt{\Delta/\vol(V)}$ and the degree tax is at most $\sqrt{\Delta/\vol(V)} + \Delta/\vol(V)$.
\end{lemma}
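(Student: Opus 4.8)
The plan is to build the partition $\C$ from a spanning tree of $G$ and then bound its edge contribution and its degree tax separately. Since $G$ is connected, fix a spanning tree $T$ of $G$. Any vertex set that induces a subtree of $T$ is connected in $G$, so if $\C=\{V_1,\dots,V_p\}$ is a partition of $V$ in which each part induces a subtree of $T$, then each part is connected and two elementary estimates apply. A connected graph on $|V_i|$ vertices has at least $|V_i|-1$ edges, so the edge contribution is at least $\sum_i (|V_i|-1)/|E| = (n-p)/|E| = 2n/\vol(V) - 2p/\vol(V)$. Setting $x_i := \vol(V_i)/\vol(V)$, so that $\sum_i x_i = 1$, the degree tax is $\sum_i x_i^2 \le (\max_i x_i)\sum_i x_i = \max_i \vol(V_i)/\vol(V)$. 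Hence it suffices to produce a subtree-partition with few parts and with no part of large volume, and the right balance is to aim for parts of volume about $t := \sqrt{\Delta\,\vol(V)}$.

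Concretely, the target is a subtree-partition of $T$ in which every part has volume in the range $[t,\,t+\Delta]$. Granting this, every part has volume at least $t$, and since $\sum_i \vol(V_i)=\vol(V)$ this forces $p\le \vol(V)/t = \sqrt{\vol(V)/\Delta}$, while also $\max_i \vol(V_i)\le t+\Delta$. Plugging in, the edge contribution is at least $2n/\vol(V) - 2/\sqrt{\Delta\,\vol(V)} \ge 2n/\vol(V) - 2\sqrt{\Delta/\vol(V)}$ (using $\Delta\ge 1$), and the degree tax is at most $(t+\Delta)/\vol(V) = \sqrt{\Delta/\vol(V)} + \Delta/\vol(V)$; subtracting gives $q^*(G)\ge q(\C)\ge 2n/\vol(V) - 3\sqrt{\Delta/\vol(V)} - \Delta/\vol(V)$. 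To obtain the target decomposition I would sweep $T$ greedily: root it at an arbitrary vertex, process the vertices in post-order, and maintain at each vertex a pending connected subtree of volume at most $t$; when a vertex $v$ is reached, glue it to the pending subtrees handed up by its children, and cut off a finished part as soon as the running volume first crosses $t$. Since $v$ has at most $\deg_T(v)\le \deg_G(v)\le \Delta$ children, each ``overshoot'' when crossing $t$ can be kept within $\Delta$ with an appropriate choice of the order in which the children's subtrees are absorbed and of which of them are split off as parts on their own.

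The hard part will be precisely that last sentence: a subtree hanging below a high-degree vertex of $T$ cannot in general be cut into connected pieces that all have volume close to $t$, so the greedy sweep is pushed either to emit several small ``leftover'' parts (which would spoil $p\le \vol(V)/t$) or to let one part grow well past $t+\Delta$ (which would spoil the degree-tax bound), and one must argue that both can be avoided simultaneously --- and with the exact constants $2$ and $1$ appearing in the statement, not merely with an $O(\sqrt{\Delta/\vol(V)})$ bound. Two remarks make the situation more comfortable: whenever the stated lower bound is negative there is nothing to prove, since $q^*(G)\ge -1/2$ always; and in the regime where it is positive, $\vol(V)$ is necessarily large compared with $\Delta$, which is exactly what makes the greedy decomposition well behaved. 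A careful execution of this plan is carried out in~\cite{prokhorenkova2017modularity}.
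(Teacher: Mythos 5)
The paper does not actually prove this lemma: it is quoted verbatim from \cite{prokhorenkova2017modularity}, and the only remark offered in the text is that ``the spanning tree\ldots{} can be decomposed into small subtrees\ldots{} by analyzing some greedy algorithm.'' So strictly speaking there is no in-paper proof to compare against; what can be assessed is whether your outline matches the intended argument and whether its stated intermediate steps are sound.

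Your framework is the right one and matches the cited approach: take a spanning tree, partition it into connected pieces, bound the edge contribution from below by $(n-p)/|E|$ and the degree tax from above by $\max_i \vol(V_i)/\vol(V)$, and balance at $t=\sqrt{\Delta\,\vol(V)}$. The arithmetic you carry out from there is correct. The one genuine flaw is the intermediate target you set yourself: a subtree-partition in which \emph{every} part has volume in $[t,\,t+\Delta]$ does not exist in general. Already on a path $P_n$ with $t=\sqrt{2\cdot\vol(V)}$ a residual block of volume below $t$ is typically forced. What the greedy actually delivers --- and what \cite{prokhorenkova2017modularity} uses --- is one ``leftover'' part of volume $<t$, with all the other parts of volume in $[t,\,t+\Delta]$. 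This gives $p\le \vol(V)/t+1$ rather than $p\le\vol(V)/t$, but a short check shows that $2/t + 2/\vol(V)\le 2\sqrt{\Delta/\vol(V)}$ whenever $\Delta\ge 2$ and $\vol(V)\ge 2$ (and the lemma is vacuous otherwise, by your own remark that the bound is negative in corner cases), so the stated constants still come out. You do correctly flag the remaining hard step --- constructing the greedy decomposition of the tree with overshoot at most $\Delta$ per part --- as the real content, and you defer it to the reference exactly as the paper does; I would only ask you to restate the target with the leftover part allowed, since as written it is literally unachievable and the rest of your derivation silently depends on the corrected version.
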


Now, we are ready to combine all of these observations and prove Theorem~\ref{thm:large_level_of_noise}.

\begin{proof}[Proof of Theorem~\ref{thm:large_level_of_noise}]
Suppose first that $\xi \delta \ge 3$. We will use Lemma~\ref{lem:Jane} to show that the background graph $G_0$ is \whp\ connected. By Lemma~\ref{lem:degree_distribution_of_G0}(d) we know that the minimum degree of $G_0$ is at least $\lfloor \xi \delta \rfloor \ge 3$. Let $\textbf{w} := (w_1, w_2, \ldots, w_n)$ be the degree sequence of $G_0$ such that $w_1 \ge w_2 \ge \ldots \ge w_n \ge 3$. The only property we need to check before we may apply Lemma~\ref{lem:Jane} is to verify that $J=o(M)$. Let $\omega=\omega(n)$ be any function tending to infinity as $n \to \infty$. By Lemma~\ref{lem:degree_distribution_of_G0}(a-c), \wep\ the number of nodes of degree at least $\omega$ is $\sum_{k \ge \omega} \Theta( n k^{-\gamma}) = \Theta( n \omega^{-(\gamma-1)} ) = n^{1-o(1)}$, much more than the maximum degree of $G_0$; trivially, $\Delta \le n^{\zeta}$. Hence, \wep
$$
J = \sum_{i=1}^{\Delta} w_i \le \sum_{i, w_i \ge \omega} w_i = \sum_{k \ge \omega} \Theta( k \cdot n k^{-\gamma}) = \Theta( n \omega^{-(\gamma-2)} ) = o(n) = o(M),
$$
since $M = \vol_{\, G_0}(V) = \Theta(n)$ by Lemma~\ref{lem:degree_distribution_of_G0}(e). Lemma~\ref{lem:Jane} implies that \whp\ $G_0$ is connected and so, trivially, $\Ac$ is connected \whp\ too.

Since $\Delta \le n^{\zeta}$, we get immediately from Lemma~\ref{lem:Liuda} that there exists a partition $\C$ of $V$ such that
$$
q(\C) \ge ( 1 + \bigo( n^{-(1-\zeta)/2}) ) \frac {2n}{\vol_{\, \Ac}(V)}.
$$
The conclusion follows from Corollary~\ref{cor:volume_of_A} which implies that \wep\ $\vol_{\, \Ac}(V) = (1+\bigo( (\log n)^{-1} )) \ d n$. This finishes part~(a) of the theorem.

\medskip

Suppose now that $\xi \delta < 3$. Lemma~\ref{lem:degree_distribution_of_G0}(a-c) tells us that \wep\ there are linearly many nodes of degree $2$ so, unfortunately, Lemma~\ref{lem:Jane} cannot be applied. We will use Lemma~\ref{lem:Guillem} instead. This time we need to label nodes of $G_0$ such that $w_1 \le w_2 \le \ldots \le w_n$. Note that
$$
\sum_{i=1}^j w_i (w_i-2) = - n_1 + \sum_{i=n_1+n_2+1}^j w_i (w_i-2) = - n_1 + \sum_{i=n_1+n_2+1}^j w_i^2 - 2 \sum_{i=n_1+n_2+1}^j w_i,
$$
where $n_1$ and $n_2$ denote the number of nodes of degree one and two, respectively. By Lemma~\ref{lem:degree_distribution_of_G0}(a-c), \wep\ $n_1 = (1+\bigo( (\log n)^{-1} )) \ n u_k$. On the other hand, for any constant $K \ge 3$ we have
\begin{eqnarray*}
\sum_{i \in [n]: 3 \le w_i \le K} w_i^2 &=& \sum_{k = 3}^K \Theta ( k^2 \cdot n k^{-\gamma} ) = \Theta( n K^{3-\gamma} ) \ \ \text{ and} \\
\sum_{i \in [n]: 3 \le w_i \le K} w_i &=& \sum_{k = 3}^K \Theta ( k \cdot n k^{-\gamma} ) = \Theta( n ),
\end{eqnarray*}
where the constants hidden in the $\Theta(\cdot)$ notation depend on parameters of the model but do not depend on $n$ nor $K$. Since the first sum grows with $K$ and the second one does not, $\sum_{i=1}^j w_i (w_i-2)$ becomes positive when the sum is taken over nodes of degree at most $K$ for some sufficiently large constant $K$. (Let us stress it again that $K$ is a universal constant, possibly large, but \emph{not} a function of $n$.) We get that $J \le (1-\lambda) n$ for some $\lambda>0$ and so both $R$ and $M$ are of order $n$. Since $R \ge \eps M$ for some $\eps > 0$, Lemma~\ref{lem:Guillem} can be applied and we get that \whp\ $G_0$ has a component of order at least $\alpha n$ for some constant $\alpha$ which only depends on $\eps$. (One may try to estimate $R$ and $M$ better, which would give some bound for $\eps$ but Lemma~\ref{lem:Guillem} does not provide an explicit function $\alpha=\alpha(\eps)$ anyway so there is no point to do it.) 

As before, it is enough to apply Lemma~\ref{lem:Liuda} but this time to the giant component of $G_0=(V,E)$, graph $G'_0=(V',E')$, instead of $G_0$. By Corollary~\ref{cor:volume_of_A}, \wep\ $\vol_{\, G_0}(V') = \Theta(n) = \Theta( \vol_{\, \Ac}(V) )$ (note that since $G_0$ is connected, $\vol_{\, G_0}(V') \ge |V'| \ge \alpha n$). By Lemma~\ref{lem:Liuda}, there exists a partition $\C'$ of $V'$ such that the corresponding edge contribution for the giant component in $G_0$ is at least $( 1 + \bigo( n^{-(1-\zeta)/2}) ) 2 \alpha n / \vol_{\, G_0}(V')$, that is, the number of edges within one of the parts is at least $( 1 + \bigo( n^{-(1-\zeta)/2}) ) \alpha n$. On the other hand, the degree tax is $\bigo( n^{-(1-\zeta)/2})$.

Let $\C$ be a partition of $V$ obtained by extending the partition $\C'$ of $V' \subseteq V$ by adding a trivial partition of $V \setminus V'$ consisting of parts of size one (singletons). Since partition $\C$ captures exactly the same edges within some of its parts as partition $\C'$, the corresponding edge contribution for $\Ac$ is $( 1 + \bigo( n^{-(1-\zeta)/2}) ) 2 \alpha n / \vol_{\, \Ac}(V)$. On the other hand, since \wep\ $\vol_{\, G_0}(V') = \Theta( \vol_{\, \Ac}(V) )$, the contribution to the degree tax for $\Ac$ from parts in partition $\C'$ is still $\bigo( n^{-(1-\zeta)/2})$. The contribution to the degree tax for $\Ac$ from singletons is clearly 
\begin{eqnarray*}
\sum_{i \in V \setminus V'} \left( \frac {w_i}{\vol_{\, \Ac}(V)} \right)^2 &\le& \frac {\Delta}{\vol_{\, \Ac}(V)^2}  \sum_{i \in V \setminus V'} w_i \le \frac {\Delta}{\vol_{\, \Ac}(V)^2}  \sum_{i \in V} w_i = \frac {\Delta}{\vol_{\, \Ac}(V)} \\
&=& \bigo( n^{-(1-\zeta)}) = \bigo( n^{-(1-\zeta)/2}). 
\end{eqnarray*}
We get that \whp\ 
$$
q(\C) \ge ( 1 + \bigo( n^{-(1-\zeta)/2}) ) \frac {2\alpha n}{\vol(V)},
$$
which finishes the proof of the theorem.
\end{proof}

\subsubsection*{Low Level of Noise}

This time we will investigate graphs with a low level of noise, that is, with $\xi$ close to zero. Let us fix a value of $\delta \in \N$ such that $\delta \ge 100$. For any $a \in \N$ and $b \in \N \setminus \{1,2\}$ such that $ab < \delta$, let
\begin{equation}\label{eq:c_ab}
c(a,b) := \frac {b-2\sqrt{b-1}}{2b} \, \frac {ab}{ab+b-1} - \frac {b-1}{ab+b-1} - 0.011.
\end{equation}
Let
\begin{equation}\label{eq:xi_delta}
\xi_0(\delta) := \max_{a \in \N, b \in \N \setminus \{1,2\}, ab < \delta} \min \left( 1- \frac {ab}{\delta}, \frac {c(a,b)}{4} , \frac {1}{20} \right).
\end{equation}
It is clear that $\xi_0(\delta)$ is a non-decreasing function of $\delta$. Moreover, $\xi_0 (100) \approx 0.0217$ (the maximum is achieved for $a=8$ and $b=12$), and $\xi_0 (\delta) = 1/20$ for $\delta \ge 340$. 

\medskip

Our first result says that \ABCD\ graph $\Ac$ with minimum degree $\delta \ge 100$ and $\xi \in (0,\xi_0(\delta))$ has \whp\ the maximum modularity $q^*(\Ac)$ asymptotically equal to the modularity function on the ground-truth. 

\begin{theorem}\label{thm:modularity_small_xi}
Let $\delta \in \N$ such that $\delta \ge 100$ and $0 < \xi < \xi_0(\delta)$, where $\xi_0(\delta)$ is defined in~(\ref{eq:xi_delta}). Let $\C = \{C_1, C_2, \ldots, C_{\ell}\}$ be the ground-truth partition of the set of nodes of $\Ac$. Then, \whp
$$
q^*(\Ac) \sim q(\C) \sim 1-\xi.
$$
\end{theorem}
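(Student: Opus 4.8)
The plan is to establish the matching lower and upper bounds for $q^*(\Ac)$. The lower bound $q^*(\Ac) \ge q(\C) = (1+o(1))(1-\xi)$ is already given by Theorem~\ref{thm:ground-truth}, so the entire work is the upper bound: no partition $\A$ of $V$ can beat $(1+o(1))(1-\xi)$. Fix an arbitrary partition $\A$ and decompose its edge contribution into three sources of ``within-part'' edges: (i) community edges of the very large communities $C_i$, (ii) community edges of the communities in $F$ (small plus large-but-not-very-large), and (iii) edges of the background graph $G_0$. By Lemma~\ref{lem:volume_of_communities}, $\vol(F) = \bigo(n(\log n)^{-2})$, so source (ii) contributes $o(1)$ to the edge contribution regardless of $\A$. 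Source (i) contributes at most $\sum_i |E_i|/|E| = (1+o(1))(1-\xi)$, again by Lemma~\ref{lem:volume_of_communities} and Corollary~\ref{cor:volume_of_A}. So the crux is to show that $\A$ can capture only a $o(1)$ fraction of the background edges \emph{net of the degree tax it pays} — more precisely, that the background contribution minus the degree tax is at most $o(1)$, and moreover that restricting $\A$ so as to capture many edges of some $G_i$ forces the degree-tax on that part to be essentially $\left(\vol_c(C_i)/\vol(V)\right)^2$, recovering the $-(1-\xi)^2$-type penalty that makes $q(\C)$ optimal. The cleanest framing: show that for any $\A$,
\[
q(\A) \le \sum_{C_i \text{ very large}} \frac{e_{G_i}(\A)}{|E|} - \sum_{A_j \in \A}\left(\frac{\vol(A_j)}{\vol(V)}\right)^2 + o(1),
\]
where $e_{G_i}(\A)$ counts $G_i$-edges inside parts of $\A$, and then bound the right side using expansion of $G_0$ together with the Chung--Lu-type degree tax.

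The key new ingredient — flagged in the paper's overview — is the coupling of a Chung--Lu/configuration-model graph $\Pc(\textbf{w})$ on $n'$ nodes with a random $b$-regular pairing $\mathcal{P}_{n'',b}$, used to transfer the expansion properties of Friedman's theorem (Lemma~\ref{lem:Fri}), the Expander Mixing Lemma (Lemma~\ref{lem:AC}), and the bisection bound~(\ref{eqn:bisection}) to the background graph $G_0$. Concretely: group the nodes of $G_0$ by degree into ``buckets,'' and within a bucket of nodes of $G_0$-degree roughly $\xi k$, split each node into $a$ copies of degree $b$ (with $ab$ close to the degree, $b \ge 3$); the split points of all buckets together, suitably merged, behave like a uniform random $b$-regular pairing, so by Lemma~\ref{lem:mod_expander} and~(\ref{eqn:bisection}) any cut of this auxiliary regular graph loses a $(b-\lambda)/b \ge (b - 2\sqrt{b-1})/b - \eps$ fraction of its incident edges. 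Pulling back through the coupling shows that for \emph{any} vertex subset $S$ with $\vol_{G_0}(S)$ not too close to $0$ or $\vol_{G_0}(V)$, the number of $G_0$-edges leaving $S$ is at least a constant fraction of $\vol_{G_0}(S)$; this is exactly the expansion statement that the choices of constants $a,b$ and the quantity $c(a,b)$ in~(\ref{eq:c_ab}) are calibrated to. The condition $\delta \ge 100$ and $\xi < \xi_0(\delta)$ is precisely what guarantees that a good $(a,b)$ pair exists with $ab < \delta$, $\xi\delta < ab$ leaving enough background degree, and $c(a,b)/4 > \xi$ so the expansion beats the noise level.

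With expansion of $G_0$ in hand, the argument proceeds part-by-part. For any part $A_j$ of $\A$, the $G_0$-edges captured inside $A_j$ number at most $\tfrac12\vol_{G_0}(A_j) - \tfrac12|E_{G_0}(A_j, V\setminus A_j)| \le \tfrac12\vol_{G_0}(A_j)(1 - c')$ for a constant $c' > 4\xi$ (by expansion), unless $\vol_{G_0}(A_j)$ is a $1-o(1)$ fraction of $\vol_{G_0}(V)$ — but there is at most one such part, and for it the community structure forces most of its volume to be community volume, so the captured $G_0$-edges are still a negligible fraction of $|E|$. Summing, the total $G_0$-contribution to the edge contribution is at most $(1-c')\xi d n / (2|E|) + o(1) = (1-c')\xi + o(1)$, which must then be weighed against the $(1-\xi)$-type benefit; a short convexity/optimization argument over how $\A$ allocates volume between community parts and background parts (using that the degree tax is minimized at the ground-truth allocation and that any deviation capturing $G_0$-edges pays strictly more in lost $G_0$-edges than it gains) shows that $q(\A) \le (1+o(1))(1-\xi)$. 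Combined with Theorem~\ref{thm:ground-truth} this gives $q^*(\Ac) \sim q(\C) \sim 1-\xi$.

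The main obstacle I expect is making the coupling rigorous and quantitatively tight: one must show that the auxiliary $b$-regular pairing built from $G_0$ is genuinely distributed (asymptotically) as $\mathcal{P}_{n'',b}$ so that Friedman's theorem applies, handle the leftover ``fractional'' degree when $ab$ does not exactly equal the $G_0$-degree of every node, and control the non-uniform, light-heavy degree structure of $G_0$ established in Lemma~\ref{lem:degree_distribution_of_G0} (heavy nodes and the highest-degree nodes must be quarantined, e.g.\ placed in singletons, contributing only $o(1)$ to everything). A secondary delicate point is the ``single giant part'' case: a part absorbing almost all the volume cannot be handled by expansion directly, and one must instead argue via the community structure that such a part still captures only $(1+o(1))(1-\xi)\cdot(\text{its volume fraction})$ worth of edges net of degree tax. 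Both of these are where the constant $100$ (rather than something smaller) enters and where the ``different approach needed to find the real bottleneck'' remark applies.
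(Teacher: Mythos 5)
Your high-level shape is right---lower bound from Theorem~\ref{thm:ground-truth}, upper bound by showing no partition can capture more than a $(1+o(1))(1-\xi)$ fraction of edges net of degree tax, and a configuration-to-regular coupling as the engine---but the coupling in your proposal is aimed at the wrong graph, and as a result the constants you invoke do not work and the key step is missing.

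You want to couple the \emph{background graph} $G_0$ with $\mathcal{P}_{n'',b}$, splitting nodes of $G_0$-degree $\approx \xi k$ into blocks of size $b$. But for that coupling to apply you need $ab$ at most the minimum degree of the graph being coupled, which for $G_0$ is $\lfloor \xi\delta\rfloor$. With $\delta=100$ and $\xi$ near $\xi_0(100)\approx 0.02$, this is about $2$, which forces $a=1$, $b\le 2$ --- no expansion is available, and even $a=1,b=3$ makes $c(1,3)<0$. Meanwhile the optimal pair in the paper, $a=8,b=12$ with $ab=96$, is only admissible because the coupling is applied to the \emph{community graphs} $G_i$, whose minimum degree is $\lfloor (1-\xi)\delta\rfloor \approx 98$. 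The constraint $\xi < 1 - ab/\delta$ in~(\ref{eq:xi_delta}) exists precisely to guarantee $ab\le (1-\xi)\delta$ for the $G_i$'s; it has nothing to do with the degree of $G_0$. (Your phrase ``$\xi\delta < ab$ leaving enough background degree'' goes the wrong way: if $\xi\delta < ab$, the coupling with $G_0$ is simply impossible.) The crucial and hardest step in the proof is to show that a modularity-optimal ``nice'' partition cannot split any non-problematic community across parts; this is established by coupling each $G_i$ with a random $b$-regular graph, applying the bisection bound~(\ref{eqn:bisection}) to the coupled $b$-regular graph, and pulling back (paying a loss of at most $W_i - W_i'$ for the external auxiliary buckets) to show any piece $U_i$ of a split community loses at least $(c-10^{-5}+0.011)W_i$ edges --- then arguing, in two cases depending on whether $\vol(P_1)$ is above or below $\tfrac{c-1.1\xi}{2}\vol(V)$, that re-merging or isolating the community strictly increases $q$. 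This expansion-of-$G_i$ step is exactly what your ``short convexity/optimization argument'' skips over.

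The place where $G_0$-expansion \emph{does} enter is at the very end, and there the paper first contracts each non-problematic community to a super-node. The contracted $G_0$ then has all degrees tending to infinity, so one can couple with $\mathcal{P}_{n'',d}$ for an arbitrarily large constant $d$ and read off $q_{G_0}(\mathbf{P}) = \bigo(1/\sqrt d) = o(1)$ from Lemmas~\ref{lem:mod_expander} and~\ref{lem:Fri}. Without the contraction, the minimum degree of $G_0$ is $\lfloor\xi\delta\rfloor$ and this argument would give only a weak, $\xi$-dependent constant that does not vanish --- which is precisely the gap you would hit after your bound ``$q(\A)\le 1-\xi+(1-c')\xi + o(1)$'': that is larger than $1-\xi$ and does not close. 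So the approach as written would fail; the community-graph coupling followed by contraction is the missing idea.
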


The lower bound of $100$ for $\delta$ as well as the constants $\xi_0(\delta)$ are not tuned for the strongest result. Since the proof technique we use will not allow us to close the gap anyway, we aimed for a simple argument that works for large enough $\delta$ and relatively simple constants. Having said that, the above property is not true for $\delta=1$; that is, if $\Ac$ has minimum degree $\delta=1$, then one may find a partition of the nodes of $\Ac$ that yields larger modularity than the one associated with the ground-truth.

\begin{theorem}\label{thm:modularity_delta1}
Fix $\delta = 1$ and let $0 < \xi < 1$. Let $\C = \{C_1, C_2, \ldots, C_{\ell}\}$ be the ground-truth partition of the set of nodes of $\Ac$. Then, \wep
\begin{eqnarray*}
q^*(\Ac) &\ge& (1+\bigo( (\log n)^{-(\gamma-2)} )) \, \left( (1-\xi) + \frac {\xi q_1}{d} \Big( 2 - \frac {q_1}{d} \Big) \right) \\
&>& (1+\bigo( (\log n)^{-(\gamma-2)} )) \, (1-\xi) = q(\C),
\end{eqnarray*}
where $q_k$ is defined in~(\ref{eq:qk_small}) and $d = \sum_{k = \delta}^{D} \ k q_k$.
\end{theorem}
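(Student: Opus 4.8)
The plan is to build a partition $\C'$ of $V$ by a local perturbation of the ground-truth partition $\C$ that captures, in addition, almost all edges incident to degree-$1$ nodes whose unique edge is a background edge; this gains a positive constant in the edge contribution while the degree tax stays $o(1)$.

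First I would single out these nodes. Since $\delta=1$, Lemma~\ref{lem:degree_distribution} gives \wep\ $Y_1=(1+o(1))nq_1$ nodes of degree $1$, and for a non-leader degree-$1$ node $v$ the community degree is $y_v=\lfloor 1-\xi\rceil$, which equals $0$ with probability $\xi$, independently over nodes; as the number of leaders equals the number of communities, which is $o(n)$ \wep\ by Corollary~\ref{cor:community_sizes}, Chernoff's bound gives \wep\ that the set $B$ of \emph{background-pendant} nodes --- degree $1$ in $\Ac$, degree $0$ in their community graph, degree $1$ in $G_0$ --- has size $|B|=(1+o(1))\xi q_1 n$. For $v\in B$ let $u(v)$ be its unique $G_0$-neighbour (the edge $vu(v)$ is $v$'s only edge in $\Ac$, and is neither a loop nor parallel); if $u(v)\in B$ then necessarily $u(u(v))=v$, so such nodes come in \emph{pendant pairs}. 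Now define $\C'$: each pendant pair $\{v,u(v)\}$ becomes its own part; each $v\in B$ with $u(v)\notin B$ lying in a different ground-truth community than $v$ is moved into the community of $u(v)$; all other nodes (in particular the $v\in B$ for which $vu(v)$ already lies inside a community --- the ``exceptions'') stay put. The bound inside the proof of Theorem~\ref{thm:ground-truth}, together with $\vol(F)=\bigo(n(\log n)^{-2})$, shows that \wep\ only $\bigo(n(\log n)^{-2})=o(n)$ background edges lie inside communities, so there are \wep\ $o(n)$ exceptions. Finally I would check that the number $P$ of pendant pairs is \wep\ $(1+o(1))\,\xi q_1^2 n/(2d)$: conditionally on $B$, $2P$ is the number of points of $B$-nodes matched to points of $B$-nodes in the $G_0$-pairing on $\vol_{\,G_0}(V)=(1+o(1))\xi dn$ points, whose mean is $(1+o(1))|B|^2/(2\,\vol_{\,G_0}(V))=(1+o(1))\xi q_1^2 n/(2d)$ by Lemma~\ref{lem:degree_distribution_of_G0}, and which concentrates \wep\ by the standard (bounded-difference) martingale exposure of the pairing.

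For the edge contribution: moving a background-pendant node out of its ground-truth community destroys no captured edge, since it has community degree $0$ and, away from the exceptions, its single edge leaves that community. Each of the $|B|-2P-o(n)$ background-pendant nodes handled by a move makes the edge $vu(v)$ internal, and each of the $(1+o(1))P$ pendant pairs whose edge was not already internal does the same; so $\C'$ captures \wep\ $(1+o(1))(\xi q_1 n-P)=(1+o(1))\,\xi q_1 n\,(1-\tfrac{q_1}{2d})$ more edges than $\C$. As the edge contribution of $\C$ is $(1+\bigo((\log n)^{-(\gamma-2)}))(1-\xi)$ by Theorem~\ref{thm:ground-truth} and $|E|=\vol(V)/2=(1+o(1))dn/2$, the edge contribution of $\C'$ is \wep
\begin{equation*}
(1+\bigo((\log n)^{-(\gamma-2)}))(1-\xi)+(1+o(1))\,\frac{2\xi q_1(1-\tfrac{q_1}{2d})}{d}=(1+\bigo((\log n)^{-(\gamma-2)}))\Big((1-\xi)+\frac{\xi q_1}{d}\Big(2-\frac{q_1}{d}\Big)\Big).
\end{equation*}

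For the degree tax: the $P=\bigo(n)$ pendant-pair parts have volume $2$ each, contributing $\bigo(P/\vol(V)^2)=o(1)$. A modified community $C'_j$ has $\vol(C'_j)\le \vol(C_j)+g_j$ where $g_j$, the number of nodes moved into $C_j$, is at most the number of $G_0$-edges incident to $C_j$, hence at most $\vol(C_j)$; by Lemma~\ref{lem:volume_of_communities} every community volume is $\bigo(n^{\tau})$ (dominated by the largest very large community, since $\tau>\zeta$), so $\vol(C'_j)=\bigo(n^{\tau})$ for all $j$ while $\sum_j\vol(C'_j)\le\vol(V)=\bigo(n)$; therefore $\sum_j(\vol(C'_j)/\vol(V))^2\le\max_j\vol(C'_j)\cdot\vol(V)/\vol(V)^2=\bigo(n^{\tau-1})=o(1)$. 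Combining the two bounds,
\begin{equation*}
q^*(\Ac)\ge q(\C')\ge(1+\bigo((\log n)^{-(\gamma-2)}))\Big((1-\xi)+\frac{\xi q_1}{d}\Big(2-\frac{q_1}{d}\Big)\Big),
\end{equation*}
and since $d=\sum_{k\ge1}kq_k\ge q_1+2q_2>q_1$ the extra term is a positive constant, so for $n$ large this strictly exceeds $(1+\bigo((\log n)^{-(\gamma-2)}))(1-\xi)=q(\C)$. The one genuinely delicate point is the \wep\ concentration of $P$ (and of the various $o(n)$ error terms); everything else is bookkeeping resting on the concentration results established above.
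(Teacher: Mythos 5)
Your proposal is correct and follows essentially the same route as the paper's (brief) proof: identify the degree-$1$ nodes whose single edge lies in the background graph (the paper's ``lucky'' nodes, your set $B$), move each to its background neighbour's community, treat the isolated background edges between two such nodes (the paper's ``super-lucky'' pairs, your pendant pairs) separately, compute the edge-contribution gain $\frac{\xi q_1}{d}(2-\frac{q_1}{d})$, and observe the degree tax stays $o(1)$. The only cosmetic difference is that you make pendant pairs into their own parts, whereas the paper arbitrarily merges each into one of the two endpoint communities while noting that your choice would be marginally better but with negligible gain.
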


The rest of this subsection is devoted to proving the above theorems. Recall that our ground-truth communities as well the background graph are random multi-graphs $\Pc(\textbf{w})$ on $n'$ nodes with a given degree sequence $\textbf{w} := (w_1, w_2, \ldots, w_{n'})$ generated by the \emph{configuration model}---see Subsection~\ref{sec:creating_graphs} for more details. In order to derive some useful expansion properties of $\Pc(\textbf{w})$, we need to couple $\Pc(\textbf{w})$ with $\mathcal{P}_{n'',b}$, random $b$-regular multi-graph on $n''$ nodes, for some integer $b \ge 3$.
Note that not only $\Pc(\textbf{w})$ but also $\mathcal{P}_{n'',b}$ is generated by the configuration model. In particular, both of them are defined on the set of points that are eventually contracted to form nodes. In what follows, we establish a relationship between points used to generate both of them.

\medskip

\textbf{Coupling Between $\Pc(\textbf{w})$ and $\mathcal{P}_{n'',b}$}: Let us fix $a \in \N$ and $b \in \N \setminus \{1,2\}$ such that $ab \le (1-\xi) \delta$, where $\xi \in (0,1)$ and $\delta \ge 4$ are the parameters of the \ABCD\ model $\Ac$. Note that the minimum degree of each community graph $G_i$ is $\lfloor (1-\xi) \delta \rfloor$ so $ab$ is at most the minimum degree of each $G_i$. 

Consider any random multi-graphs $\Pc(\textbf{w})$ on $n'$ nodes with a given degree sequence $\textbf{w} := (w_1, w_2, \ldots, w_{n'})$ and the minimum degree at most $(1-\xi) \delta$. Let $W = \sum_{i=1}^{n'} w_i$ be the volume of $\Pc(\textbf{w})$ which is equal to the number of points in the corresponding configuration model, as it was defined in Subsection~\ref{sec:creating_graphs}. Recall that the model guarantees that $W$ is even. Our goal is to couple $\Pc(\textbf{w})$ and $\mathcal{P}_{n'',b}$ such that they use, if possible, the same set of points. If $W$ is divisible by $b$, then we have the right number of points to create a random $b$-regular graph on $n'' = W / b$ nodes. Suppose then that $W$ is congruent to $j$ (mod $b$) for some $j \in [b-1]$. If $b-j$ is even, then we add $b-j$ \emph{additional} points; otherwise, the number of additional points is equal to $2b-j$. Note that if $b$ is even, then $j$ has to be even as $W$ is even. So if $b-j$ is odd, then $b$ must be odd and so $2b-j$ is even. Hence, regardless whether $b-j$ is even or not, we add an even number of additional points which will be important for the argument below to hold. This time, a random $b$-regular graph will have a few more points than $\Pc(\textbf{w})$ but, as before, $W$ points will be shared by both models. 

Recall that in the configuration model, points are partitioned into buckets that are eventually contracted to form nodes.
In $\Pc(\textbf{w})$, there are $n'$ buckets, $i$th bucket (corresponding to the $i$th node) consists of $w_i$ points for a total of $W$ points. 
In $\mathcal{P}_{n'',b}$, there are $n''$ buckets (that we will call \emph{auxiliary buckets}), each consisting of $b$ points for a total of $bn'' \ge W$ points (a potential discrepancy is taken care of additional points). 
Before we expose any edges, we need to assign each of the regular $W$ points to buckets in $\Pc(\textbf{w})$ and to auxiliary buckets in $\mathcal{P}_{n'',b}$; additional points will only be assigned to auxiliary buckets in $\mathcal{P}_{n'',b}$. 

First, we arbitrarily place $n'$ buckets associated with $\Pc(\textbf{w})$ on $W$ regular points. Then, for each $i \in [n']$, the $i$th bucket in $\Pc(\textbf{w})$ consisting of $w_i$ points is arbitrarily partitioned into sets of points of size $b$ and, possibly, one set of at most $b-1$ points. By construction, there are at least $a$ sets of points of size $b$ that become buckets in $\mathcal{P}_{n'',b}$ that we will call \emph{internal auxiliary buckets}. The remaining points (including additional points if $W$ is not divisible by $b$) are arbitrarily partitioned into $b$-element \emph{external auxiliary buckets} in $\mathcal{P}_{n'',b}$. Note that this is possible as, by construction, $W$ plus the number of additional points is divisible by $b$. As mentioned earlier, the original buckets will be used to generate $\Pc(\textbf{w})$ and the auxiliary buckets will correspond to random $b$-regular graph generated by $\mathcal{P}_{n'',b}$. 

It is time to start exposing edges, that is, start randomly matching points! One of the advantages of using the pairing model is that the pairs may be chosen sequentially, at each step choosing a point using any rule (possibly randomized) that depends only on the pairs so far chosen and pairing it with a point chosen uniformly at random over the remaining (unchosen) points. We start with exposing additional points (if we have them), one by one. Note that, during this initial phase, it might happen that two additional points are paired together. (This happens with probability $\bigo( 1/n ) = o(1)$ as there are $\bigo(1)$ additional points, so we could announce that the coupling fails and deal with such rare situations in the proof of the theorem differently. However, the coupling can be established even if this rare situation happens.) Suppose that $r$ additional points are matched with not additional ones: for each $i \in [r]$, additional point $p_i$ is matched with point $q_i$. Recall that there are even number of additional points. So, even if an even number of them are matched with other additional points, the number of pairs $p_i q_i$ is even (that is, $r$ is even). Pairs $p_i q_i$ ($i \in [r]$) will be present in $\mathcal{P}_{n'',b}$ but $\Pc(\textbf{w})$ will consist of pairs $q_{2i-1} q_{2i}$ instead ($i \in [r/2]$). Note that the sequence of points $q_i$ ($i \in [r]$) is a sequence of $r$ points selected uniformly at random from the set of non-repeating sequences of points from $\Pc(\textbf{w})$ of that length. Hence, the two corresponding pairings are valid partial random pairings of $\mathcal{P}_{n'',b}$ and, respectively, $\Pc(\textbf{w})$. We continue pairing the remaining points, keeping the obtained pairs in both models. This finishes the coupling. 

\medskip

\textbf{Remark}: Let us summarize the important observations. The coupling between $\Pc(\textbf{w})$ and $\mathcal{P}_{n'',b}$, despite the fact that the number of points associated with these models might be different, has the following properties: a) $\Pc(\textbf{w})$ and $\mathcal{P}_{n'',b}$ are perfect configuration models with their respected fixed degree distributions, and b) almost all pairs of points are coupled; the only discrepancy occurs with pairs associated with nodes in $\mathcal{P}_{n'',b}$ containing additional points and their neighbours ($\bigo(1)$ nodes in $\mathcal{P}_{n'',b}$ and so also $\bigo(1)$ nodes in $\Pc(\textbf{w})$).

\medskip

Now, we are ready to prove the first theorem, Theorem~\ref{thm:modularity_small_xi}.

\begin{proof}[Proof of Theorem~\ref{thm:modularity_small_xi}]
The behaviour of some nodes might be challenging to predict. For example, nodes that belong to a community of constant size may (with positive probability) be partitioned into two parts with no edges between the parts. Such small communities might end up in different parts of the optimal partition yielding the maximum modularity $q^*(\Ac)$ of $\Ac$. Fortunately, \whp\ the number of problematic nodes will be small and so their contribution to the modularity function will be negligible.

We will use the terminology introduced in Lemma~\ref{lem:volume_of_communities}. In particular, we will call community $C_i$ very large if $|C_i| \ge n^{\zeta} (\log n)^4$, and $F$ is the union of communities that are not very large. Each community that is not very large will be called \emph{problematic} and, as a result, all nodes of $F$ will be called \emph{problematic}. By Lemma~\ref{lem:volume_of_communities}, \wep\ $\vol(F) = \bigo( n \, (\log n)^{-2} ) = o(n)$. On the other hand, by Corollary~\ref{cor:volume_of_A}, \wep\ $\vol(V) = (1+\bigo( (\log n)^{-1} )) \ d n$, where $d = \sum_{k = \delta}^{D} \ k q_k$. Hence, \wep\ $\vol(F) = o(\vol(V))$. Since we aim for a statement that hols \whp, we may assume that this property holds.

Let $\delta \ge 100$ be the minimum degree of $\Ac$. Let $a \in \N, b \in \N \setminus \{1,2\}$ be the constants that yield $\xi_0(\delta) > 0$ in~(\ref{eq:xi_delta}), and $c = c(a,b) > 0$ is defined as in~(\ref{eq:c_ab}). 
We couple community graphs $G_i$ of very large communities $C_i$ that are generated according to the model $\Pc(\textbf{w})$ with the model $\mathcal{P}_{n'',b}$. Note that, since $\xi < \xi_0(\delta) \le 1- ab/\delta$, we have $(1-\xi)\delta \ge ab$ and so the coupling may be applied. Let $\lambda_i$ be the largest absolute value of an eigenvalue other than $\lambda = b$ of the adjacency matrix of the random $b$-regular graph $\mathcal{P}_{n'',b}$ associated with $G_i$. (See Section~\ref{sec:expanders} for more on that.) If $\lambda_i > 2 \sqrt{b-1}+10^{-5}$, then we call the associated community $C_i$ and all of its nodes \emph{problematic}. By Lemma~\ref{lem:Fri}, \whp\ each very large community is not problematic so the expected volume of nodes that become problematic is $o(n)$. By the first moment method, we conclude that \whp\ the number of problematic nodes is $o(n)$. Very large communities $G_i$ that are \emph{not} problematic are coupled with $b$-regular graphs that are good expanders ($\lambda_i \le 2 \sqrt{b-1}+10^{-5}$). However, as remarked above, there could be some constant number of nodes in $G_i$ that are incident with edges in $\Pc(\textbf{w})$ that are not present in $\mathcal{P}_{n'',b}$. We will also call these nodes \emph{problematic}. Since the maximum degree is at most $n^{\zeta}$ and each very large community has at least $n^{\zeta} (\log n)^4$ nodes, the volume associated with new problematic nodes is $\bigo(n / (\log n)^4) = o(n)$. 

We call a partition $\mathbf{P}$ of the set of nodes of $\Ac$ \emph{nice} if the following two properties hold: a) all problematic communities (not very large or bad expanders) form a separate part in $\mathbf{P}$, and b) problematic nodes from non-problematic communities belong to a largest part of the partition of its own community that is induced by $\mathbf{P}$ (if there are at least two communities that are largest, then all problematic nodes belong to one of them). Let  $\mathbf{P}^*$ be a partition that yields $q^*(\Ac)$. One can move the problematic nodes around, if needed, to transform $\mathbf{P}^*$ into a nice partition $\mathbf{P}$. Since the volume of the problematic nodes is $o(\vol(V))$, both the edge contribution and the degree tax change by $o(1)$, and so we get that $q(\mathbf{P}) = q(\mathbf{P}^*)+o(1)$. Hence, it is enough to show that the maximum modularity over the family of nice partitions is $1-\xi+o(1)$.

Let $\mathbf{P}$ be a nice partition that yields the largest modularity over this family of partitions. We will show that each non-problematic community is contained in one part of $\mathbf{P}$. For a contradiction, suppose that partition $\mathbf{P}$ partitions a non-problematic community of volume $W$ into $j \ge 2$ parts $U_i$, $i \in [j]$. Let $W_i = \vol(U_i)$. Without loss of generality, we may assume that $W_1 \ge W_2 \ge \ldots \ge W_j$ and that all problematic nodes, if present, belong to $U_1$. Consider any part $U_i$ ($i \ge 2$). Since $U_1$ has a largest volume, $W_i \le W / 2$. Our goal is to estimate $E(U_i, V \setminus U_i)$, the number of edges going from $U_i$ to its complement. To that end, we will use the coupling with $b$-regular graphs. Let $U'_i$ be a subset of nodes of the coupled $b$-regular graph $\mathcal{P}_{n'',b}$ that are associated with internal auxiliary buckets, and let $W'_i$ be its volume. (In order to make it easier for the reader to distinguish sets of nodes in the original graph $\Pc(\textbf{w})$ from those in the coupled graph $\mathcal{P}_{n'',b}$, we will use $U_i$ and, respectively, $U'_i$ as well as $W_i$ and $W'_i$.) Trivially, $W'_i \le W_i$. Moreover, since at least $a$ internal auxiliary buckets of $b$ points (nodes in $\mathcal{P}_{n'',b}$) are contained in each node in $\Pc(\textbf{w})$, 
$$
W'_i \ge \frac {ab}{ab+b-1} W_i.
$$
(The extreme case corresponds to a node in $\Pc(\textbf{w})$ of degree $ab+b-1$ which is associated with $ab$ internal buckets.) Since $\mathcal{P}_{n'',b}$ is a good expander, it follows from~(\ref{eqn:bisection}) that the number of edges from $U'_i$ to its complement satisfies the following inequality:
\begin{eqnarray*}
E(U'_i, V \setminus U'_i) &\ge& (b-\lambda) \frac {|U'_i| |V \setminus U'_i|}{n''} \\
&=& (b-\lambda+o(1)) \frac { (W'_i / b) ((W - W'_i)/b) }{ W/b } \\
&\ge& \frac {b-\lambda+o(1)}{b} \frac {ab}{ab+b-1} \, W_i \, \frac { W - W'_i }{ W } \\
&\ge& \frac {b-\lambda+o(1)}{2b} \frac {ab}{ab+b-1} \, W_i.
\end{eqnarray*}
Note that not all of these edges go from $U_i$ to its complement (in $\Pc(\textbf{w})$) but clearly at most $W_i - W'_i$ of them can stay within $U_i$ as there are $W_i - W'_i$ points in $U_i$ that are assigned to external auxiliary buckets. Since $\lambda \le 2\sqrt{b-1}+10^{-5}$, we get that
\begin{eqnarray*}
E(U_i, V \setminus U_i) &\ge& E(U'_i, V \setminus U'_i) - (W_i - W'_i) \\
&\ge& \frac {b-\lambda+o(1)}{2b} \frac {ab}{ab+b-1} \, W_i - \frac {b-1}{ab+b-1} \, W_i \\
&\ge& \left( \frac {b-2\sqrt{b-1}- 10^{-5} +o(1)}{2b} \frac {ab}{ab+b-1} - \frac {b-1}{ab+b-1} \right) W_i \\
&\ge& \Big( c - 10^{-5} + 0.011 \Big) W_i,
\end{eqnarray*}
where $c = c(a,b)$ is defined in~(\ref{eq:c_ab}).

Now, we will modify the partition $\mathbf{P}$ slightly, keeping the modified partition nice, and show that it improves its modularity which will give us the desired contradiction. We need to independently consider two cases. Suppose first that $U_1$ belongs to a part $P_1$ of $\mathbf{P}$ of volume $\vol(P_1) \le \frac {c-1.1\xi}{2} \, \vol(V)$. We move all nodes in parts $U_i$, $i \ge 2$, to $P_1$. That operation puts together all edges of the non-problematic community we consider within one part but some edges of the background graph might get lost. To estimate the number of background edges that might get removed from some part of partition $\mathbf{P}$, note that the ratio between the background degree and the community degree of a node of degree $w \ge \delta \ge 100$ is at most
$$
\frac { \lceil \xi w \rceil } { \lfloor (1-\xi) w \rfloor } \le \frac { \xi w + 1 } { (1-\xi) w - 1 } \le \frac { \xi \delta + 1 } { (1-\xi) \delta - 1 } \le \frac {100 \xi+1}{94} \le 1.07 \xi + 0.0107,
$$
since $\xi \le 1/20$. Hence, the edge contribution increases by at least 
$$
\Big( c - 10^{-5} + 0.011 \Big) \sum_{i = 2}^{j} \frac {W_i}{\vol(V)} - \Big( 1.07 \xi + 0.0107 \Big) \sum_{i = 2}^{j} \frac {W_i}{\vol(V)} \ge \Big( c - 1.07 \xi \Big) \sum_{i = 2}^{j} \frac {W_i}{\vol(V)}.
$$
On the other hand, the degree tax increases by at most 
\begin{eqnarray*}
\frac { (\vol(P_1) + \sum_{i = 2}^{j} W_i)^2 } {\vol(V)^2} - \frac { \vol(P_1)^2 }{ \vol(V)^2 } &=& \frac {2 \, \vol(P_1) \sum_{i = 2}^{j} W_i } {\vol(V)^2} + \frac { \left( \sum_{i = 2}^{j} W_i \right)^2 } {\vol(V)^2} \\
&=& \frac { (2+o(1)) \, \vol(P_1) \sum_{i = 2}^{j} W_i } {\vol(V)^2} \\
&\le& \Big( c - 1.1 \xi + o(1) \Big) \sum_{i = 2}^{j} \frac {W_i}{\vol(V)}.
\end{eqnarray*}
Since 
$$
\Big( c - 1.07 \xi \Big) - \Big( c-1.1 \xi + o(1) \Big) = 0.03 \xi + o(1) > 0,
$$
the modification of $\mathbf{P}$ increases its modularity and we get a contradiction that $\mathbf{P}$ maximizes the modularity over the family of nice partitions. 

Suppose now that $U_1$ belongs to a part $P_1$ of $\mathbf{P}$ of volume $\vol(P_1) > \frac {c-1.1\xi}{2} \, \vol(V)$. This time we move all nodes in parts $U_i$, $i \ge 1$, and form an independent part. Since we also disconnect $U_1$ from its part ($P_1$), the edge contribution increases by at least 
$$
\Big( c - 1.07 \xi \Big) \sum_{i = 2}^{j} \frac {W_i}{\vol(V)} - \Big( 1.07 \xi + 0.0107 \Big) \frac {W_1}{\vol(V)}.
$$
The second term might potentially dominate the change so the edge contribution might actually decrease. Trivially, it may decrease at most by the absolute value of the second term above. Fortunately, disconnecting $U_1$ from a large part decreases the degree tax substantially. Indeed, the degree tax decreases by at least
\begin{eqnarray*}
\frac { \vol(P_1)^2 }{ \vol(V)^2 }  -  \frac { (\vol(P_1) - W_1)^2 } {\vol(V)^2} - \frac { (\sum_{i = 1}^{j} W_i)^2 } {\vol(V)^2} &=& \frac { (2+o(1)) \, \vol(P_1) W_1 } {\vol(V)^2} \\
&\ge& \Big( c - 1.1 \xi + o(1) \Big) \frac {W_1}{\vol(V)}.
\end{eqnarray*}
Note that $c \ge 4 \xi_0(\delta) \ge 4 \xi_0(100) > 0.08$. Since $\xi < \xi_0(\delta) \le c/4$, we get that
\begin{eqnarray*}
\Big( c - 1.1 \xi + o(1) \Big) - \Big( 1.07 \xi + 0.0107 \Big) &=& c - 2.17 \xi - 0.0107 + o(1)\\
&\ge& 0.4575 c - 0.0107 + o(1) \\
&=& 0.0366 - 0.0107 + o(1) > 0,
\end{eqnarray*}
and so the modification of $\mathbf{P}$ increases its modularity and we get a contradiction too. 

It follows that the nice partition $\mathbf{P}$ that yields the largest modularity over this family of partitions has each non-problematic (very large and good expander) community contained in one part of $\mathbf{P}$. It remains to show that one cannot improve the modularity function by combining some non-problematic communities together. We will use $q_{G_0}(\mathbf{P})$, $e_{G_0}(A_i)$, and $\vol_{G_0}(A_i)$ for counterparts of $q(\mathbf{P})$, $e(A_i)$, and $\vol(A_i)$ that are applied for the background graph $G_0$ instead of the entire graph $\Ac$. We get that
\begin{eqnarray*}
q(\mathbf{P}) &=& \sum_{A_i \in \mathbf{P}} \frac{e(A_i)}{|E|}  - \sum_{A_i \in \mathbf{P}} \left( \frac{\vol(A_i)}{\vol(V)} \right)^2 \\
&=& \Big( 1-\xi+o(1) \Big) + \sum_{A_i \in \mathbf{P}} \frac{e_{G_0}(A_i)}{|E|}  - \sum_{A_i \in \mathbf{P}} \left( \frac{\vol(A_i)}{\vol(V)} \right)^2.
\end{eqnarray*}
By Lemma~\ref{lem:volume_of_communities}, \wep\ the volume of each very large community $C$ satisfies $\vol_{G_0}(C) = \vol(C) - \vol_c(C) \sim \xi \, \vol(C)$. By Lemma~\ref{lem:degree_distribution_of_G0}(e), \wep\ $\vol_{G_0}(V) \sim \xi \, \vol(V)$. We get that \wep\
\begin{eqnarray*}
q(\mathbf{P}) &\le& \Big( 1-\xi+o(1) \Big) + \Big( \xi + o(1) \Big) \sum_{A_i \in \mathbf{P}} \frac{e_{G_0}(A_i)}{|E(G_0)|}  - (1+o(1)) \sum_{A_i \in \mathbf{P}} \left( \frac{\vol_{G_0}(A_i)}{\vol_{G_0}(V)} \right)^2 \\
&\le& \Big( 1-\xi+o(1) \Big) + \Big( 1+ o(1) \Big) \left( \sum_{A_i \in \mathbf{P}} \frac{e_{G_0}(A_i)}{|E(G_0)|}  - \sum_{A_i \in \mathbf{P}} \left( \frac{\vol_{G_0}(A_i)}{\vol_{G_0}(V)} \right)^2 \right) \\
&\le& \Big( 1-\xi+o(1) \Big) + \Big( 1+ o(1) \Big) q_{G_0}(\mathbf{P}).
\end{eqnarray*}

It remains to show that $q_{G_0}(\mathbf{P}) = o(1)$. We may contract each non-problematic community into a single node, since they must belong to one part. Similarly, we contract all problematic communities into a single node. Now, we may couple the entire background graph that is generated as the pairing model $\Pc(\textbf{w})$ with $\mathcal{P}_{n'',d}$ with $a=1$ and $b=d$ for an arbitrarily large integer $d$ (note that after contraction the minimum degree in $\Pc(\textbf{w})$ tends to infinity as $n \to \infty$ so $ab = d$ is certainly less than the minimum degree of $\Pc(\textbf{w})$). By Lemma~\ref{lem:mod_expander} and Lemma~\ref{lem:Fri} we get that \whp\ $q^*(\mathcal{P}_{n'',d}) = \bigo( 1/ \sqrt{d} )$ and so \whp\ $q^*(\Pc(\textbf{w})) = \bigo( 1/ \sqrt{d} )$. Since $d$ can be made arbitrarily large, we conclude that $q_{G_0}(\mathbf{P}) = o(1)$ and the proof of the theorem is finished.
\end{proof}

Now, let us move to the proof of the second theorem. Since the proof is rather straightforward and the reader is already warmed-up, we only sketch it. 

\begin{proof}[Proof of Theorem~\ref{thm:modularity_delta1}]
Suppose that $\delta = 1$. By Lemma~\ref{lem:degree_distribution}, \wep\ there are $(1+\bigo( (\log n)^{-1} )) \ q_1 n$ nodes of degree 1 in $\Ac$, where $q_k$ is defined in~(\ref{eq:qk_small}). It follows easily from Chernoff's bound that \wep\ $(1+\bigo( (\log n)^{-1} )) \ \xi q_1 n$ of them have degree 1 in the background graph (and so degree 0 in their own community graph)---see also Lemma~\ref{lem:degree_distribution_of_G0}. We will call such nodes \emph{lucky}. 

Consider the ground-truth partition $\C = \{C_1, C_2, \ldots, C_{\ell}\}$ of the set of nodes of $\Ac$. It follows from Theorem~\ref{thm:ground-truth} that \wep\ $q(\C) = (1+\bigo( (\log n)^{-(\gamma-2)} )) \, (1-\xi)$. We will modify it to improve slightly the modularity function. All lucky nodes will be moved to the community of their neighbours. Note that two lucky nodes could be neighbours of each other, that is, they may form an isolated edge in $\Ac$. Such lucky nodes will be called \emph{super-lucky} and we arbitrarily assign them to a community of one of them. In fact, edges formed by super-lucky nodes should form independent parts to increase the modularity function but the improvement would be negligible so there is no point to do it. By Lemma~\ref{lem:degree_distribution_of_G0}(e), the volume of the background graph is well concentrated around its mean and so the probability that a lucky node is super-lucky is equal to $(1+\bigo( (\log n)^{-1} )) \ (\xi q_1 n)/(\xi d n)$, where $d = \sum_{k = \delta}^{D} \ k q_k$. Hence, the expected number of super-lucky nodes is equal to 
$$
(1+\bigo( (\log n)^{-1} )) \ \xi q_1 n \, \frac {\xi q_1 n}{\xi d n} = (1+\bigo( (\log n)^{-1} )) \, \frac {\xi q_1^2}{d} \, n,
$$
and so the expected number of the associated isolated edges is half of it. On the other hand the expected number of lucky nodes that are not super-lucky is $(1+\bigo( (\log n)^{-1} )) \, \xi q_1 (1-q_1/d) \, n$. The concentration follows easily from Chernoff's bound. 
Since all edges from the community graphs still remain in some part, \wep\ this modification increases the edge contribution by
\begin{eqnarray*}
(1+\bigo( (\log n)^{-1} )) \, \Big( \frac {\xi q_1^2}{2d} + \xi q_1 \Big( 1 - \frac {q_1}{d} \Big) \Big) \, \frac {n}{|E|} &=& (1+\bigo( (\log n)^{-1} )) \, \Big( \xi q_1 \Big( 1 - \frac {q_1}{2d} \Big) \Big) \, \frac {n}{dn/2} \\
&=& (1+\bigo( (\log n)^{-1} )) \, \frac {\xi q_1}{d} \Big( 2 - \frac {q_1}{d} \Big),
\end{eqnarray*}
as $|E| = \vol(V)/2 = (1+\bigo( (\log n)^{-1} )) \, dn/2$ by Corollary~\ref{cor:volume_of_A}.

Since each part may at most double its volume (deterministically), the degree tax after the modification is of the same order as before the modification, that is, it is $\bigo( (\log n)^{-2} )$. This finishes the proof of the theorem. 
\end{proof}

\subsubsection*{Simulation Corner}

Theorem~\ref{thm:large_level_of_noise} shows that the maximum modularity is larger than the modularity of the ground-truth partition, provided that $\Ac$ is very noisy. On the other hand, Theorem~\ref{thm:modularity_small_xi} shows the opposite: the maximum modularity is asymptotic to the modularity of the ground-truth partition, provided $\Ac$ has low level of noise. For some technical reason, we assumed in that theorem that $\delta$, the minimum degree of $\Ac$, is large enough but the same property should hold for much smaller values of $\delta$. 

In order to investigate this, for each value of $\xi=(0.1) i$, $i \in [9]$, we independently generated 30 graphs on $n=1{,}000$ nodes and $n=1{,}000{,}000$ nodes and with the same parameters as in the previous experiment: $\gamma = 2.5$, $\delta = 5$, $\zeta = 1/2 < 2/3 = 1/(\gamma-1)$ (that is, $D=\sqrt{n}$), $\beta = 1.5$, $s = 50$, and $\tau = 3/4$ (that is, $S=n^{3/4}$). In order to approximate the maximum modularity, we used the ensemble clustering algorithm for unweighted graphs (ECG) which is based on the Louvain algorithm~\cite{blondel2008fast} and the concept of consensus clustering~\cite{poulin2018ensemble}, and is shown to have good stability. 

The experiments coincide with theoretical predictions, despite the fact that $\delta=5$ in the experimental graphs is much smaller than the lower bound of 100 assumed in the corresponding theorem. For small values of $\xi$, the modularity found by ECG is very close to $1-\xi$, an asymptotic prediction for the modularity of the ground-truth. On the other hand, if $\xi$ is large, then the partition found by the algorithm is of better quality than the ground-truth partition---see Figure~\ref{fig:q_ECG}.

In Table~\ref{tab:comparison}, we show comparison of the modularity of the ground-truth partition and a partition found using ECG. Additionally, we provide the AMI and ARI measures of similarities between the two partitions. If $\xi$ is close to $0$, then ground-truth and ECG partitions are similar. Based on our theoretical results, we also expect that they are close to the maximum graph modularity, $q^*$. The reason is that there is a low level of noise coming from the background graph and so ECG has no problem finding a good partition that is close to the ground-truth one. On the other hand, if $\xi$ is close to $1$, then the two partitions are dissimilar. Also, note that then the modularity of the ECG partition is much higher than the one of the ground-truth, as expected based on the theoretical results. For intermediate values of $\xi$ we observe two effects. The first observation is that AMI and ARI of the ECG partition and ground-truth partition drops sharply when $\xi$ becomes greater than $0.5$. The reason for this is that for such values of $\xi$ the volume of the background graph, which is independent from the community graphs, becomes dominant and ECG tries to recover its structure. The second observation, which is a consequence of the first one, is that for intermediate values of $\xi$ the modularity of the ground-truth is noticeably larger than for the ECG partition (for example, for $\xi=0.6$ with $n=10^3$ and for $\xi=0.7$ with $n=10^6$). The reason is that ECG already starts to get a lot of signal from the background graph while it still would be more efficient to stick to the partition closer to the ground-truth. However, because of the large level of noise already present in the graph coming from the background graph, the ECG algorithm is unable to recover it.

\begin{figure}[ht]
     \centering
     \includegraphics[width=0.48\textwidth]{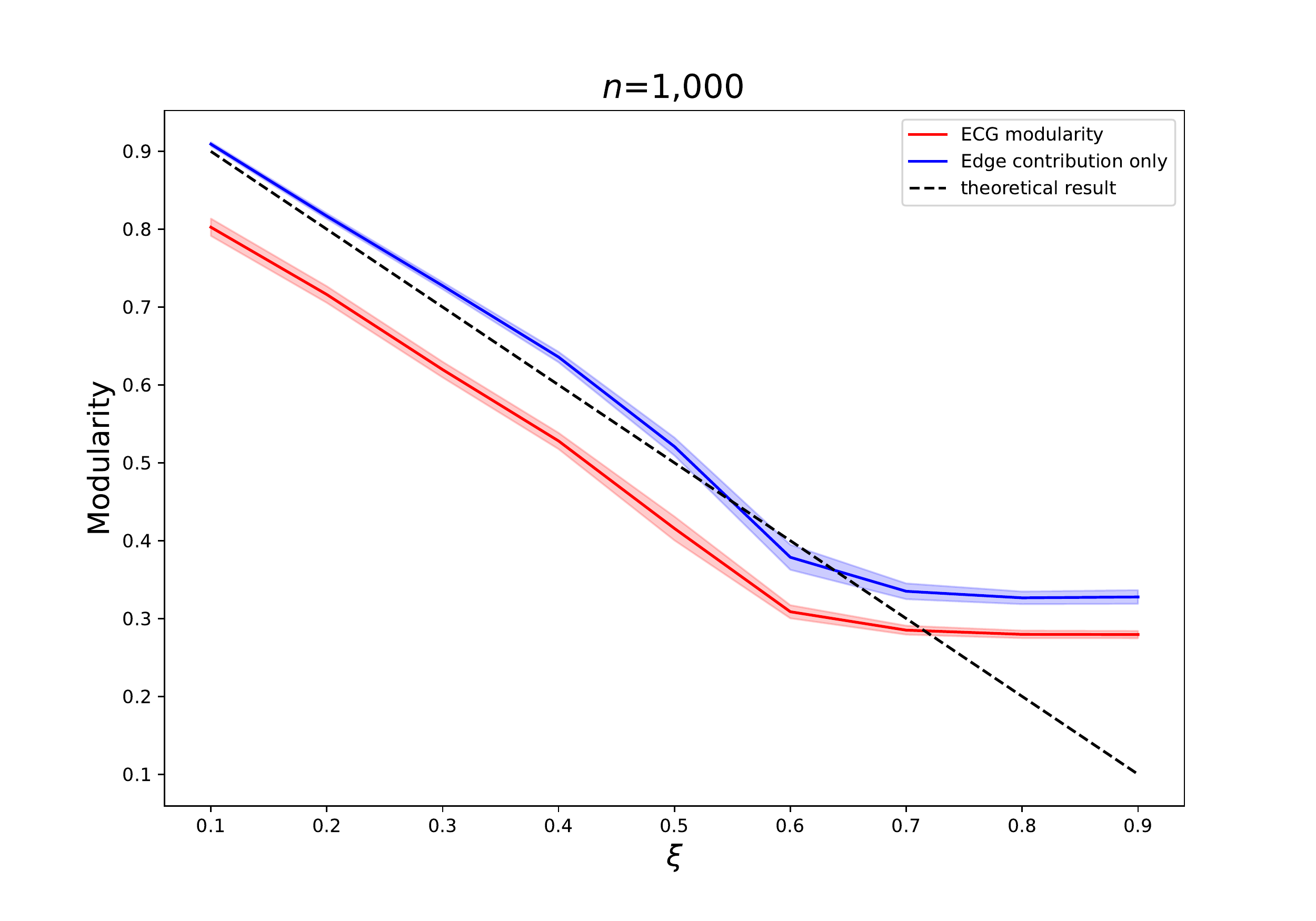}
     \hspace{.1cm}
     \includegraphics[width=0.48\textwidth]{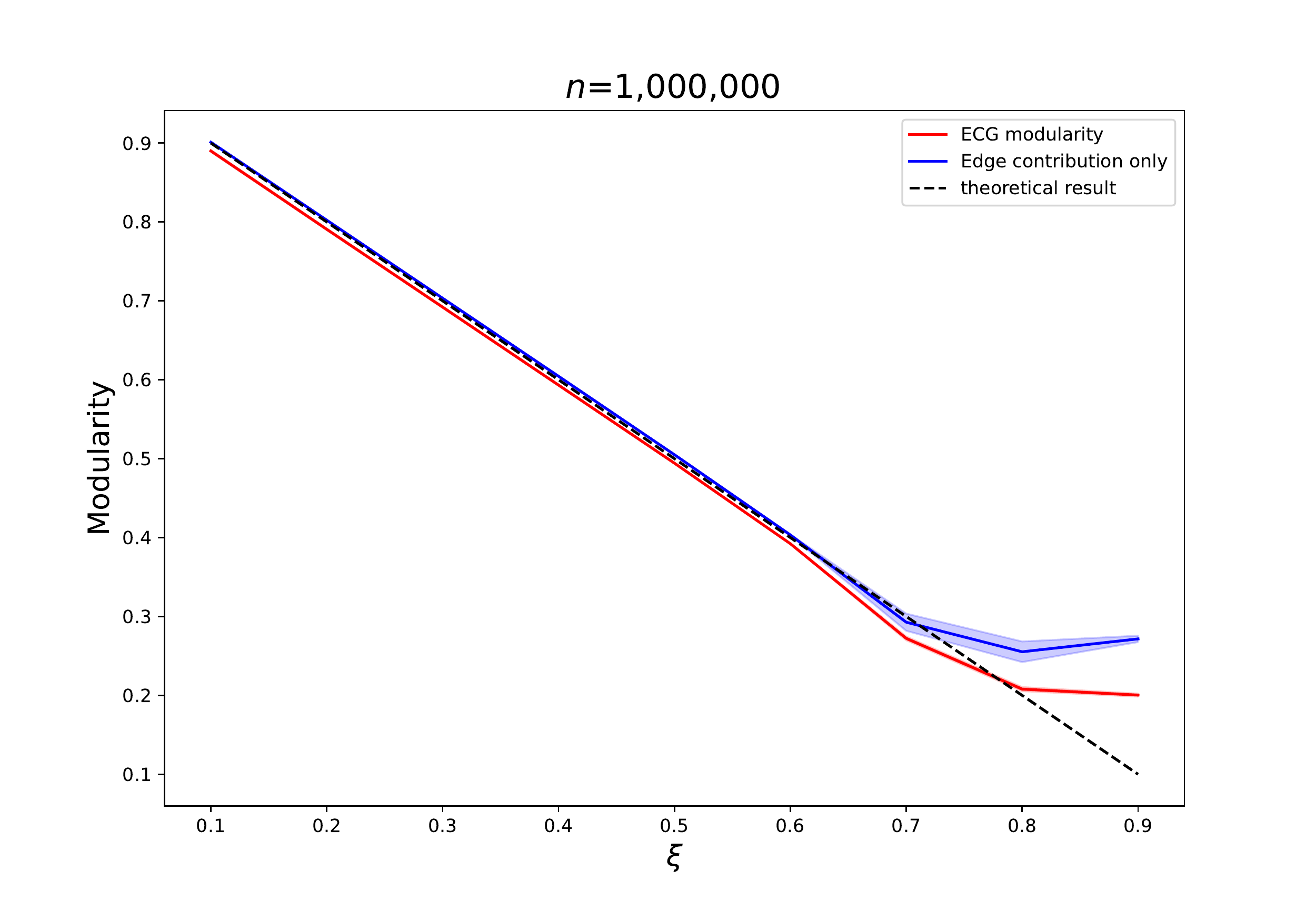} 
     \caption{The modularity $q(\C)$ obtained with ECG (red) and the corresponding edge contribution (blue) for 30 independently generated graphs; shaded areas represent the standard deviation. The dashed line at $1-\xi$ corresponds to a perfect prediction for the ground-truth. Parameters used: $\gamma = 2.5$, $\delta = 5$, $\zeta = 1/2$, $\beta = 1.5$, $s = 50$, and $\tau = 3/4$. Two different graph sizes are investigated.}
\label{fig:q_ECG}
\end{figure}

\begin{table}[]
    \centering
{    \small
    \begin{tabular}{c|c|c|c|c|c}
        $\xi$ & size ($n$) &  $q$(ground truth) & $q$(ECG) & AMI & ARI \\
        \hline \hline
         0.1 & $10^3$ & 0.802594 & 0.802594	& 1.000000 & 1.0000000 \\ 
             & $10^6$ & 0.890027 &  0.890038 &  0.999367 &  0.999968 \\ 
        \hline
         0.2 & $10^3$ & 0.716289 & 0.716190	& 0.999721 & 0.9996570 \\ 
             & $10^6$ & 0.790705 &  0.790722 &  0.998952 &  0.999943 \\ 
        \hline
         0.3 & $10^3$ & 0.620174 & 0.619678	& 0.997716 & 0.997692 \\
                      & $10^6$ & 0.692048 &  0.692070 &  0.998584 &  0.999903 \\ 
        \hline
         0.4 & $10^3$ & 0.531436 & 0.528090 & 0.974488 & 0.973746 \\
                      & $10^6$ & 0.593213 &  0.593239 &  0.997954 &  0.999763 \\ 
        \hline
        0.5 & $10^3$ & 0.443102 & 0.415710 & 0.688916 & 0.632881 \\
                      & $10^6$ & 0.494549 &  0.494295 &  0.986875 &  0.987448 \\ 
        \hline
        0.6 & $10^3$ & 0.353922 & 0.308721 & 0.199603 & 0.109946 \\
                      & $10^6$ & 0.395613 &  0.392237 &  0.889011 &  0.856030 \\ 
        \hline
        0.7 & $10^3$ & 0.263305 & 0.285068 & 0.071929 & 0.022793 \\     
                      & $10^6$ & 0.296665 &  0.272223 &  0.366945 &  0.164060 \\ 
        \hline
        0.8 & $10^3$ & 0.176988 & 0.279705 & 0.032120 & 0.008722 \\
                      & $10^6$ & 0.197707 &  0.207766 &  0.027720 &  0.001900 \\ 
        \hline
        0.9 & $10^3$ & 0.087200 & 0.279385 & 0.013933 & 0.003501 \\
                      & $10^6$ & 0.098857 &  0.200353 &  0.001541 &  0.000025 \\ 
    \end{tabular}
}    
    \caption{Comparison of the modularity of the ground-truth communities and the modularity obtained using ECG along with AMI and ARI measures between the two partitions. Data is presented for graphs with $10^3$ and $10^6$ nodes. The results are averages over 30 randomly sampled graphs. Parameters used: $\gamma = 2.5$, $\delta = 5$, $\zeta = 1/2$, $\beta = 1.5$, $s = 50$, and $\tau = 3/4$.}
    \label{tab:comparison}
\end{table}

\bibliography{ref}

\end{document}